\documentclass[aps,twocolumn,prb,amsmath,amssymb,showpacs,notitlepage,superscriptaddress]{revtex4-2}

\usepackage[utf8]{inputenc}
\usepackage[T1]{fontenc}
\usepackage{textcomp}
\usepackage[english]{babel}
\usepackage{color}
\usepackage{amssymb}
\usepackage{amsmath}
\usepackage{amsfonts}
\usepackage{amsopn}
\usepackage{amsbsy}
\usepackage{enumerate}
\usepackage[plainpages=false,pdfpagelabels,breaklinks=true,pdfborderstyle={/S/U/W 1.2}]{hyperref}
\usepackage{graphics}
\usepackage{units}
\usepackage[cal=boondoxo,bb=boondox]{mathalfa}
\usepackage{dsfont}
\usepackage{csquotes}

\pagestyle{headings}
\setcounter{secnumdepth}{3}
\setcounter{tocdepth}{2}
\numberwithin{equation}{section} 
\numberwithin{table}{section} 
\numberwithin{figure}{section} 

\usepackage[amsmath,thmmarks,thref,hyperref]{ntheorem}

\theoremstyle{plain}
\newtheorem{theorem}{Theorem}[section]
\newtheorem{definition}[theorem]{Definition}
\newtheorem{lemma}[theorem]{Lemma}

\newtheorem{conjecture}[theorem]{Conjecture}

\newtheorem{assumption}[theorem]{Assumption}

\theorembodyfont{\upshape}

\theoremstyle{nonumberplain}

\theoremsymbol{\ensuremath{_\Box}}
\newtheorem{proof}{Proof}

\usepackage[scaled]{beramono}
\usepackage{helvet}
\let\mathcal\undefined
\usepackage[charter]{mathdesign} 

\providecommand{\ie}{i.~e.~}
\providecommand{\eg}{e.~g.~}
\providecommand{\cf}{cf.~}


\providecommand{\R}{\mathbb{R}}

\providecommand{\C}{\mathbb{C}}
\renewcommand{\C}{\mathbb{C}}

\providecommand{\T}{\mathbb{T}}
\renewcommand{\T}{\mathbb{T}}
\providecommand{\N}{\mathbb{N}}
\providecommand{\Z}{\mathbb{Z}}
\providecommand{\ii}{\mathrm{i}}
\providecommand{\e}{\mathrm{e}}
\renewcommand{\Re}{\mathrm{Re} \,}
\renewcommand{\Im}{\mathrm{Im} \,}
\providecommand{\Hil}{\mathcal{H}}
\providecommand{\Sone}{\mathbb{S}^1}

\providecommand{\eps}{\varepsilon}

\providecommand{\ker}{\mathrm{ker} \, }
\providecommand{\ran}{\mathrm{ran} \, }

\providecommand{\ker}{\mathrm{ker} \,}

\providecommand{\dd}{\mathrm{d}}
\providecommand{\id}{\mathds{1}}



\providecommand{\sabs}[1]{\lvert #1 \vert}

\providecommand{\norm}[1]{\left \lVert #1 \right \rVert}
\providecommand{\snorm}[1]{\lVert #1 \rVert}

\providecommand{\scpro}[2]{\left \langle #1 , #2 \right \rangle}
\providecommand{\sscpro}[2]{\langle #1 , #2 \rangle}
\providecommand{\bscpro}[2]{\bigl \langle #1 , #2 \bigr \rangle}

\providecommand{\sket}[1]{\vert #1 \rangle}

\providecommand{\sbra}[1]{\langle #1 \vert}

\providecommand{\sopro}[2]{\vert #1 \rangle \langle #2 \vert}


\makeatletter
\newsavebox{\@brx}
\newcommand{\llangle}[1][]{\savebox{\@brx}{\(\m@th{#1\langle}\)}%
	\mathopen{\copy\@brx\kern-0.5\wd\@brx\usebox{\@brx}}}
\newcommand{\rrangle}[1][]{\savebox{\@brx}{\(\m@th{#1\rangle}\)}%
	\mathclose{\copy\@brx\kern-0.5\wd\@brx\usebox{\@brx}}}
\makeatother

\providecommand{\scppro}[2]{\llangle #1 , #2 \rrangle}
\providecommand{\sscppro}[2]{\llangle #1 , #2 \rrangle}
\providecommand{\bscppro}[2]{\llangle[\big] #1 , #2 \rrangle[\big]}

\providecommand{\skket}[1]{\Vert #1 \rrangle}

\providecommand{\sbrbra}[1]{\llangle #1 \Vert}

\providecommand{\soppro}[2]{\Vert #1 \rrangle \llangle #2 \Vert}

\providecommand{\BZ}{\T^*}
\providecommand{\sgn}{\mathrm{sgn} \, }

\usepackage{dcolumn}
\usepackage{multirow}
\usepackage{diagxy}

\begin{document}

\title{On Choosing a Physically Meaningful Topological Classification \\ for Non-Hermitian Systems and the Issue of Diagonalizability}
\author{Max Lein}

%
\begin{abstract}
	The topological classification of hermitian operators is solely determined by the presence or absence of certain discrete symmetries. For non-hermitian operators we in addition need to specify the type of spectral gap \cite{Kawabata_Shiozaki_Ueda_Sato:classification_non_hermitian_systems:2019,Zhou_Lee:non_hermitian_topological_classification:2019}. They come in the flavor of a point gap or a line gap. Since the presence of a line gap implies the existence of a point gap, there is usually more than one mathematical classification applicable to a physical system. That raises the question: which of these gap-type classifications is physically meaningful? 
	
	To decide this question, I propose a simple criterion, namely the choice of \emph{physically relevant states}. This generalizes the notion of Fermi projection that plays a crucial role in the topological classification of fermionic condensed matter systems, and enters as an \emph{auxiliary quantity} in the bulk classification of photonic \cite{Raghu_Haldane:quantum_Hall_effect_photonic_crystals:2008,DeNittis_Lein:symmetries_electromagnetism:2020} and magnonic crystals \cite{Shindou_et_al:chiral_magnonic_edge_modes:2013,Lein_Sato:topological_classification_magnons:2019}. After that the classification is entirely algorithmic, the system's topology is encoded in (pairs of) projections with symmetries and constraints. 
	A crucial point in my investigation is the relevance of diagonalizability. Even for existing topological classifications of non-hermitian systems diagonalizability needs to be assumed to ensure that continuous deformations of the hamiltonian lead to continuous deformations of the spectra, projections and unitaries. 
\end{abstract}
\maketitle
\tableofcontents

\section{Introduction} 
\label{intro}
Engineering systems with non-trivial topology has become a standard tool if one wants to create systems with very robust edge or surface states. Topologically protected boundary modes have been realized in a wide range of quantum \cite{von_Klitzing_Dorda_Pepper:quantum_hall_effect:1980,Thouless_Kohmoto_Nightingale_Den_Nijs:quantized_hall_conductance:1982,Simon:holonomy_Berrys_phase:1983,Hatsugai:Chern_number_edge_states:1993,Hatsugai:edge_states_Riemann_surface:1993,Shindou_et_al:chiral_magnonic_edge_modes:2013,Lein_Sato:topological_classification_magnons:2019,Chiu_Teo_Schnyder_Ryu:classification_topological_insulators:2016,Prodan_Schulz_Baldes:complex_topological_insulators:2016} and classical waves \cite{Raghu_Haldane:quantum_Hall_effect_photonic_crystals:2008,Wang_et_al:edge_modes_photonic_crystal:2008,Rechtsman_Zeuner_et_al:photonic_topological_insulators:2013,DeNittis_Lein:symmetries_electromagnetism:2020,Fleury_et_al:breaking_TR_acoustic_waves:2014,Safavi-Naeini_et_al:2d_phonoic_photonic_band_gap_crystal_cavity:2014,Peano_Brendel_Schmidt_Marquardt:topological_phases_sound_light:2015,Chen_Zhao_Mei_Wu:acoustic_frequency_filter_topological_phononic_crystals:2017,Suesstrunk_Huber:mechanical_topological_insulator:2015,Suesstrunk_Huber:classification_mechanical_metamaterials:2016,Lein_Sato:topological_classification_magnons:2019,DeNittis_Lein:symmetries_electromagnetism:2020,Bliokh_Leykam_Lein_Nori:topological_classification_homogeneous_electromagnetic_media:2019,Ozawa_et_al:review_topological_photonics:2018,Kondo_Akagi_Katsura:Z2_topological_invariant_magnons:2019}. Their existence hinges on the presence of spectral gaps (or, more generally, dynamical localization) and selectively breaking or preserving certain discrete symmetries. 

Topology typically manifests itself through the presence of boundary modes, which are very robust against perturbations and disorder. More precisely, an effect is considered topological if it can be explained by means of a \emph{bulk-boundary correspondence,}
\begin{align}
	O_{\mathrm{bdy}}(t) \approx T_{\mathrm{bdy}} = f(T_{\mathrm{bulk}}) 
	. 
	\label{intro:eqn:bulk_boundary_correspondence}
\end{align}
The first (approximate) equality relates a physical observable $O_{\mathrm{bdy}}(t)$ to a topological invariant $T_{\mathrm{bdy}}$ defined for the semi-infinite system with boundary; it gives the abstract mathematical quantity $T_{\mathrm{bdy}}$ physical significance. The second equality is the “mathematical bulk-boundary correspondence”, which allows me to compute the value of the boundary invariant from the topological bulk invariants $T_{\mathrm{bulk}}$ through a function $f$. For the Quantum Hall Effect at the interface between materials the physical observable $O_{\mathrm{bdy}} = \sigma^{\perp}_{\mathrm{bdy}}$ is the transverse conductivity at the boundary; the topological boundary invariant is the spectral flow, which can be predicted from the difference $f(x,y) = x - y$ of two Chern numbers $T_{\mathrm{bulk}} = (\mathrm{Ch}_1,\mathrm{Ch}_2)$ of the materials \cite{Prodan_Schulz_Baldes:complex_topological_insulators:2016}. As the names suggest, these topological invariants cannot change their values during symmetry- and gap-preserving continuous deformations of the bulk systems, since they typically take values in $\Z$ or $\Z_2$. That, in turn, explains the extraordinary robustness of the boundary modes under perturbations — the only continuous, integer-valued function is the constant function. 

The starting point to finding or deriving bulk-boundary correspondences~\eqref{intro:eqn:bulk_boundary_correspondence} in topological insulators is to \emph{classify the bulk system} and get a complete list of topological bulk invariants for the infinite system. In what follows, when I write topological classification, I shall always mean the classification of \emph{bulk} systems unless explicitly stated otherwise. 

When the system is described by a hermitian operator, the situation is well-understood by now: operators belong to one of 10 Cartan-Altland-Zirnbauer classes \cite{Altland_Zirnbauer:superconductors_symmetries:1997,Chiu_Teo_Schnyder_Ryu:classification_topological_insulators:2016,Ozawa_et_al:review_topological_photonics:2018}. Inside each class, there are inequivalent phases, which are defined by continuous, gap- and symmetry-preserving deformations. These bulk phases can be labeled by topological invariants such as Chern numbers \cite{Panati:triviality_Bloch_bundle:2006,Prodan_Schulz_Baldes:complex_topological_insulators:2016,Hatcher:vector_bundles_K_theory:2009,DeNittis_Lein:exponentially_loc_Wannier:2011} and the Kane-Mele invariant \cite{Kane_Mele:Z2_ordering_spin_quantum_Hall_effect:2005,DeNittis_Gomi:AII_bundles:2014}. Also the interplay with crystallographic symmetries has been analyzed \cite{Gomi:topological_classification_crystallographic_point_groups_2d:2017,Shiozaki_Sato_Gomi:band_topology_3d_crystallographic_groups:2018} recently. 

However, many media for classical waves are described by non-hermitian operators. These differ from hermitian operators in two ways: 
\begin{enumerate}[(1)]
	\item Their spectrum may be complex. 
	\item Non-hermitian operators may possess Jordan blocks, \ie they need not be diagonalizable. 
\end{enumerate}
Not surprisingly, the zoology of non-hermitian operators is much richer. Independently, Kawabata et al. \cite{Gong_et_al:topological_phases_non_hermitian_systems:2018,Kawabata_Shiozaki_Ueda_Sato:classification_non_hermitian_systems:2019} and Zhou and Lee \cite{Zhou_Lee:non_hermitian_topological_classification:2019} have extended the Cartan-Altland-Zirnbauer classification, and they find non-hermitian operators belong to one of 38 topological classes. There have been other noteworthy works in this direction. De~Nittis and Gomi have developed a mathematical framework to classify dynamically stable pseudohermitian systems by means of a suitably adapted $K$-theory \cite{DeNittis_Gomi:K_theoretic_classification_operators_on_Krein_spaces:2019}. And Wojcik et al.\ took the homotopy-theoretic route and related the topology of certain non-hermitian operators to (non-abelian) braid groups \cite{Wojcik_Sun_Bzdusek_Fan:topological_classification_non_hermitian_hamiltonians:2020}. A crucial insight in the 38-Fold Way Classification of Kawabata et al.\ is the distinction between different kinds of gaps. As the spectrum is a subset of the complex plane $\C \simeq \R^2$, spectral gaps — obstacles for continuous deformations of operators — can be $0$- or $1$-dimensional; this is referred to as point gap and line gap classifications, respectively. Very often, the relevant line gaps are the imaginary or real axis, which give rise to the real and imaginary line gap classification (the order is reversed). 

\begin{figure}
	\begin{centering}
		\resizebox{70mm}{!}{\includegraphics{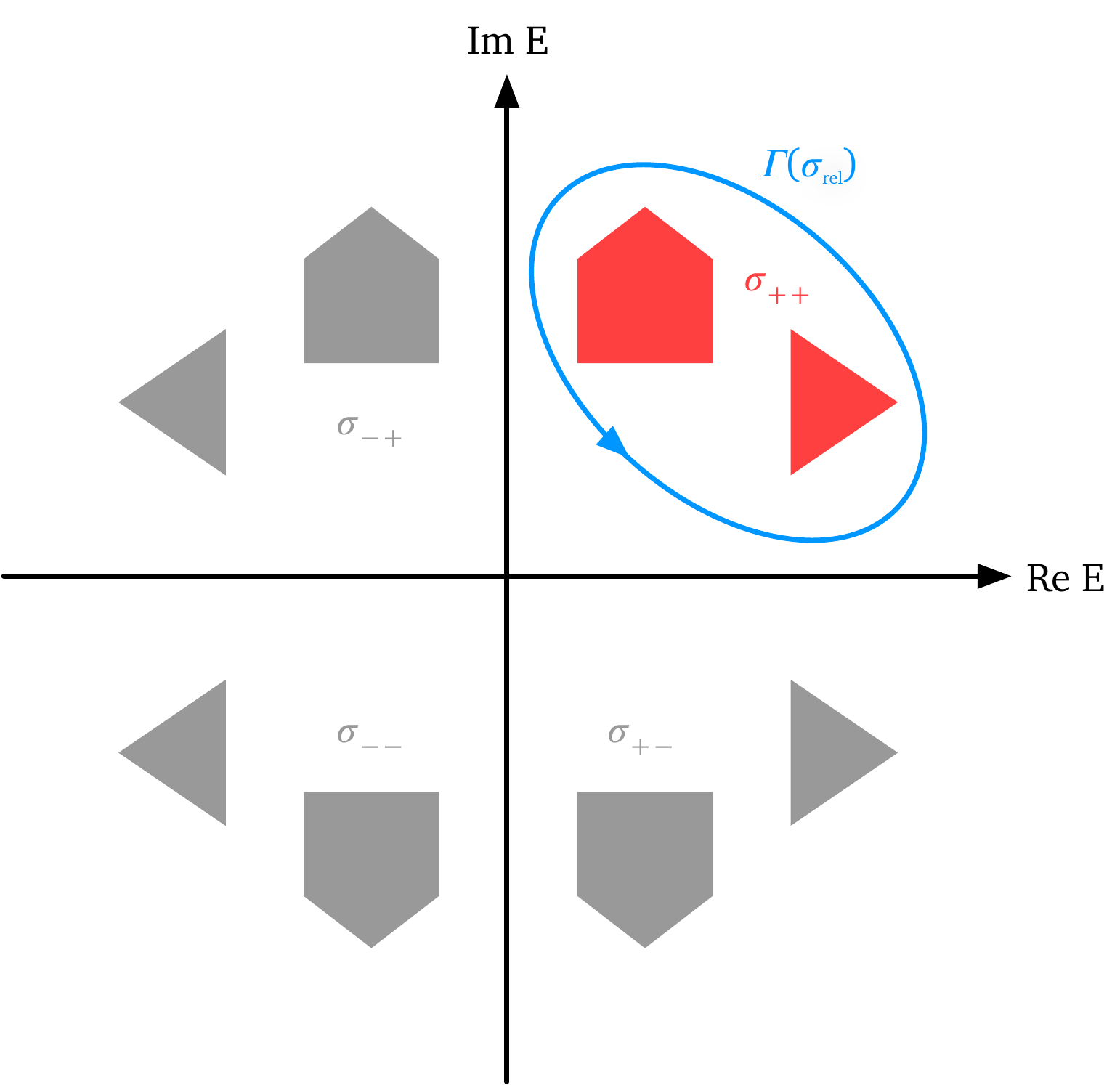}}
	\end{centering}
	\caption{Spectrum with point symmetry and reflection symmetries about real and imaginary axis. Very often such symmetries are due to the presence of discrete symmetries such as time-reversal symmetries and chiral symmetries. Part of the spectrum needs to be designated as “relevant”; for example, $\sigma_{\mathrm{rel}} = \sigma_{++}$ is one choice. The relevant spectrum can be encircled by a contour, which enters the relevant projection via equation~\eqref{intro:eqn:definition_P_rel}. }
	\label{intro:figure:highly_symmetric_spectrum}
\end{figure}

However, any operator with a line gap also possesses a point gap. Take an operator whose spectrum is sketched by Figure~\ref{intro:figure:highly_symmetric_spectrum}. It has a point gap and both, real and imaginary line gaps, leading to potentially three distinct topological classifications. Generally, the three different classifications will disagree, so not only do I have a choice, the choice matters. Which of these three mathematical classifications is physically meaningful? What physical data decide how to pick one classification over another? 

In short, this is the topic of this work. The proposal I put forth is as follows: 
\begin{enumerate}[(A)]
	\item Based on the physics of the system I pick the \emph{relevant part of the spectrum} $\sigma_{\mathrm{rel}}$, \ie I single out states associated to one particular spectral region. As is usual in the theory of topological insulators, $\sigma_{\mathrm{rel}}$ needs to be separated from the rest of the spectrum by a gap. 
	\item I define the spectral projection $P_{\mathrm{rel}}$ onto the states located inside of $\sigma_{\mathrm{rel}}$. 
	\item Symmetries of the hamiltonian $H$ and the relevant spectrum $\sigma_{\mathrm{rel}}$ will lead to symmetries and constraints of the \emph{relevant projection} $P_{\mathrm{rel}}$ and potentially a second projection $P_{\mathrm{rel},\dagger}$ defined from $H^{\dagger}$. 
	\item Then I classify (pairs of) projections with symmetries and constraints. 
\end{enumerate}
Let me give you a little more detail on each of these points to motivate the material covered in the main body of this paper. The following also furnishes an outline of this paper.

\subsection{The notion of physically relevant states} 
\label{intro:relevant_states}
The notion of \emph{physically relevant states} is motivated from the Fermi projection for hermitian systems, which describes the state of many fermionic condensed matter systems at zero temperature \cite{Grosso_Parravicini:solid_state_physics:2003}. Here all states below the characteristic Fermi energy are filled whereas those in the conduction band are empty. The word “insulators” in topological insulators points to the fact that the Fermi energy lies in a spectral gap or, more generally, in a region of dynamical localization \cite{Anderson:Anderson_localization_absence_diffusion:1958,Bellissard_van_Elst_Schulz_Baldes:noncommutative_geometry_quantum_hall_effect:1994,John:Anderson_localization_light:1991}. Hence, there is no direct conductivity and the bulk is insulating. 

The “Fermi projection” need not always be interpreted as the state of the system. For classical and many bosonic waves the “Fermi projection” is just an \emph{auxiliary quantity} that enters the bulk-boundary correspondence 
\cite{Raghu_Haldane:quantum_Hall_effect_photonic_crystals:2008,Shindou_et_al:chiral_magnonic_edge_modes:2013,DeNittis_Lein:symmetries_electromagnetism:2020,Lein_Sato:topological_classification_magnons:2019}. Given that non-hermitian systems are easily realized with classical waves, this subtlety of distinguishing mathematics from physical interpretation is essential. 

For the operator from Figure~\ref{intro:figure:highly_symmetric_spectrum} there are essentially $4$ distinct scenarios: I could pick a single spectral island such as $\sigma_{++}$, which corresponds to a point gap. This also takes care of the case of three spectral islands (the spectral complement). When I select two spectral islands, I have essentially three choices, I could choose states to the left of the imaginary axis, below the real axis or point symmetrically (\eg $\sigma_{++}$ and $\sigma_{--}$); three of these choices corresponds to point gap and real and imaginary line gap. The point-symmetric case does not seem to be covered by Kawabata et al. 

\subsection{Defining the projection onto the relevant states} 
\label{intro:defining_P_rel}
Having picked $\sigma_{\mathrm{rel}}$, I then define the projection onto the relevant states — or \emph{relevant projection} for short 
\begin{align}
	P_{\mathrm{rel}} = \frac{\ii}{2\pi} \int_{\Gamma(\sigma_{\mathrm{rel}})} \dd z \, (H - z)^{-1} 
	\label{intro:eqn:definition_P_rel}
\end{align}
as a countour integral where $\Gamma(\sigma_{\mathrm{rel}})$ encloses only $\sigma_{\mathrm{rel}}$. Other definitions are possible, \eg via functional calculus~\eqref{diagonalizable_operators:eqn:relevant_projection_functional_calculus}, but the projection operator I obtain is independent of that — if $P_{\mathrm{rel}}$ exists. In some cases I will also need to include $P_{\mathrm{rel},\dagger}$ in the analysis, which is defined via equation~\eqref{intro:eqn:definition_P_rel} after replacing $H$ with its adjoint $H^{\dagger}$. 

When $H$ is hermitian, the existence is well-known: the resolvent only has first-order poles and therefore is well-defined. But in case $\sigma_{\mathrm{rel}}$ contains Jordan blocks, \ie when $H$ is \emph{not diagonalizable}, then the contour encloses higher-order poles of the resolvent and this formula is not well-defined. To ensure $P_{\mathrm{rel}}$ is well-defined, I need to make the following 
\begin{assumption}\label{intro:assumption:diagonalizability}
	The Hamiltonian $H$ is a \emph{diagonalizable}, bounded operator that possesses a bounded inverse. 
\end{assumption}
The precise mathematical definition and ramifications of diagonalizability are covered in Section~\ref{diagonalizable_operators}. The boundedness of $H$ is not an essential assumption, only diagonalizability is. Assuming boundedness just simplifies many arguments and is certainly satisfied for tight-binding operators, which make up the bulk of all effective models for periodic and many disordered systems. 

After reviewing the 38-Fold Classification in Section~\ref{38_fold_recap}, I will give arguments in Section~\ref{why_diagonalizability_matters} why diagonalizability is an \emph{essential} assumption and why \cite{Kawabata_Shiozaki_Ueda_Sato:classification_non_hermitian_systems:2019,Zhou_Lee:non_hermitian_topological_classification:2019} can only classify diagonalizable operators. The critical point is that even if I \emph{continuously} deform a non-hermitian operator $H(\lambda)$, the spectrum, spectral projections and many other associated quantities that enter the topological classification \emph{may have discontinuities.} 

\subsection{Symmetries and constraints of the relevant projections} 
\label{intro:symmetries_constraints}
Sections~\ref{my_way_simplified} and \ref{my_way} are dedicated to explaining how symmetries of the hamiltonian $H$ and the relevant spectrum $\sigma_{\mathrm{rel}}$ lead to symmetries and constraints of the relevant projection $P_{\mathrm{rel}}$. That is because discrete symmetries — if present — relate states from different spectral islands. Three scenarios emerge: (1)~symmetries preserve the relevant states, 
\begin{align*}
	U \, P_{\mathrm{rel}} \, U^{-1} = P_{\mathrm{rel}}
	, 
\end{align*}
(2)~constraints exchange relevant and irrelevant states, 
\begin{align*}
	U \, P_{\mathrm{rel}} \, U^{-1} = \id_{\Hil} - P_{\mathrm{rel}}
	, 
\end{align*}
or (3)~symmetries may be broken, 
\begin{align*}
	U \, P_{\mathrm{rel}} \, U^{-1} \neq P_{\mathrm{rel}} , \; \id_{\Hil} - P_{\mathrm{rel}}
	. 
\end{align*}
Scenarios (1) and (2) also come in a flavor that involves spectral projections of $H^{\dagger}$, either $P_{\mathrm{rel},\dagger}$ or its complement. 

\subsection{Classifying projections with symmetries and constraints} 
\label{intro:classification}
The last step consists of classifying (pairs of) projections with symmetries and constraints. I will not present a generic recipe of my own here, but purposefully use two different standard techniques, vector bundle theory and $K$-theory, to emphasize that my ideas are independent of how to attack the actual classification problem. Indeed, this is why I have picked non-trivial examples that can nevertheless be treated with existing tools to obtain a topological classification (\cf Section~\ref{my_way_simplified:non_hermitian_examples}). These points will be reiterated in the summary (\cf Section~\ref{discussion}), where I will critically compare my classification to results from the literature and make some comments about future developments. 
\section{Diagonalizable operators are normal operators with respect to the biorthogonal scalar product} 
\label{diagonalizable_operators}
Diagonalizability is intrinsically connected to the \emph{absence of Jordan blocks.} For a diagonalizable $N \times N$ matrix $H$ this means that the number of linearly independent \emph{proper} eigenvectors $v_n$ is exactly $N$. So the \emph{eigenvectors form a basis} of my vector space, I can expand any vector in terms of proper eigenvectors, namely 
\begin{align*}
	u = \sum_{n = 1}^N c_n \, v_n
	. 
\end{align*}
When I collect the eigenvectors into a matrix 
\begin{align*}
	G^{-1} = \bigl ( v_1 \vert \cdots \vert v_N \bigr )
	, 
\end{align*}
then adjoining with $G$ diagonalizes $H$, 
\begin{align}
	G \, H \, G^{-1} = D 
	= \mathrm{diag}(E_1 , \ldots , E_N) 
	. 
	\label{diagonalizable_operators:eqn:diagonalizable_matrix}
\end{align}
Put another way, a matrix is diagonalizable exactly when there exists a \emph{similarity transform} $G$ which diagonalizes $H$. 

Succinctly, the completeness relation of the eigenvectors can be rewritten as a \emph{resolution of the identity}
\begin{align}
	\id_{\C^N} = \sum_{n = 1}^N \sopro{v_n}{v_n} 
	. 
	\label{diagonalizable_operators:eqn:resolution_identity_matrix}
\end{align}
The above formula uses the scalar product~\eqref{diagonalizable_operators:eqn:weighted_scalar_product} on $\C^N$, which \emph{declares} $\{ v_1 , \ldots , v_N \}$ to be orthonormal; this is better known in the physics community as the scalar product from the biorthogonal formalism \cite{Brody:biorthogonal_quantum_mechanics:2014}. Mathematically speaking, I am free to choose a “non-standard” scalar product and I emphasize that this places no additional restrictions on $H$. 

Unlike for matrices — there seemingly exists no universally accepted generalization of diagonalizability to operators on infinite-dimensional normed vector spaces in the literature; my extension mimics~\eqref{diagonalizable_operators:eqn:diagonalizable_matrix}. To fix terminology, a \emph{similarity transform} $G \in \mathcal{B}(\Hil)^{-1} = \mathrm{GL}(\Hil)$ is bounded invertible map with bounded inverse. 
\begin{definition}[Diagonalizable operator]
	A bounded operator $H \in \mathcal{B}(\Hil)$ on a Hilbert space is called diagonalizable if there exists a \emph{similarity transform} $G \in \mathcal{B}(\Hil)^{-1}$ for which 
	\begin{align}
		G \, H \, G^{-1} = \int_{\C} E \, \dd \sopro{\psi_E}{\psi_E}
		\label{diagonalizable_operators:eqn:spectral_decomposition}
	\end{align}
	admits a spectral decomposition. 
\end{definition}
The right-hand side of \eqref{diagonalizable_operators:eqn:spectral_decomposition} features what mathematicians refer to as projection-valued measure (\cf Definition~\ref{appendix:functional_calculus_normal_operators:defn:projection_valued_measure}). Physicists are familiar with it as well since it enters the resolution of the identity 
\begin{align}
	\id_{\Hil} &= \int_{\C} \dd \sopro{\psi_E}{\psi_E}
	. 
	\label{diagonalizable_operators:eqn:resolution_identity}
\end{align}
This is in analogy to hermitian operators like position 
\begin{align*}
	\id_{\Hil} &= \int_{\R} \dd x \, \sopro{\psi_x}{\psi_x}
	. 
\end{align*}
Note that the first integral is over the complex plane $\C$ whereas the second is over $\R$. 

While physics text books tend to use sums rather than integrals, equation~\eqref{diagonalizable_operators:eqn:resolution_identity} is a necessary generalization if one wants to accommodate operators on \emph{infinite}-dimensional vector spaces. Even when they are diagonalizable, most do not possess a complete basis of \emph{proper} eigenfunctions, and I have to include “infinitesimal linear combinations” of \emph{generalized} eigenfunctions. In contrast to proper eigenfunctions, these are not elements of the Hilbert space itself. Plane waves $\e^{+ \ii k \cdot x}$ and Bloch waves on $\R^d$ as well as the delta “function” $\delta(y - x)$ are the most prominent examples. But this integral notation also works if the resolution of identity equals a discrete sum, I merely have to insert suitably weighted Dirac measures, \eg 
\begin{align*}
	\sum_{k \in \Z^d} = \int_{\R} \sum_{k \in \Z^d} \delta( \, \cdot \, - k )
	. 
\end{align*}
This difference is not just mathematical nitpickery. Proper eigenstates like are localized bound states whereas generalized eigenstates are delocalized, ionized or scattering states. The hydrogen atom is a good example, the states below the ionization threshold $E = 0$ are discrete bound states, above $E = 0$ one has a continuum of ionized states. So proper and generalized eigenvectors have very different physical behaviors. The distinction between bulk and boundary states may also be understood in these terms: in a periodic system with $(d-1)$-dimensional boundary I can study the spectrum of the operator $H(k_{\parallel})$, where $k_{\parallel}$ is the momentum parallel to the surface. Then delocalized bulk states — Bloch waves — contribute continuous spectrum. Boundary states are associated to eigenvalues of $H(k_{\parallel})$: they are localized near the boundary and typically decay exponentially as I get farther and farther away from the boundary. 

To be able to cope with continuous spectrum, mathematicians connect the validity of equation~\eqref{diagonalizable_operators:eqn:resolution_identity} to the existence of a so-called \emph{projection-valued measure} (\cf Appendix~\ref{appendix:functional_calculus_normal_operators} and \cite[Chapter~3.1]{Teschl:quantum_mechanics:2009}) that gives precise meaning to $\dd \sopro{\psi_E}{\psi_E}$ in integral~\eqref{diagonalizable_operators:eqn:resolution_identity}.

\subsection{The upshot} 
\label{diagonalizable_operators:the_upshot}
Diagonalizability of operators can be characterized in several equivalent ways: 
\begin{theorem}\label{diagonalizable_operators:thm:characterizations_diagonalizable_operators}
	Let me denote the biorthogonal adjoint with ${}^{\ddagger}$ (\cf equation~\eqref{diagonalizable_operators:eqn:weighted_scalar_product}). The following are equivalent characterizations of diagonalizability: 
	\begin{enumerate}[(1)]
		\item $H$ is diagonalizable. 
		\item There exists a similarity transform $G$ so that $G \, H \, G^{-1}$ is normal with respect to $\dagger$ (\cf Section~\ref{diagonalizable_operators:definition}). 
		\item There exists a similarity transform $G$ so that $G \, H \, G^{-1}$ admits a functional calculus $f \mapsto f \bigl ( G \, H \, G^{-1} \bigr )$, \ie a systematic way to associate an operator $f(H)$ to suitable functions $f : \C \longrightarrow \C$ so that $f \bigl ( G \, H \, G^{-1} \bigr )^{\dagger} = \bar{f} \bigl ( G \, H \, G^{-1} \bigr )$ holds (\cf Appendix~\ref{appendix:functional_calculus_normal_operators} for details). 
		\item $H$ is normal with respect to the biorthogonal scalar product $\scppro{\, \cdot \,}{\, \cdot \,}$ on the vector space $\Hil$, \ie $[H , H^{\ddagger}] = 0$. 
		\item $H$ admits a functional calculus $f \mapsto f(H)$, \ie a systematic way to associate an operator $f(H)$ to suitable functions $f : \C \longrightarrow \C$ so that $f(H)^{\ddagger} = \bar{f}(H)$ holds (\cf Appendix~\ref{appendix:functional_calculus_normal_operators} for details).
	\end{enumerate}
\end{theorem}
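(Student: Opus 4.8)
The plan is to prove the chain of equivalences by combining two ingredients: (i) the classical spectral theorem for a single bounded normal operator on a Hilbert space, and (ii) a \enquote{dictionary} that translates between conjugation by a similarity transform $G$ and the passage to the weighted (biorthogonal) scalar product. The second ingredient is exactly what lets the genuinely non-hermitian statements (1), (4), (5) reduce to the hermitian-operator statements (2), (3).

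First I would set up the dictionary. Every biorthogonal scalar product compatible with the topology of $\Hil$ is of the form $\scppro{u}{v} = \sscpro{u}{M v}$ for a bounded, positive, boundedly invertible $M$ (Riesz representation plus norm-equivalence); conversely such an $M$ has a bounded, boundedly invertible positive square root $G := M^{1/2}$ (from the ordinary functional calculus of the hermitian operator $M$), so $M = G^{\dagger} G$ with $G \in \mathrm{GL}(\Hil)$. Unwinding the defining relation $\scppro{H^{\ddagger} u}{v} = \scppro{u}{H v}$ gives $M H^{\ddagger} = H^{\dagger} M$, i.e.
\begin{align*}
	H^{\ddagger} = M^{-1} H^{\dagger} M = G^{-1} \bigl ( G \, H \, G^{-1} \bigr )^{\dagger} G
	.
\end{align*}
Conjugating by $G$ therefore turns $[H , H^{\ddagger}]$ into $[G H G^{-1} , (G H G^{-1})^{\dagger}]$, and since $G$ is invertible one vanishes iff the other does — this is (2)$\,\Leftrightarrow\,$(4). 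In the same way $f \mapsto f(H) := G^{-1} f(G H G^{-1}) G$ is a multiplicative functional calculus with $f(H)^{\ddagger} = \bar f(H)$ precisely when $f \mapsto f(G H G^{-1})$ is one with $f(\,\cdot\,)^{\dagger} = \bar f(\,\cdot\,)$, giving (3)$\,\Leftrightarrow\,$(5). That the \emph{same} $G$ serves in all statements, and that the weighted product is indeed the one declaring a generalized eigenbasis orthonormal, is a piece of bookkeeping I would spell out using the spectral decomposition produced in the next step.

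It then remains to prove (1)$\,\Leftrightarrow\,$(2)$\,\Leftrightarrow\,$(3) for the \emph{single} operator $N := G H G^{-1}$, a statement purely about a bounded operator and its ordinary adjoint. Here I invoke the spectral theorem: $N$ is normal iff it admits a projection-valued measure $P$ with $N = \int_{\C} E \, \dd P(E)$, and writing $\dd P(E) =: \dd \sopro{\psi_E}{\psi_E}$ this is exactly the spectral decomposition~\eqref{diagonalizable_operators:eqn:spectral_decomposition} in the definition of diagonalizability, so (1)$\,\Leftrightarrow\,$(2). For (2)$\,\Rightarrow\,$(3) set $f(N) := \int_{\C} f(E) \, \dd P(E)$, the Borel (hence also continuous) functional calculus, which satisfies $f(N)^{\dagger} = \bar f(N)$ because conjugating the integrand corresponds to taking adjoints of the spectral projections. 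For (3)$\,\Rightarrow\,$(2) evaluate the given calculus at $f(z) = z$ and $\bar f(z) = \bar z$: as it is multiplicative with commutative image, $[N , N^{\dagger}] = [f(N) , \bar f(N)] = (f \bar f - \bar f f)(N) = 0$. Chaining these with the dictionary closes (1)$\Leftrightarrow$(2)$\Leftrightarrow$(3)$\Leftrightarrow$(4)$\Leftrightarrow$(5).

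The only place where analysis rather than algebra enters — and the main obstacle — is the infinite-dimensional spectral theorem for normal operators, together with the care needed because $G$ is merely bounded with bounded inverse and \emph{not} unitary: the weighted scalar product is genuinely inequivalent-as-an-isometry to the original one, so one must verify that \enquote{normal} and \enquote{admits a functional calculus} are invariant under the full similarity equivalence $H \mapsto G H G^{-1}$, which is precisely what the identity $H^{\ddagger} = G^{-1}(GHG^{-1})^{\dagger}G$ secures. In finite dimensions everything collapses to the elementary fact that a matrix is diagonalizable iff it is normal in some inner product iff it commutes with its adjoint in that inner product; the theorem's content is that this survives the passage to projection-valued measures and generalized eigenfunctions. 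I would also note that the invertibility of $H$ from Assumption~\ref{intro:assumption:diagonalizability} is not used here and is imposed only for later purposes.
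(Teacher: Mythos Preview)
Your proposal is correct and follows essentially the same approach as the paper: both hinge on the spectral theorem for bounded normal operators together with the identity $H^{\ddagger} = G^{-1}\bigl(G\,H\,G^{-1}\bigr)^{\dagger}G$ that translates between the two adjoints. The paper organizes the argument as two separate cycles (1)$\Leftrightarrow$(2)$\Leftrightarrow$(3) and (1)$\Leftrightarrow$(4)$\Leftrightarrow$(5), whereas your dictionary establishes (2)$\Leftrightarrow$(4) and (3)$\Leftrightarrow$(5) directly before closing the loop on (1)$\Leftrightarrow$(2)$\Leftrightarrow$(3); the ingredients are identical and your (3)$\Rightarrow$(2) via multiplicativity of the calculus is a minor but clean variation of the paper's route through the projection-valued measure.
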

Because these five characterizations are mathematically equivalent, I am free to pick any of them to actually \emph{define} diagonalizability; I refer the interested readers to Appendix~\ref{appendix:diabonalizable_operators} for proofs and additional details. 

A central ingredient in the topological classification are carteisan and polar decompositions of diagonalizable operators, which mimic the decompositions of complex numbers 
\begin{align*}
	z = \Re z + \ii \, \Im z = \e^{\ii \vartheta} \, \sabs{z} 
	. 
\end{align*}
into real and imaginary parts, and phase and modulus, respectively. For the latter I need to assume $z \neq 0$ in order to avoid ambiguities in the phase. 
\begin{theorem}[Cartesian and polar decomposition]
	Let me denote the biorthogonal adjoint with ${}^{\ddagger}$ (\cf equation~\eqref{diagonalizable_operators:eqn:weighted_scalar_product}). 
	\begin{enumerate}[(1)]
		\item $H$ is diagonalizable if and only if it is possible to write 
		\begin{align*}
			H = H_{\Re} + \ii H_{\Im} 
			\label{diagonalizable_operators:eqn:cartesian_decomposition}
		\end{align*}
		for two hermitian operators $H_{\Re,\Im} = H_{\Re,\Im}^{\ddagger}$ that \emph{commute}, $[H_{\Re} , H_{\Im}] = 0$. 
		\item $H$ is diagonalizable with bounded inverse if and only if there exist a unitary $V_H$ and a hermitian, strictly positive operator $\sabs{H} = \sabs{H}^{\ddagger}$ so that 
		\begin{align}
			H = V_H \, \sabs{H} 
			. 
			\label{diagonalizable_operators:eqn:polar_decomposition}
		\end{align}
		and the two operators \emph{commute}, $[V_H , \sabs{H}] = 0$. 
	\end{enumerate}
\end{theorem}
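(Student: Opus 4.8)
The plan is to reduce both decompositions to the textbook Cartesian and polar decompositions of a normal operator on a Hilbert space, by exploiting characterization~(4) of Theorem~\ref{diagonalizable_operators:thm:characterizations_diagonalizable_operators}: $H$ is diagonalizable exactly when $[H,H^{\ddagger}]=0$, i.e.\ when $H$ is a \emph{normal} operator on the Hilbert space $\bigl(\Hil,\scppro{\,\cdot\,}{\,\cdot\,}\bigr)$ obtained by replacing the original scalar product by the biorthogonal one~\eqref{diagonalizable_operators:eqn:weighted_scalar_product}. Since the two scalar products are topologically equivalent (the similarity transform $G$ of Section~\ref{diagonalizable_operators:definition} is bounded with bounded inverse), the continuous functional calculus of Appendix~\ref{appendix:functional_calculus_normal_operators} is available with $\ddagger$ in the role of the adjoint, and the statement then follows by transcribing the familiar arguments.

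\emph{Part~(1).} Suppose $H$ is diagonalizable and put $H_{\Re}:=\tfrac12(H+H^{\ddagger})$ and $H_{\Im}:=\tfrac{1}{2\ii}(H-H^{\ddagger})$. By construction $H_{\Re}^{\ddagger}=H_{\Re}$, $H_{\Im}^{\ddagger}=H_{\Im}$ and $H=H_{\Re}+\ii H_{\Im}$, while a one-line commutator computation gives $[H_{\Re},H_{\Im}]=\tfrac{\ii}{2}\,[H,H^{\ddagger}]=0$. Conversely, if $H=H_{\Re}+\ii H_{\Im}$ with $\ddagger$-hermitian commuting $H_{\Re},H_{\Im}$, then $H^{\ddagger}=H_{\Re}-\ii H_{\Im}$, hence $[H,H^{\ddagger}]=-2\ii\,[H_{\Re},H_{\Im}]=0$ and $H$ is diagonalizable by Theorem~\ref{diagonalizable_operators:thm:characterizations_diagonalizable_operators}. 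The decomposition is automatically unique, since any such pair must satisfy $H_{\Re}=\tfrac12(H+H^{\ddagger})$ and $H_{\Im}=\tfrac1{2\ii}(H-H^{\ddagger})$.

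\emph{Part~(2).} Now assume in addition that $H$ has bounded inverse. Then $H^{\ddagger}H=HH^{\ddagger}$ is $\ddagger$-hermitian, $\ddagger$-positive and invertible with bounded inverse, so the continuous functional calculus produces $\sabs{H}:=(H^{\ddagger}H)^{1/2}$, which is $\ddagger$-hermitian and strictly positive, $\sabs{H}\ge c\,\id$ for some $c>0$, with bounded inverse $\sabs{H}^{-1}$. Set $V_H:=H\,\sabs{H}^{-1}$. Then $V_H^{\ddagger}V_H=\sabs{H}^{-1}H^{\ddagger}H\,\sabs{H}^{-1}=\id$, and, using $\ddagger$-normality, $V_HV_H^{\ddagger}=H\,(H^{\ddagger}H)^{-1}H^{\ddagger}=\id$, so $V_H$ is unitary. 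Moreover $\sabs{H}^2=H^{\ddagger}H$ commutes with $H$, hence so does $\sabs{H}=(\sabs{H}^2)^{1/2}$, giving $[V_H,\sabs{H}]=0$ and $H=V_H\sabs{H}$. Conversely, from $H=V_H\sabs{H}$ with $V_H$ unitary, $\sabs{H}$ strictly positive $\ddagger$-hermitian and $[V_H,\sabs{H}]=0$, one reads off $H^{-1}=\sabs{H}^{-1}V_H^{\ddagger}$ (bounded) and $H^{\ddagger}H=\sabs{H}^2=HH^{\ddagger}$, so $H$ is diagonalizable with bounded inverse.

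The genuinely load-bearing step is the reduction in the first paragraph: one must know that $\ddagger$-hermiticity, $\ddagger$-unitarity and $\ddagger$-positivity transform to ordinary self-adjointness, unitarity and positivity under conjugation by $G$, and that a continuous functional calculus — in particular the positive square root — exists for $\ddagger$-normal operators; both are precisely what Theorem~\ref{diagonalizable_operators:thm:characterizations_diagonalizable_operators} and Appendix~\ref{appendix:functional_calculus_normal_operators} supply. After that everything is routine. The role of the invertibility hypothesis in part~(2) is exactly to make $\sabs{H}$ boundedly invertible, so that $V_H$ is a genuine unitary rather than merely a partial isometry — the same subtlety as in the usual bounded polar decomposition.
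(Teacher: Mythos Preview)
Your proof is correct and follows essentially the same strategy as the paper: reduce to the standard Cartesian and polar decompositions of a normal operator by working in the biorthogonal Hilbert space, using Theorem~\ref{diagonalizable_operators:thm:characterizations_diagonalizable_operators} as the bridge. Part~(1) is identical to the paper's argument. In Part~(2) there is a minor but harmless difference: the paper defines $\sabs{H}$ and $V_H$ directly via the functional calculus of the normal operator $H$ (taking $g(E)=\sabs{E}$ and $f(E)=E/\sabs{E}$), which makes commutativity automatic since $f(H)g(H)=(fg)(H)=(gf)(H)=g(H)f(H)$; you instead take the textbook polar-decomposition route $\sabs{H}=(H^{\ddagger}H)^{1/2}$, $V_H=H\sabs{H}^{-1}$, and then invoke normality to recover $[V_H,\sabs{H}]=0$. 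Both constructions yield the same operators and the same conclusions.
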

Insisting on \eg $[ H_{\Re} \, , \, H_{\Im} ] = 0$ in the cartesian decomposition is absolutely \emph{essential:} there are infinitely many ways to split \emph{any} operator, diagonalizable or not, 
\begin{align*}
	H = H_{\Re} + \ii H_{\Im} 
\end{align*}
into two hermitian operators if I do not insist that real and imaginary part operators commute. Given a scalar product with adjoint ${}^{\dagger}$ I may \emph{always} decompose 
\begin{align}
	H = \tfrac{1}{2} \bigl ( H + H^{\dagger} \bigr ) + \ii \, \tfrac{1}{\ii 2} \bigl ( H - H^{\dagger} \bigr )
	\label{diagonalizable_operators:eqn:naive_decomposition_real_imaginary_parts}
\end{align}
into the sum of two hermitian operators. Unfortunately, the two summands generally fail to commute and hence, \emph{cannot be diagonalized simultaneously.} 

Furthermore, when $[H_{\Re} , H_{\Im}] \neq 0$, I may not conclude from $H_{\Im} \neq 0$ that $H$ has spectrum away from the real line, for example. The Maxwell operator is an explicit example I will discuss in Section~\ref{diagonalizable_operators:choice_of_scalar_product} below: it is hermitian in the biorthogonal scalar product but where $H_{\Im} \neq 0$ if I define the imaginary part with respect to the naïve scalar product. 

\emph{Only} when $H$ is diagonalizable may I pick real and imaginary part operators that \emph{commute}. It turns out if I define $H_{\Re}$ and $H_{\Im}$ with respect to the biorthogonal scalar product~\eqref{diagonalizable_operators:eqn:weighted_scalar_product}, then real and imaginary part operators commute with one another. Consequently, I may simultaneously diagonalize them both. In this way, I may think of diagonalizable operators as “hermitian operators with complex spectrum”, that is normal operators (see Section~\ref{diagonalizable_operators:definition} below). 

Similarly, diagonalizability is not necessary for operators to admit a polar decomposition, there are as many ways to write $H = V_H \, \sabs{H}$ as there are scalar products on $\Hil$. What singles out diagonalizable operators is that I may chose a scalar product so that $V_H$ and $\sabs{H}$ \emph{commute} with one another, 
\begin{align*}
	\bigl [ V_H , \sabs{H} \bigr ] = 0 
	. 
\end{align*}
Once again, this ensures I can simultaneously diagonalize the modulus $\sabs{H}$ and the phase $V_H$. 

\emph{Consequently, from a mathematical point of view the topological classification of diagonalizable, bounded operators with bounded inverses reduces to classifying pairs of commuting, hermitian operators; alternatively, it is equivalent to classifying a commuting pair consisting of a unitary and a positive, hermitian operator.} 

\subsection{Definition of normal operators} 
\label{diagonalizable_operators:definition}
An operator $H$ is normal if and only if it commutes with its hermitian adjoint, 
\begin{align*}
	\mbox{$H$ normal}
	\; \; \overset{\mathrm{def}}{\Longleftrightarrow} \; \; 
	[H , H^{\dagger}] = 0 
	. 
\end{align*}
Unitary operators are probably the best-known example of a normal operator that is not hermitian, since indeed the commutator $[U , U^{\dagger}] = \id_{\Hil} - \id_{\Hil} = 0$ always vanishes. 

Normal operators $H = H_{\Re} + \ii H_{\Im}$ can be decomposed into two \emph{hermitian} operators, 
\begin{align*}
	H_{\Re} &= \tfrac{1}{2} \bigl ( H + H^{\dagger} \bigr ) 
	, 
	\\
	H_{\Im} &= \tfrac{1}{\ii 2} \bigl ( H - H^{\dagger} \bigr ) 
	, 
\end{align*}
which commute with one another, $[H_{\Re} , H_{\Im}] = 0$. Hence, $H_{\Re}$ and $H_{\Im}$ can be \emph{diagonalized simultaneously}. Each contributes real and imaginary part to the spectrum of $H$, respectively. Consequently, every normal operator is diagonalizable. Hermitian operators are exactly those diagonalizable operators whose imaginary part $H_{\Im} = 0$ vanishes. 

Similarly, while also non-diagonalizable operators that are bounded and have a bounded inverse possess a polar decomposition~\eqref{diagonalizable_operators:eqn:polar_decomposition}, only when $H$ is normal do the unitary phase and the absolute value \emph{commute}. This is most easily seen when solving~\eqref{diagonalizable_operators:eqn:polar_decomposition} for 
\begin{align*}
	V_H = H \, \sabs{H}^{-1} = H \; \bigl ( H \, H^{\dagger} \bigr )^{-\nicefrac{1}{2}} 
	= \sabs{H}^{-1} \, H 
\end{align*}
and exploiting that $H$ and $H^{\dagger}$ commute. 

In summary, normal operators inherit most of the nice properties that hermitian operators have, the only difference being that their spectra may be complex. Next, I will show the converse is also true, \ie diagonalizable operators are normal with respect to a suitably chosen scalar product. 

\subsection{Mathematically, I have a choice of scalar product} 
\label{diagonalizable_operators:choice_of_scalar_product}
The previous subsection established that any normal operator is diagonalizable. Now I will explain why converse is also true, although that takes a bit more work. At first glance, it tempting to come up with a counter example, \ie an operator which is diagonalizable but seemingly not normal. In Appendix~\ref{appendix:biorthogonal_calculus_weighted_Hilbert_spaces:2x2_matrix} I walk the readers through the arguments for the $2 \times 2$ matrix 
\begin{align}
	H = \left (
	\begin{matrix}
		1 & -1+\ii \\
		0 & \ii
	\end{matrix}
	\right )
	. 
	\label{diagonalizable_operators:eqn:2_times_2_matrix_example}
\end{align}
$H$ does not commute with its adjoint taken with respect to the Euclidean scalar product. Nevertheless, has two distinct eigenvalues, $1$ and $\ii$, and is therefore diagonalizable. How can this seeming contradiction be resolved? 

Objects like eigenvalues, the spectrum, eigenvectors — and consequently, diagonalizability — are defined through purely \emph{algebraic} relations. Algebraically defined notions are \emph{more fundamental} than \emph{geometric} notions like orthogonality and length. Indeed, I employ different notions of angles and length, depending on my (mathematical or physical) needs. This is accomplished by picking another, perhaps “non-standard” scalar product, \ie scalar products one may not be used to seeing. However, from a mathematical perspective, any sesquilinear map on a vector space that satisfies the axioms of a scalar product (\cf \cite[Chapter~0.3]{Teschl:quantum_mechanics:2009}) \emph{is} a scalar product. 

Mathematically, an operator that is hermitian with respect to a “non-standard” scalar product is not “hermitian” (with quotation marks), but just \emph{hermitian}, period. No scalar product is better than another. Such an operator possesses all the properties I expect hermitian operators to have. For instance, $\e^{- \ii t H}$ is unitary, which leads to conserved quantities such as $\scpro{\e^{- \ii t H} \varphi}{\e^{- \ii t H} \psi} = \scpro{\varphi}{\psi}$. Conserved quantities usually come with a good physical interpretation, so “unusual” scalar products usually have a cogent and straightforward physical interpretation. That extends to normal operators — as long as there exists a scalar product with respect to which $H$ is normal, I can tap into the general theory for normal operators — and that is all that counts here. 

One of the better known examples in theoretical physics comes from classical electromagnetism. The so-called \emph{Maxwell operator} 
\begin{align}
	H = W \, D 
	= \left (
	\begin{matrix}
		\eps^{-1} & 0 \\
		0 & \mu^{-1} \\
	\end{matrix}
	\right ) \left (
	\begin{matrix}
		0 & + \ii \nabla^{\times} \\
		- \ii \nabla^{\times} & 0 \\
	\end{matrix}
	\right )
	\label{diagonalizable_operators:eqn:Maxwell_operator}
\end{align}
arises from rewriting Maxwell's equations describing a dielectric medium (\cf \eg \cite{Raghu_Haldane:quantum_Hall_effect_photonic_crystals:2008} or \cite[Section~3]{DeNittis_Lein:Schroedinger_formalism_classical_waves:2017}). It acts on complex electromagnetic fields fields with square integrable amplitudes. The electric permittivity $\eps$ and the magnetic permeability $\mu$ describe the properties of the medium. This operator is not hermitian if I use the standard scalar product. However, when $\eps$ and $\mu$ take values in the hermitian $3 \times 3$ matrix-valued functions whose eigenvalues are always positive, it \emph{is} hermitian with respect to the scalar product  
\begin{align}
	\scppro{\Psi}{\Phi} &= \scpro{\psi^E}{\eps \, \phi^E} + \scpro{\psi^H}{\mu \, \phi^H}
	\label{diagonalizable_operators:eqn:electromagnetic_energy_scalar_product}
	\\
	&= \int \dd x \, \Bigl ( \psi^E(x) \cdot \eps(x) \phi^E(x) 
	\, + \Bigr . 
	\notag \\
	&\qquad \qquad \; \Bigl . 
	+ \, \psi^H(x) \cdot \mu(x) \phi^H(x) \Bigr )
	, 
	\notag 
\end{align}
because it satisfies $\scppro{\Psi}{H \Phi} = \scppro{H \Psi}{\Phi}$ or $H^{\ddagger} = H$ for short. Complex conjugation is implicit in the dot product on $\C^3$. Mathematically, $H^{\ddagger} = H$ is just as good as $H^{\dagger} = H$: the spectrum of $H$ is real, eigenfunctions to different eigenvalues will be $\scppro{\, \cdot \,}{\, \cdot \,}$-orthogonal and so forth \footnote{Strictly speaking, I have not taken the unboundedness of $H$ into account, but this is not essential for my arguments here. Fortunately, I have already paid this debt in an earlier publication and shown that not only the operating prescription, but also the domains of $H$ and $H^{\ddagger}$ coincide (\cf \cite[Proposition~6.2]{DeNittis_Lein:Schroedinger_formalism_classical_waves:2017}). }. And the conserved quantity $\scppro{\Psi}{\Psi}$ is nothing but twice the electromagnetic field energy. So this “unusual, non-standard” scalar product~\eqref{diagonalizable_operators:eqn:electromagnetic_energy_scalar_product} has neat physical interpretation. 

The example of the Maxwell operator also teaches us that calling \eg 
\begin{align*}
	H_{\Im} &= \frac{1}{\ii 2} \bigl ( H - H^{\dagger} \bigr ) 
	= \frac{1}{\ii 2} \bigl ( W \, D - D \, W \bigr ) 
	\neq 0 
\end{align*}
the imaginary part of the Maxwell operator makes no sense: since the Maxwell operator is hermitian with respect to the energy scalar product~\eqref{diagonalizable_operators:eqn:electromagnetic_energy_scalar_product}, its spectrum is real and the imaginary part should vanish. The interpretation of $H_{\Re}$ and $H_{\Im}$ as real and imaginary parts of $H$ hinge on the condition $[ H_{\Re} , H_{\Im} ] = 0$ that they commute. 

Both for pragmatic and conceptual reasons, it is better to study the Maxwell operator~\eqref{diagonalizable_operators:eqn:Maxwell_operator} in its “natural” Hilbert space, \ie using the scalar product~\eqref{diagonalizable_operators:eqn:electromagnetic_energy_scalar_product} that makes it hermitian. More generally, the same pragmatic reasons apply for arbitrary diagonalizable operators: since I can always \emph{construct} a scalar product $\scppro{\, \cdot \,}{\, \cdot \,}$ with respect to which the operator is normal, I can tap into the general theory of hermitian and normal operators. 

Let me illustrate the construction with a simple example first. In case of the $2 \times 2$ matrix example from earlier, equation~\eqref{diagonalizable_operators:eqn:2_times_2_matrix_example}, the two eigenvectors are $v_1 = (1 , 0)^{\mathrm{T}}$ and $v_2 = (1 , 1)^{\mathrm{T}}$. Therefore, I can write 
\begin{align*}
	H &= G^{-1} \, D \, G
	= \left (
	\begin{matrix}
		1 & 1 \\
		0 & 1 \\
	\end{matrix}
	\right ) \, \left (
	\begin{matrix}
		1 & 0 \\
		0 & \ii \\
	\end{matrix}
	\right ) \, \left (
	\begin{matrix}
		1 & -1 \\
		0 & 1 \\
	\end{matrix}
	\right )
\end{align*}
as the product of the invertible matrix $G$ and a diagonal matrix with the two eigenvalues in its diagonal. Picking 
\begin{align}
	\scppro{\varphi}{\psi} \overset{\mathrm{def}}{=} \scpro{G \varphi \, }{ \, G \psi}
	\label{diagonalizable_operators:eqn:weighted_scalar_product}
\end{align}
not only declares the two eigenvectors to be orthonormal, the adjoint matrix 
\begin{align}
	H^{\ddagger} &= (G^{\dagger} G)^{-1} \, H^{\dagger} \, (G^{\dagger} G)
	\label{diagonalizable_operators:eqn:G_adjoint}
	\\
	&= G^{-1} \, \bigl ( G \, H \, G^{-1} \bigr )^{\dagger} \, G
	\notag \\
	&= G^{-1}  \, \overline{D} \, G
	\notag 
\end{align}
commutes with $H$ as claimed. We refer to the Appendix~\ref{appendix:biorthogonal_calculus_weighted_Hilbert_spaces:2x2_matrix} for details. Note that I could have chosen \eg $v_1 = (20 , 0)^{\mathrm{T}}$ and $v_2 = (4 , 4)^{\mathrm{T}}$ instead, which measures lengths in the two directions with respect to “different units”. 

Very often it will be useful to express the scalar product 
\begin{align*}
	\sscppro{\varphi}{\psi} &= \scpro{\varphi \, }{ \, G^{\dagger} G \psi}
	= \scpro{\varphi}{W \psi}
\end{align*}
in terms of the weight 
\begin{align*}
	W = G^{\dagger} G
	. 
\end{align*}
Not only does that make some formulas more succinct, it becomes clear that I may replace $G$ with $\sqrt{W}$. By my assumptions on $G$, the operator $W$ is automatically a \emph{strictly positive} similarity transform, \ie $W$ is positive and bounded, and its inverse $W^{-1}$ exists and is bounded as well. 

Clearly, these arguments immediately extend to diagonalizable $N \times N$ matrices. When I am dealing with diagonalizable operators on infinite-dimensional Hilbert spaces, the arguments become more technical. Yet in essence, it still follows the same basic outline; I refer the interested readers to Appendix~\ref{apprendix:characterizations_diagonalizability}. 

\subsection{Equivalence to biorthongonal formalism} 
\label{diagonalizable_operators:biorthongonal_formalism}
The so-called biorthogonal formalism is a standard tool in the physics community when dealing with non-hermitian operators \cite{Brody:biorthogonal_quantum_mechanics:2014}. However, I feel the way it is typically presented obscures its mathematical underpinnings and makes it harder to exploit general mathematical facts to their fullest extent. One of the main points of this article, that all diagonalizable operators are normal with respect to a suitably chosen scalar product, is a mathematical triviality, but is obscured by the notation and terminology frequently used in much of the physics community. 

For example, the standard presentation does not make it obvious that the “unusual” scalar product~\eqref{diagonalizable_operators:eqn:electromagnetic_energy_scalar_product} is nothing but the scalar product from the biorthogonal formalism. The language used in many physics publications falsely suggests a hierarchy of scalar products where the “standard” or “quantum mechanical” scalar product is more fundamental than \eg the biorthogonal scalar product. Mathematically, this is false: as long as a sesquilinear form $\scpro{\, \cdot \,}{\, \cdot \,}$ satisfies all the axioms of a scalar product, it \emph{is} a scalar product. And one scalar product is as good as any other. The same applies to all derived notions like hermiticity or unitarity: for instance, Brody puts hermitian and unitary in quotation marks when these are defined with respect to the biorthogonal scalar product \cite{Brody:biorthogonal_quantum_mechanics:2014}. This is misleading and unnecessary, for mathematically these operators \emph{truly are} hermitian and unitary and not hermitian and unitary in some second-class sense. 

Of course, physics frequently \emph{does} single out one scalar product by giving it a cogent physical interpretation — although that need not always be the “standard” scalar product. For instance, when people refer to the standard scalar product as the “quantum mechanical” scalar product, what they actually mean is that this is the scalar product with which I compute transition probabilities. Hermiticity of the Hamiltonian $H = H^{\dagger}$ then leads to unitarity of the time evolution $\e^{- \ii \frac{t}{\hbar} H}$, and the unitarity in turn implies conservation of probability. But many applications make no reference to quantum mechanics: when expressing Maxwell's equations in the form of a Schrödinger equation via \eqref{diagonalizable_operators:eqn:Maxwell_operator}, then the standard scalar product has no particular physical meaning. Instead, it is the “non-standard” biorthogonal scalar product~\eqref{diagonalizable_operators:eqn:electromagnetic_energy_scalar_product} that is of physical significance: $\scppro{(\mathbf{E},\mathbf{H})}{(\mathbf{E},\mathbf{H})}$ is twice the electromagnetic field energy. Indeed, the “biorthogonal” scalar product~\eqref{diagonalizable_operators:eqn:weighted_scalar_product} may very well be the one that carries physical significance rather than the originally given scalar product $\scpro{\, \cdot \,}{\, \cdot \,}$. 

That being said, let me recap the standard recipe of the biorthogonal formalism and connect it to the previous section. To free us of mathematical technicalities, let me suppose that $H$ is not just diagonalizable but in addition possesses a complete basis made up of proper (right-)\linebreak eigenvectors. The idea of the biorthogonal formalism is that in addition to the set of right-eigenkets obtained from the eigenvalue equation 
\begin{align*}
	H \sket{\psi_{R,n}} = E_n \, \sket{\psi_{R,n}}
	, 
\end{align*}
there exists a set of left-eigenbras, which solve 
\begin{align*}
	\sbra{\psi_{L,n}} H = E_n \, \sbra{\psi_{L,n}} 
	. 
\end{align*}
Denoting the antilinear duality between bras and kets also with $\dagger$, I get the ordinary eigenvalue equation 
\begin{align*}
	H^{\dagger} \sket{\psi_{L,n}} = \overline{E_n} \; \sket{\psi_{L,n}}
\end{align*}
for the left-eigenvectors of the adjoint operator to the complex conjugate eigenvalues. In principle, I can normalize left- and right-eigenvectors independently, but I may choose to normalize them 
\begin{align*}
	\scpro{\psi_{L,n}}{\psi_{R,n}} = 1 
\end{align*}
by convention (see the discussion of \cite[equation~(19)]{Brody:biorthogonal_quantum_mechanics:2014}). 

The right-eigenvectors of $H$ then make up the “columns” of $G^{-1}$ just like in the $2 \times 2$ matrix example from Section~\ref{diagonalizable_operators:choice_of_scalar_product}. Similarly, the “column vectors” of $G^{\dagger}$ are the eigenvectors of the adjoint matrix $H^{\dagger}$ to the complex conjugate eigenvalues, \ie the left-eigenvectors.

What is the relation between left- and right-eigenvectors then? Well, for both, left- and right-eigenvectors I have 
\begin{align*}
	G \psi_{R,n} &= e_n 
	= \bigl ( G^{\dagger} \bigr )^{-1} \psi_{L,n}
	, 
\end{align*}
where $\{ e_n \}_n$ is any reference basis of $\Hil$ that is orthonormal with respect to the original scalar product $\scpro{\, \cdot \,}{\, \cdot \,}$. That means the two types of eigenvectors are related by $G G^{\dagger}$, 
\begin{align*}
	\psi_{L,n} = G^{\dagger} G \psi_{R,n}
	\; \; \Longleftrightarrow \; \; 
	\psi_{R,n} = \bigl ( G^{\dagger} G \bigr )^{-1} \psi_{L,n}
	. 
\end{align*}
Plugging this in reveals that the inner product of right- and left-eigenvalue is nothing but the $\scppro{\, \cdot \,}{\, \cdot \,}$-scalar product from equation~\eqref{diagonalizable_operators:eqn:weighted_scalar_product} of two left-eigenvectors,  
\begin{align}
	\scpro{\psi_{L,j}}{\psi_{R,n}} &= \scpro{G^{\dagger} \, G \psi_{R,j} \, }{ \, \psi_{R,n}}
	\notag \\
	&= \scpro{\psi_{R,j} \, }{ \, G^{\dagger} \, G \psi_{R,n}}
	\notag \\
	&= \scpro{G \psi_{R,j}}{G \psi_{R,n}} 
	= \bscppro{\psi_{R,j}}{\psi_{R,n}} 
	. 
	\label{diagonalizable_operators:eqn:relation_two_scalar_products}
\end{align}
Moreover, it also explains the orthonormality of left- and right-eigenvectors, 
\begin{align*}
	\scpro{\psi_{L,j}}{\psi_{R,n}} &= \bscppro{\psi_{R,j}}{\psi_{R,n}} 
	= \scpro{G \psi_{R,j} \, }{ \, G \psi_{R,n}} 
	\\
	&= \sscpro{e_j}{e_n}
	= \delta_{jn}
	, 
\end{align*}
which corresponds to \cite[equation~(18)]{Brody:biorthogonal_quantum_mechanics:2014} for the special case $d_k = \delta_{jk}$ and $c_k = \delta_{nk}$. 

Then the completeness relation
\begin{align*}
	\id_{\Hil} &= \sum_n \sopro{\psi_{R,n}}{\psi_{L,n}}
	\\
	&= \sum_n \sopro{\psi_{R,n}}{G^{\dagger} G \psi_{R,n}}
	\\
	&= \sum_n \skket{\psi_{R,n}} \sbrbra{\psi_{R,n}}
\end{align*}
can be equivalently written in a symmetric way with the $\sscppro{\, \cdot \,}{\, \cdot \,}$-scalar product~\eqref{diagonalizable_operators:eqn:weighted_scalar_product}. Alternatively, I may view it as a fancy way of writing 
\begin{align*}
	\bigl ( G^{\dagger} \bigr )^{\dagger} \, G^{-1} = G \, G^{-1} = \id_{\Hil}
	. 
\end{align*}
These arguments extend to the case when the spectrum does not consist solely of eigenvalues; I refer the interested readers to Appendix~\ref{apprendix:characterizations_diagonalizability}. 

To summarize, from the perspective of mathematics, the biorthogonal calculus is just a very cumbersome way of using the adapted $\sscppro{\, \cdot \,}{\, \cdot \,}$-scalar product~\eqref{diagonalizable_operators:eqn:weighted_scalar_product}. Moreover, the biorthogonal calculus obscures certain fundamental facts about the setting. For example, it is not necessary to keep track of two sets of eigenvectors, right-eigenvectors contain all the information. And not least it obscures that diagonalizable operators are exactly those that are $\sscppro{\, \cdot \,}{\, \cdot \,}$-normal, \ie normal with respect to the scalar product~\eqref{diagonalizable_operators:eqn:weighted_scalar_product}. 

\subsection[The topological classification does not depend on the choice of scalar product]{The topological classification does not \linebreak depend on the choice of scalar product} 
\label{diagonalizable_operators:topological_classification_algebraic}
So far all of the arguments suggest the topological classification is independent of the geometry and \emph{only depends on algebraic relations.} And this will be indeed the case: objects like the spectrum, eigenvectors, inverses and so forth do not depend on my choice of scalar product. 

Only one subtlety should be briefly addressed: I mentioned that geometry gives meaning to the notion of length via the norm $\norm{\psi} = \sqrt{\scpro{\psi}{\psi}}$ on $\Hil$, and length is used to measure the distance between two operators, 
\begin{align}
	\snorm{H} \overset{\mathrm{def}}{=} \sup_{\snorm{\varphi} = 1} \snorm{H \varphi}
	. 
	\label{diagonalizable_operators:eqn:operator_norm}
\end{align}
This, in turn, is a crucial ingredient when defining what continuity means in the space of operators. And given that homotopies are maps that interpolate between two operators in a \emph{continuous} fashion, continuity — and hence, geometry — \emph{does} enter the topological classification. 

Fortunately, though, the notions of continuity that arise from the originally given scalar product $\scpro{\, \cdot \,}{\, \cdot \,}$ on $\Hil$ and the adapted scalar product $\sscppro{\, \cdot \,}{\, \cdot \,}$ (the scalar product from biorthogonal calculus) are one and the same. That is because the two norms are equivalent, 
\begin{align*}
	\norm{G^{-1}}^{-2} \, \scpro{\psi}{\psi} \leq \scppro{\psi}{\psi} \leq \norm{G}^2 \, \scpro{\psi}{\psi} 
	, 
\end{align*}
where $\norm{G}$ denotes the operator norm~\eqref{diagonalizable_operators:eqn:operator_norm} of $G$. So indeed, I am free to use the adapted scalar product $\sscppro{\, \cdot \,}{\, \cdot \,}$ without altering my definition of topological phase. 
\section{Review of the 38-fold classification of \cite{Kawabata_Shiozaki_Ueda_Sato:classification_non_hermitian_systems:2019,Zhou_Lee:non_hermitian_topological_classification:2019}} 
\label{38_fold_recap}
For the benefit of the reader and to give me the opportunity to introduce some basic notions and notation, I will summarize the main points of the 38-fold classification of non-hermitian operators.

\subsection{The homotopy definition of topological phases} 
\label{38_fold_recap:homotopy_definition}
At the very basis of most topological classifications is the homotopy definition of topological phases. The idea is that the “topology of my system” must not change under continuous deformations as long as the relevant spectral gap remains open and all essential symmetries are preserved. Let me call the set of operators with certain properties (\eg hermiticity) that possess the relevant symmetries and have the right type of spectral gap $\mathcal{X}$; the types of symmetries and the precise nature of the spectral gap will be introduced below. Mathematically, this translates to considering two operators $H_0$ and $H_1$ equivalent if there exists a continuous path $H(\lambda)$ in $\mathcal{X}$ that connects $H_0 = H(0)$ with $H_1 = H(1)$; such a path is called a \emph{homotopy}, and $H_0$ and $H_1$ are considered \emph{homotopically equivalent.} In this mathematical dialect a topological phase is a \emph{homotopy equivalence class of operators,} \ie I identify all operators connected by a homotopy. Put another way, the topological phases make up the \emph{set} of connected components $\pi_0(\mathcal{X})$ inside the set of operators $\mathcal{X}$. Two operators can only be in different topological phases if there is some barrier to them being connected by a continuous path. For the purpose of \cite{Kawabata_Shiozaki_Ueda_Sato:classification_non_hermitian_systems:2019,Zhou_Lee:non_hermitian_topological_classification:2019} the barriers are the regions where the spectral gap closes. Note that there is no natural group structure on $\pi_0(\mathcal{X})$ (\cf \cite[Section~2]{Thiang:K_theoretic_classification_topological_insulators:2016}); most approaches to classifying operators amount to establishing relations between elements of the \emph{set} $\pi_0(\mathcal{X})$ and certain groups (such as $K$-groups). 

When the operators are periodic or depend on other, suitable parameters, homotopy \emph{groups} will enter the discussion; a recent preprint \cite{Wojcik_Sun_Bzdusek_Fan:topological_classification_non_hermitian_hamiltonians:2020} starts with this premise and develops a classification entirely from homotopy theory. Unfortunately, homotopy groups are in general not \emph{algorithmically} computable, and mathematicians had had to find other ways to characterize and distinguish topological phases. 

One way out is $K$-theory, which associates \emph{abelian groups} to topological spaces. And these groups capture some essential topological features of these topological spaces. Its group elements act as \emph{labels} for the topological phases of $\mathcal{X}$, \ie group elements are house numbers or coordinates for the \emph{set} of path-connected components $\pi_0(\mathcal{X})$. To give one example, for periodic hermitian operators of class~A in $d = 4$, the six first Chern numbers, the second Chern number and the rank label the phase of the system \emph{uniquely}; typically, the rank (the number of filled bands) is disregarded, though, as it is being kept fixed. In the context of topological insulators, these groups are typically products and sums of $\Z$ and $\Z_2$. The tremendous advantage of $K$-groups is that these are \emph{algorithmically} computable, \ie I can (in principle) write a computer program that spits out the $K$-group once I give it a CW complex $\mathcal{X}$ (like the Brillouin torus $\T^d$). 

The $K$-theoretic approach has its shortcomings. While none of these shortcomings take away from the success of applying $K$-theory to problems from topological insulators, it is nevertheless useful to keep them in mind. For one, \emph{a priori} it is not clear whether the list of invariants is (or even can be!) exhaustive. Many works only list the so-called strong or top invariants; if there are other weak invariants, then the strong invariants are not enough to uniquely label topological phases. To continue my example of hermitian class~A operators in $d = 4$, the strong invariant is the second Chern number whereas the weak invariants are the six first Chern numbers. So while two operators with different second Chern numbers must lie in different topological phases, just because their second Chern numbers agree does not automatically mean they are homotopic. Even in the hermitian case and only for low dimension do we have proofs that the list of topological invariants is complete; I am currently aware only of proofs for classes~A, AI, AII and AIII \cite{Hatcher:vector_bundles_K_theory:2009,DeNittis_Gomi:AI_bundles:2014,DeNittis_Gomi:AII_bundles:2014,DeNittis_Gomi:AIII_bundles:2015}. Even for the best-understood class, class~A, there are cases when knowing all topological invariants is not enough; an explicit example is constructed in \cite[Section~V.G]{DeNittis_Lein:exponentially_loc_Wannier:2011} for $d = 5$ and rank~$2$. 

Lastly, the recent preprint \cite{Wojcik_Sun_Bzdusek_Fan:topological_classification_non_hermitian_hamiltonians:2020} emphasizes another relevant point: first homotopy groups can be non-abelian. Specifically, Wojcik et al.\ show explicitly that braid groups appear in the classification of certain classes of non-hermitian operators. $K$-groups, on the other hand, are always abelian, so they are unable to resolve some of these finer details, which may be important to properly understand the physics. 

\subsection{Relevant symmetries} 
\label{38_fold_recap:symmetries}
In a nutshell, Kawabata et al.\ \cite{Kawabata_Shiozaki_Ueda_Sato:classification_non_hermitian_systems:2019} have computed the relevant $K$-groups for gapped non-hermitian operators with particular types of symmetries. I shall make a list of them now. They fall into two distinct classes, either they relate $H$ with itself, 
\begin{align}
	U \, H \, U^{-1} &= \pm H
	, 
	\label{symmetries:eqn:usual_CAZ_symmetries}
\end{align}
or $H$ with its adjoint $H^{\dagger}$, 
\begin{align}
	U_{\dagger} \, H \, U_{\dagger}^{-1} &= \pm H^{\dagger}
	. 
	\label{symmetries:eqn:dagger_CAZ_symmetries}
\end{align}
$U$ and $U_{\dagger}$ (abbreviated as $U_{(\dagger)}$ in what follows) are either a linear or an antilinear. Antilinear maps come in the even or odd variety, depending on whether $U_{(\dagger)}^2 = \pm \id$, whereas linear ones are always assumed to square to $+\id$. Following the convention of \cite{Kawabata_Shiozaki_Ueda_Sato:classification_non_hermitian_systems:2019}, this will either give rise to ordinary, chiral, time-reversal and particle-hole symmetries or their daggered counterparts. I have listed them in Table~\ref{symmetries:table:overview_symmetries}. 

\begin{table*}
	\begin{centering}
		\newcolumntype{A}{>{\centering\arraybackslash\normalsize}m{18mm}}
		\newcolumntype{B}{>{\centering\arraybackslash\normalsize}m{20mm}}
		\newcolumntype{C}{>{\centering\arraybackslash\normalsize}m{30mm}}
		\newcolumntype{D}{>{\centering\arraybackslash\normalsize}m{15mm}}
		\renewcommand{\arraystretch}{1.5}
		\begin{tabular}{A | A | B | C | D | A}
			\multicolumn{2}{c|}{\normalsize\emph{Notation used}} & \emph{(Anti)linear} & \emph{Condition on $H$} & $U^2 =$  & $\sigma(H) = $ \\
			\emph{here} & \emph{in \cite{Kawabata_Shiozaki_Ueda_Sato:classification_non_hermitian_systems:2019}} &  &  &  & \\ \hline \hline 
			ordinary & ordinary & linear & $V \, H \, V^{-1} = + H$ & $+\id$ & $+\sigma(H)$ \\ \hline 
			chiral   & $\mathrm{CS}^{\dagger} = \mathrm{SLS}$ & linear & $S \, H \, S^{-1} = - H$ & $+\id$ & $-\sigma(H)$ \\ \hline 
			$\pm \mathrm{TR}$  & $\pm \mathrm{TRS}$ & antilinear & $T \, H \, T^{-1} = + H$ & $\pm \id$ & $+ \overline{\sigma(H)}$ \\ \hline 
			$\pm \mathrm{PH}$  & $\pm \mathrm{PHS}^{\dagger}$ & antilinear & $C \, H \, C^{-1} = - H$ & $\pm \id$ & $- \overline{\sigma(H)}$ \\ \hline \hline 
			pseudo   & $\mathrm{pH}$ & linear & $V_{\dagger} \, H \, V_{\dagger}^{-1} = + H^{\dagger}$ & $+\id$ & $+\overline{\sigma(H)}$ \\ \hline 
			chiral${}^{\dagger}$ & $\mathrm{CS}$ & linear & $S_{\dagger} \, H \, S_{\dagger}^{-1} = - H^{\dagger}$ & $+\id$ & $-\overline{\sigma(H)}$ \\ \hline 
			$\pm \mathrm{TR}^{\dagger}$ & $\pm \mathrm{TRS}^{\dagger}$ & antilinear & $T_{\dagger} \, H \, T_{\dagger}^{-1} = + H^{\dagger}$ & $\pm \id$ & $+ \sigma(H)$ \\ \hline 
			$\pm \mathrm{PH}^{\dagger}$ & $\pm \mathrm{PHS}$ & antilinear & $C_{\dagger} \, H \, C_{\dagger}^{-1} = - H^{\dagger}$ & $\pm \id$ & $- \sigma(H)$ \\ 
		\end{tabular}
	\end{centering}
	\caption{This table lists the types of symmetries considered for this classification. The two naming schemes are compared in the first two columns. The presence of discrete symmetries of the form \eqref{symmetries:eqn:usual_CAZ_symmetries} or \eqref{symmetries:eqn:dagger_CAZ_symmetries} leads to symmetries in the spectrum. }
	\label{symmetries:table:overview_symmetries}
\end{table*}

My labeling convention is completely equivalent to the one adopted in \eg \cite{Kawabata_Shiozaki_Ueda_Sato:classification_non_hermitian_systems:2019}, who factor out complex conjugation $K$. In my notation, a time-reversal symmetry is an antiunitary $T = \widetilde{T} \, K$ that commutes with $H$. Kawabata et al.\ would instead focus on the unitary operator $\widetilde{T}$ as the time-reversal symmetry since it satisfies 
\begin{align*}
	\widetilde{T} \, H \, \widetilde{T}^{-1} = + \overline{H}
	, 
\end{align*}
where $\overline{H} = K \, H \, K$ is the complex conjugate of $H$. Similarly, using that the transpose 
\begin{align*}
	H^{\mathrm{T}} \overset{\mathrm{def}}{=} \overline{H}^{\dagger}
\end{align*}
of an operator is \emph{defined} as the adjoint of the complex conjugate operator, I can translate the definitions of Kawabata et al.\ and compare them with mine. The result is summarized in the first two columns of Table~\ref{symmetries:table:overview_symmetries}. Even though my labeling convention to extend the symmetries differs from \cite{Kawabata_Shiozaki_Ueda_Sato:classification_non_hermitian_systems:2019}, both choices are logically consistent: while my notation is guided by mathematical simplicity, Kawabata et al.\ motivate their choices by physics. 

The presence of these symmetries leads to symmetries in the spectrum of the operator 
\begin{align}
	\sigma(H) \overset{\mathrm{def}}{=} \bigl \{ E \in \C \; \; \vert \; \; \mbox{$H - E$ not invertible} \bigr \} 
	, 
	\label{38_fold_recap:eqn:definition_spectrum}
\end{align}
\eg those visible in Figures~\ref{intro:figure:highly_symmetric_spectrum} or \ref{38_fold_recap:figure:nested_Cs_spectrum}. 

For example, assume $H$ possesses a chiral${}^{\dagger}$ symmetry, \ie a unitary $S_{\dagger}$ that satisfies 
\begin{align*}
	S_{\dagger} \, H \, S_{\dagger}^{-1} &= - H^{\dagger} 
	. 
\end{align*}
Suppose $\psi_E$ is an eigenvector to the complex eigenvalue $E = \lambda + \ii \mu \in \sigma(H)$. Then a straightforward computation, 
\begin{align*}
	S_{\dagger} \, H \psi_E &= S_{\dagger} \bigl ( E \, \psi_E \bigr ) 
	= E \, S_{\dagger} \psi_E
	\\
	&\overset{!}{=} - H^{\dagger} \, S_{\dagger} \psi_E 
	, 
\end{align*}
shows us that $-E = - \lambda - \ii \mu$ lies in the spectrum of the adjoint $H^{\dagger}$. While not all energies from the spectrum must correspond to eigenstates, these arguments can be made rigorous with the help of Weyl sequences, \ie sequences of approximate eigenvectors (\cf \cite[Lemma~2.1.6]{Teschl:quantum_mechanics:2009}). Given that the spectra $\sigma(H^{\dagger}) = \overline{\sigma(H)}$ of $H$ and $H^{\dagger}$ are related by complex conjugation (see \cite[Theorem~VI.7]{Reed_Simon:M_cap_Phi_1:1972}), this leads to a symmetry in the spectrum of $H$ itself, \ie $\sigma(H) = - \overline{\sigma(H)}$. Namely, whenever $E \in \sigma(H)$, then also 
\begin{align*}
	- \overline{E} = - \lambda + \ii \mu \in \sigma(H)
\end{align*}
must lie in the spectrum. Visually, the spectrum is symmetric by reflection about the imaginary axis. 

Repeating this argument 7 more times gives all the other cases; I have listed them all in the last column of Table~\ref{symmetries:table:overview_symmetries}. All symmetries come in “spectral symmetry pairs”, \eg the presence of chiral and $\pm \mathrm{PH}^{\dagger}$ symmetries lead to the same symmetry in the spectrum $\sigma(H)$. 

\subsection{Point gap vs.\ line gaps} 
\label{38_fold_recap:point_gap_vs_line_gaps}
One of the central points of \cite{Kawabata_Shiozaki_Ueda_Sato:classification_non_hermitian_systems:2019} is that the classification crucially depends on the type of spectral gap that one chooses to preserve during deformations. This becomes necessary, because non-hermitian operators may have complex spectrum. Deformations of operators may move the spectrum in the complex plane $\C \simeq \R^2$, \ie a space with two real dimensions. So I can think of several types of obstacles, $0$- and $1$-dimensional obstacles (with codimensions $2$ and $1$, respectively). 

In contrast, hermitian operators have real spectrum, the real line $\R$ is $1$-dimensional and gaps can only be $0$-dimensional (codimension $1$). Nevertheless, in both (hermitian \cite{Altland_Zirnbauer:superconductors_symmetries:1997} and non-hermitian \cite{Kawabata_Shiozaki_Ueda_Sato:classification_non_hermitian_systems:2019,Zhou_Lee:non_hermitian_topological_classification:2019}) formalisms a $1$-codimensional barrier appears in the classification of hermitian operators. 
\begin{figure}
	\begin{centering}
		\resizebox{70mm}{!}{\includegraphics{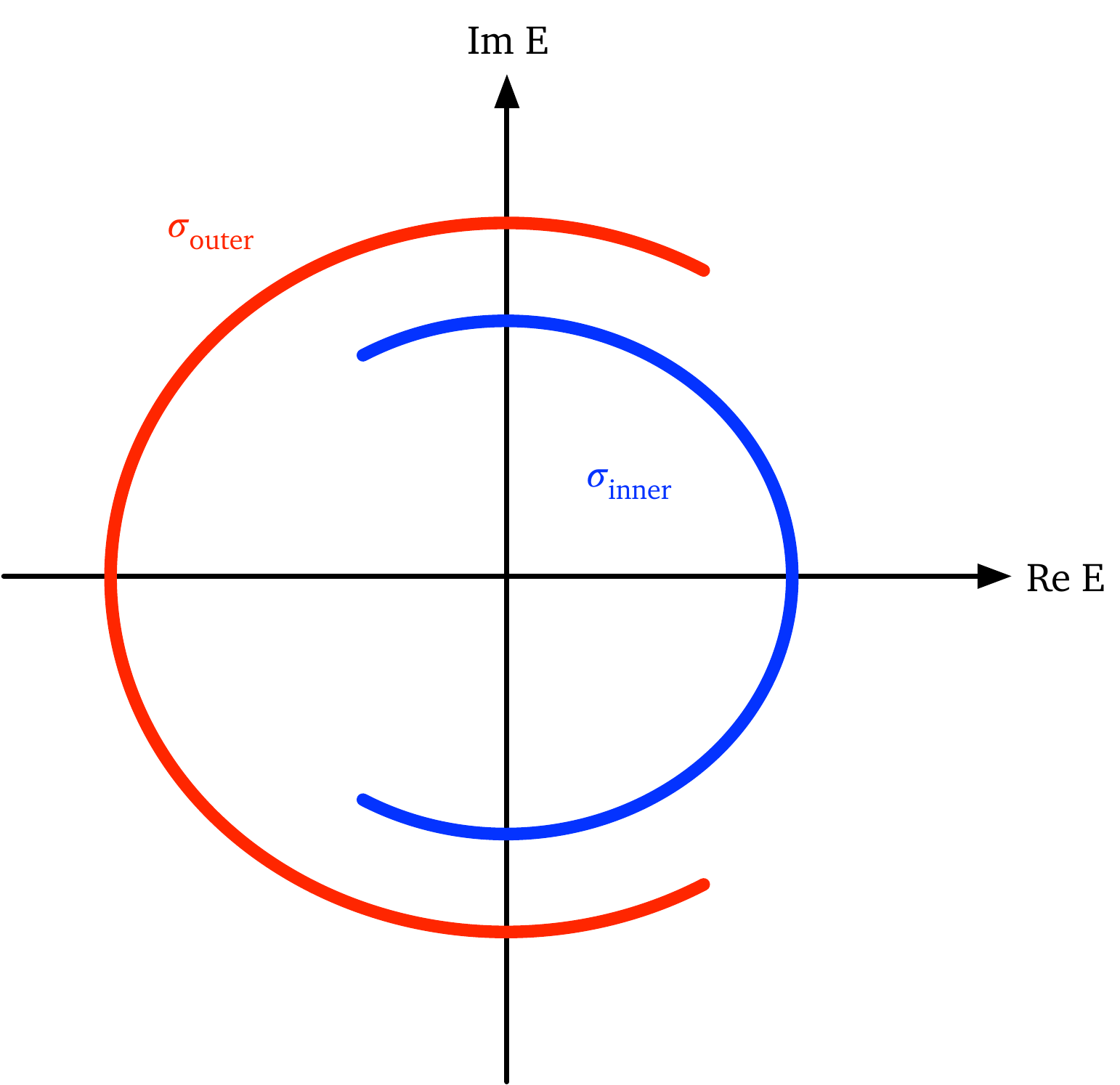}}
	\end{centering}
	\caption{Spectrum with reflection symmetries about the real axis. While this spectrum consists of two distinct subsets that are separated by a gap, unlike the spectrum in Figure~\ref{intro:figure:highly_symmetric_spectrum}, this spectrum does not possess a real line gap in the sense of \cite{Kawabata_Shiozaki_Ueda_Sato:classification_non_hermitian_systems:2019}. Nevertheless, my construction works and the topological classification corresponds to that of a real line gap. }
	\label{38_fold_recap:figure:nested_Cs_spectrum}
\end{figure}

\subsubsection{Point gap} 
\label{38_fold_recap:point_gap_vs_line_gaps:point_gap}
First of, they assume that the relevant spectral gap always includes the point $E = 0$. And indeed, the simplest type of gap is a so-called \emph{point gap}, which assumes that I can draw a disc of positive radius around $E = 0$ and not meet any spectrum of $H$. For example, the spectra in Figures~\ref{intro:figure:highly_symmetric_spectrum} and \ref{38_fold_recap:figure:nested_Cs_spectrum} possess point gaps. Here, the gap condition means I am imposing that during deformations $H$ remains a bounded operator with \emph{bounded} inverse. 

For operators with point gaps, Kawabata et al.\ exploit the polar decomposition of operators 
\begin{align}
	H = V_H \, \sabs{H} 
	, 
	\label{38_fold_recap:eqn:polar_decomposition}
\end{align}
where $V_H$ is unitary “phase” and $\sabs{H} = \sqrt{H \, H^{\dagger}}$ is the hermitian absolute value. They exploit that invertibility and unitarity are topologically speaking equivalent (read: homotopic) in many contexts, including \eg Kuiper's Theorem \cite{Kuiper:U_infty_contractible:1965}. So rather than study subgroups of the general linear group $\mathrm{GL}(\Hil)$ composed of bounded invertible operators with bounded inverse, it suffices to study subgroups of the unitary operators $\mathrm{U}(\Hil)$. 

Figure~2~(b) in \cite{Kawabata_Shiozaki_Ueda_Sato:classification_non_hermitian_systems:2019} nicely illustrates how to graphically obtain the spectrum of the unitary from the spectrum of the original operator. Note that unitary operators are normal, \ie $U$ commutes with $U^{\dagger}$, so they behave much like hermitian operators and can always be diagonalized; and their spectrum is always a subset of the unit circle $\Sone \subseteq \C$. Nevertheless, since the polar decomposition is not unique (\cf the discussion in Section~\ref{diagonalizable_operators:the_upshot}), this “graphical construction” is not telling the whole story and could lead to issues when $H$ is not diagonalizable. 

\subsubsection{Line gaps} 
\label{38_fold_recap:point_gap_vs_line_gaps:line_gaps}
The other type of gap that Kawabata et al.\ introduce are line gaps, where an infinite line drawn through the origin separates two spectral regions from one another. The presence of a line gap forbids rotations of spectrum in the complex plane, each spectral region is confined to its side of the dividing line. Real and imaginary line gaps are of particular importance, \ie the cases where I can choose the imaginary and real axis (the order is reversed!) as the dividing line. The operator with spectrum given by Figure~\ref{intro:figure:highly_symmetric_spectrum} has a real \emph{and} imaginary line gap. In contrast, the operator from Figure~\ref{38_fold_recap:figure:nested_Cs_spectrum} has only a point gap but in Kawabata et al.'s definition neither a real nor an imaginary line gap. 

Operators with line gaps can be deformed — spectrally flattened — into hermitian or antihermitian operators where the spectrum on either side of the dividing line eventually coalesces at either $\pm 1$ or $\pm \ii$. These spectrally flattened hamiltonians $Q$ can then be considered as a grading on the Hilbert space $\Hil = \mathcal{E}_+ \oplus \mathcal{E}_-$ with additional symmetries, where $\mathcal{E}_{\pm} = \mathrm{Eig}(Q,\pm 1)$ or $\mathcal{E}_{\pm} = \mathrm{Eig}(Q,\pm \ii)$ (depending on the line gap type). 

The reason why real and imaginary line gap classifications differ from one another is that the relevant symmetries will either respect the line gap or break it. For example, consider an operator with a time-reversal symmetry $T$, a particle-hole symmetry $C$ and a chiral symmetry $S$ that possesses a real line gap. Indeed, it may have the spectrum pictured in Figure~\ref{intro:figure:highly_symmetric_spectrum}. The time-reversal symmetry only flips the sign of the imaginary part of the eigenvalue, so it maps the spectral region to the right of the imaginary axis (where $\Re E > 0$) onto itself. Particle-hole and chiral symmetries, though, map states with $\Re E > 0$ onto states with $\Re E < 0$. Had I chosen the imaginary line gap (the real axis $\Im E = 0$), then time-reversal and particle-hole symmetries have opposite behaviors — $T$ maps $\Im E > 0$ onto $\Im E < 0$ states whereas $C$ preserves $\pm \Im E > 0$. Put another way, symmetries may either be even or odd with respect to the grading, depending on whether they map the subspaces $\mathcal{E}_{\pm}$ onto themselves (even) or onto $\mathcal{E}_{\mp}$ (odd). 

\subsection{Classification through the extended operator} 
\label{38_fold_recap:extended_operator}
The starting point of their topological classification is the homotopy definition of topological phases. Rather than homotopies $H(\lambda)$, Kawabata et al.\ consider homotopies on the level of the associated \emph{extended Hamilton operators} 
\begin{align}
	\widetilde{H}(\lambda) = \left (
	\begin{matrix}
		0 & H(\lambda) \\
		H(\lambda)^{\dagger} & 0 \\ 
	\end{matrix}
	\right )
	= \widetilde{H}(\lambda)^{\dagger}
	, 
	\label{38_fold_recap:eqn:extended_operator}
\end{align}
which act on the \emph{extended Hilbert space} $\widetilde{\Hil} = \Hil \oplus \Hil$. The overarching idea is that homotopies $H(\lambda)$ are in one-to-one correspondence with homotopies of extended operators $\widetilde{H}(\lambda)$, both subject to symmetry constraints and suitable gap conditions. By design, the extended operator is always hermitian, and thus, diagonalizable — independently of whether $H$ is. 

As a consequence of the doubling, a chiral constraint $\widetilde{\Gamma} = \sigma_3 \otimes \id_{\Hil}$ emerges, 
\begin{align*}
	\widetilde{\Gamma} \, \widetilde{H} \, \widetilde{\Gamma} = - \widetilde{H} 
	. 
\end{align*}
(Anti)commuting symmetries of $H$ can be extended to (anti)commuting symmetries of the hermitian extended operator (as defined in equations~(30)–(35) in \cite{Kawabata_Shiozaki_Ueda_Sato:classification_non_hermitian_systems:2019}). When they relate $H$ to itself, they are block-diagonal, $\dagger$-symmetries are implemented block-offdiagonally. That is captured by whether these extended symmetries commute (block-diagonal) or anticommute (block-offdiagonal) with the chiral constraint $\widetilde{\Gamma}$. 

The classification procedure of Kawabata et al.\ can now be summarized as follows: 
\begin{enumerate}[(1)]
	\item List all symmetries of $H$ and choose the appropriate gap-type (point gap or, if applicable, real, imaginary or generic line gap) for the classification. 
	\item Depending on the gap type “normalize” the operator $H$ (or, equivalently, $\widetilde{H}$). That is, continuously deform $H$ to a unitary (point gap) or an (anti)hermitian spectrally flattened hamiltonian $Q^{\dagger} = \pm Q$ (real ($+$), imaginary ($-$) or generic ($\pm$) line gap). 
	\item Classify the normalized operator with existing theory for unitary or hermitian operators with symmetries; Kawabata et al.\ use twisted equivariant $K$-theory developed by Freed and Moore \cite{Freed_Moore:twisted_equivariant_matter:2013,Gomi:twisted_equivariant_K_theory:2017}. (When the spectrally flattened hamiltonian $Q^{\dagger} = -Q$ is antihermitian, then the hermitian operator $\ii Q = (\ii Q)^{\dagger}$  is classified.)
	\item \emph{By definition,} the topological phase of $H$ is the topological phase of the normalized operator. 
\end{enumerate}
The details for point and line gap classifications are explained in Sections~IV.A.\ and IV.B.\ in \cite{Kawabata_Shiozaki_Ueda_Sato:classification_non_hermitian_systems:2019}. 

To better see the links between \cite{Kawabata_Shiozaki_Ueda_Sato:classification_non_hermitian_systems:2019} and this work, 
I will point out two pertinent facts: in case of a point gap, the deformation that Kawabata et al.\ describe does nothing more than map $H$ onto its phase operator,  
\begin{align*}
	H = H_0 = V_H \, \sabs{H} \mapsto H_1 = V_H 
	. 
\end{align*}
This also works perfectly well when $H$ is not diagonalizable, but importantly, in that case $V_H$ and $\sabs{H}$ necessarily do \emph{not} commute. So I need not make a homotopy argument and can define the “normalized” operator directly. 

Similarly, the homotopy argument can be skipped for the line gap classifications, at least when $P_{\mathrm{rel}}$ can be defined by means of \eg the contour integral~\eqref{intro:eqn:definition_P_rel}: after getting rid of the factor $\pm \ii$ when necessary (\eg in the imaginary line gap case), the (now hermitian) spectrally flattened hamiltonian 
\begin{align*}
	Q = \id_{\Hil} - 2 P_{\mathrm{rel}}' 
\end{align*}
is related to the \emph{unique} orthogonal projection ${P_{\mathrm{rel}}'}^2 = P_{\mathrm{rel}}' = {P_{\mathrm{rel}}'}^{\dagger}$ that maps onto the subspace 
\begin{align*}
	\Hil_{\mathrm{rel}} \overset{\mathrm{def}}{=} \ran P_{\mathrm{rel}} 
	= \ran P_{\mathrm{rel}}' 
	. 
\end{align*}
In general, $P_{\mathrm{rel}} \neq P_{\mathrm{rel}}'$ will disagree even though both are \emph{oblique} projections onto the same subspace since $P_{\mathrm{rel}}$ is hermitian exactly when $H$ is normal. 

These two insights will lead to straight-forward generalizations of Theorems~1 and 2 of \cite{Kawabata_Shiozaki_Ueda_Sato:classification_non_hermitian_systems:2019} to operators which lack periodicity, and could be useful to include disorder and models for aperiodic materials \cite{Bourne_Prodan:Chern_numbers_aperiodic_systems:2018}. 
\section[Why diagonalizability matters for the topological classification]{Why diagonalizability matters \linebreak for the topological classification} 
\label{why_diagonalizability_matters}
At first glance, Kawabata et al.'s classification procedure directly applies to non-diagonalizable operators. Unfortunately, upon closer inspection one finds that this classification scheme is inconsistent unless one insists that all operators are diagonalizable and remain diagonalizable during all deformations. The purpose of this section is to first give a simple counterexample that shows the inconsistency and then unpack the mathematical mechanism behind it. 

Generic non-hermitian, \ie non-diagonalizable operators are not as nicely behaved as hermitian and unitary operators, which will matter when I want to classify non-hermitian topological insulators. And because much of our intuition for the behavior of operators is developed from the study of matrices, that is operators on finite-dimensional vector spaces as well as hermitian and unitary operators. No doubt this is thanks to the emphasis placed on quantum mechanics in the education of physicists.

\subsection[Spectral gaps of non-diagonalizable operators may suddenly close]{Spectral gaps of non-diagonalizable \linebreak operators may suddenly close} 
\label{why_diagonalizability_matters:spectral_gaps_may_close_suddenly}
While this first point is not a shortcoming of \cite{Kawabata_Shiozaki_Ueda_Sato:classification_non_hermitian_systems:2019}, I believe it could nevertheless be very relevant in practical applications. Suppose $H(\lambda)$ is a non-hermitian operator that depends continuously on the parameter $\lambda$. The spectrum $\sigma \bigl ( H(\lambda) \bigr )$ given by~\eqref{38_fold_recap:eqn:definition_spectrum} is the generalization of the set of eigenvalues for operators defined on infinite-dimensional Hilbert spaces $\Hil$ such as $\ell^2(\Z^d)$. 

\begin{figure*}
	\begin{centering}
		\resizebox{140mm}{!}{\includegraphics{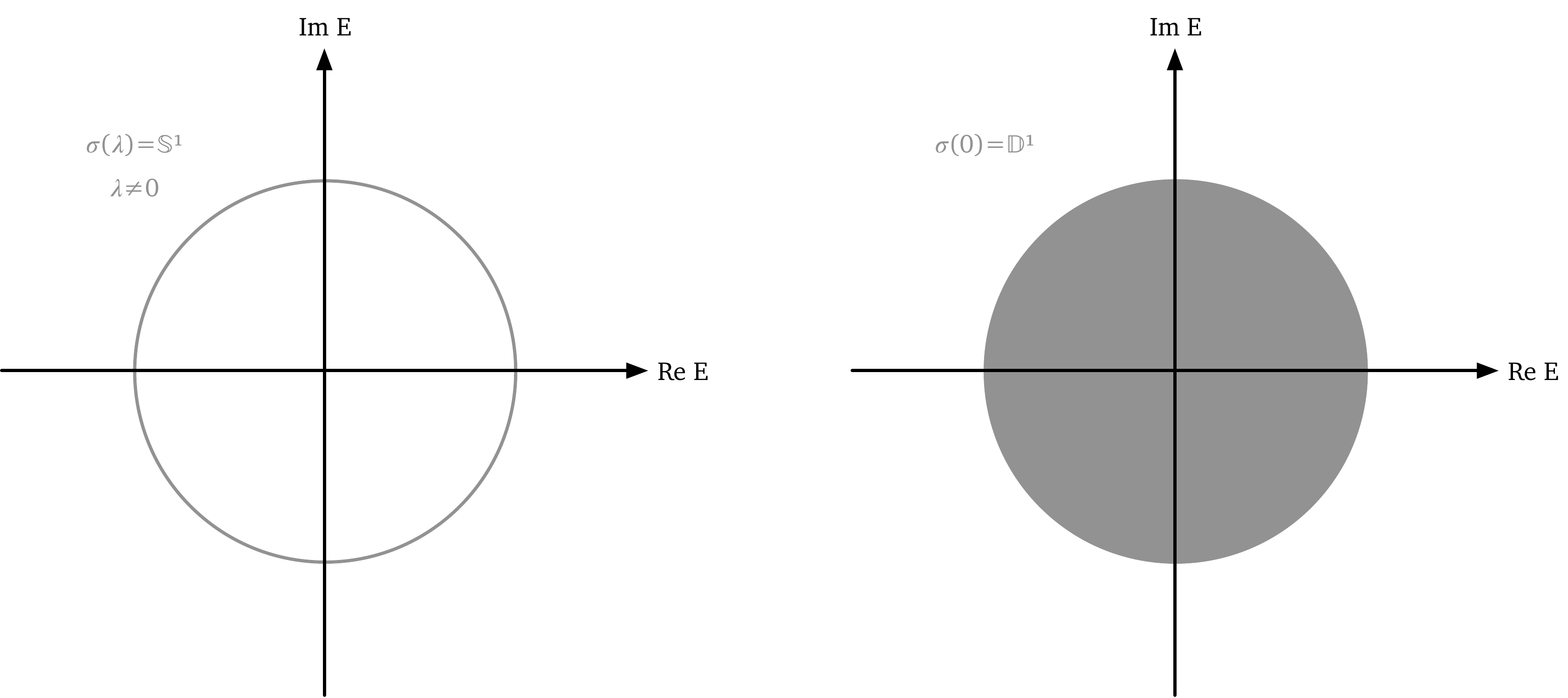}}
	\end{centering}
	\caption{The spectrum of generic non-hermitian operators need not be inner continuous: even when $H(\lambda)$ is perturbed in a continuous fashion, spectrum may suddenly appear. Kato gives an explicit example where for $\lambda \neq 0$ the spectrum is the unit circle $\Sone$; when $\lambda = 0$ the spectrum is the entire unit disc $\mathbb{D}^1$. Kato's example shows that spectral gaps may collapse without warning.}
	\label{intro:figure:highly_symmetric_spectrum}
\end{figure*}

One may expect that the spectrum $\sigma \bigl ( H(\lambda) \bigr )$ depends continuously on $\lambda$ (as is the case for hermitian and indeed, diagonalizable operators), but this is \emph{false.} Kato gives an explicit example in \cite[Chapter~IV, §3, pp.~208–210]{Kato:perturbation_theory:1995} of his excellent text book; the relevant operator is a modification of the unitary shift operator on $\Z$ at a single lattice site. 

Mathematically, one distinguishes between two types of continuities of sets: \emph{outer}
continuity translates to \emph{“spectrum may not suddenly disappear”} whereas \emph{inner} continuity can be thought of as \emph{“stability of gaps”}; the interested reader may look up the precise mathematical definitions in Appendix~\ref{appendix:continuity_spectra}. For non-hermitian operators only outer continuity of the spectrum is guaranteed (\cf \cite[Chapter~IV, §3, 1, Theorem~3.1]{Kato:perturbation_theory:1995}), but when $\Hil$ is infinite-dimensional spectral gaps may suddenly collapse (\cf \cite[Chapter~IV, §3, 2,  pp.~209–210]{Kato:perturbation_theory:1995}). In Kato's example of the perturbed shift operator, away from the bad point the spectrum of the perturbed shift is a subset of the unit circle $\Sone$, but at the singular point where the operator fails to be diagonalizable the spectrum becomes the entire unit disc. This example illustrates that I may not always be able to anticipate the closing of a spectral gap by looking at the spectra near the point where $H(\lambda)$ fails to be diagonalizable. 

In contrast, during deformations in the set of diagonalizable operators, which includes hermitian and unitary operators, the spectra are outer \emph{and} inner continuous. 

Consequently, unless I can categorically exclude the presence of Jordan blocks (\eg by having model operators that do not have band degeneracies or band crossings), this complicates the topological classification in two ways. While the mathematical definition of a topological phase is unaffected by this — points where the relevant spectral gap closes are excluded by definition, it can nevertheless affect numerical studies of tight-binding toy models. Computers can only cover finitely many parameter values, and I may miss the points where the spectral gap suddenly closes. So it may look as if two operators lie inside the same topological phase, because I cannot be sure that the gap closing is anticipated by a continuous shrinking of the gap. 

But secondly, even if the relevant spectral gap is unaffected, the relevant spectrum may change all of a sudden. That leads us to the next point. 

\subsection{Deformations of spectral projections and spectrally flattened hamiltonians need not be continuous} 
\label{why_diagonalizability_matters:spectral_projections_spectrally_flattened_hamiltonians}
We have just seen that even if $H(\lambda)$ depends continuously on the parameter, the spectrum can suddenly change in a discontinuous fashion at points where $H(\lambda)$ is not diagonalizable. Similarly, other operators constructed from $H(\lambda)$ need not be continuous at points where $H(\lambda)$ is not diagonalizable. And this goes directly to the heart of the matter. 

To illustrate this point, suppose I am interested in the states with eigenvalues less than $0$ of 
\begin{align*}
	H(\lambda) &= \left (
	\begin{matrix}
		1 & 0 & 0 \\
		0 & -2 - \lambda & 1 \\
		0 & 0 & -3 + \lambda \\
	\end{matrix}
	\right )
\end{align*}
when $\sabs{\lambda} \leq 1$. Let me denote the projection onto the associated (proper!) eigenspaces with $P_{\mathrm{rel}}(\lambda)$. Away from $\lambda = \nicefrac{1}{2}$ the matrix $H(\lambda)$ has three distinct eigenvalues, and the range of $P_{\mathrm{rel}}(\lambda)$ is two-dimensional. At $\lambda \neq \nicefrac{1}{2}$ the matrix is not diagonalizable, and I need to distinguish between the algebraic multiplicity of the eigenvalue $E = - 2 \nicefrac{1}{2}$, which is $2$, and the geometric multiplicity, \ie the dimensionality of the eigenspace, which is $1$. For all other values of $\lambda$, the sums of the algebraic and geometric multiplicities for the negative eigenvalues agree and are $1 + 1 = 2$. That means the rank of the projection $P_{\mathrm{rel}}(\lambda)$ changes across the path, and $P_{\mathrm{rel}}(\lambda)$ is discontinuous at $\lambda = \nicefrac{1}{2}$. 

When $\lambda \neq \nicefrac{1}{2}$ there are several equivalent ways to define the relevant projection $P_{\mathrm{rel}}(\lambda)$. For example, I can use the Cauchy integral~\eqref{intro:eqn:definition_P_rel} where the contour only encloses the two negative eigenvalues. The discontinuity of $P_{\mathrm{rel}}(\lambda)$ at $\lambda = \nicefrac{1}{2}$ is due to a \emph{higher-order pole} of the resolvent at $z = -2 \nicefrac{1}{2}$; the resolvent operator $\bigl ( H(\lambda) - z \bigr )^{-1}$ stays perfectly continuous in $\lambda$, though. 

Consequently, also the spectrally flattened hamiltonian 
\begin{align*}
	Q(\lambda) &= \bigl ( \id_{\C^3} - P_{\mathrm{rel}}(\lambda) \bigr ) - P_{\mathrm{rel}}(\lambda) 
	\\
	&= \id_{\Hil} - 2 P_{\mathrm{rel}} 
\end{align*}
has a discontinuity at $\lambda = \nicefrac{1}{2}$. 

The critically-minded reader may object that I am using the “wrong” definition for $P_{\mathrm{rel}}(\lambda)$ at $\lambda = \nicefrac{1}{2}$. They have a point: given that I am dealing with a $3 \times 3$ matrix, I can of course fix the discontinuity by \emph{defining} $P_{\mathrm{rel}}(\nicefrac{1}{2})$ to be the projection onto the \emph{generalized} eigenspace for the eigenvalue $E = - 2 \nicefrac{1}{2}$. However, there exists no such simple, generic fix that applies to arbitrary non-diagonalizable operators on infinite-dimensional Hilbert spaces, especially if I want to include random operators that model systems with disorder. So unless I add extra assumptions on my class of operators, there is no unambiguous way to define $P_{\mathrm{rel}}$ when $H$ is not diagonalizable. And this leads to discontinuities when deforming $P_{\mathrm{rel}}$ across regions of non-diagonalizability. 

A second valid objection is that my example clearly does not have any meaningful topology associated with it. But I can easily construct a $k$-dependent operator to model a periodic system with non-trivial topology that features a Jordan block for some values of $k$ and $\lambda$. Once I break periodicity by including disorder or some other perturbation, it becomes hard to judge whether an operator even \emph{is} diagonalizable in the first place. Also here spectral projections such as $P_{\mathrm{rel}}(\lambda)$ and the spectrally flattened Hamiltonian $Q(\lambda)$ need not depend continuously on $\lambda$ at points where $\sigma_{\mathrm{rel}}$ contains Jordan blocks (and therefore, $H(\lambda)$ is not diagonalizable). In contrast, a commonly used, \emph{sufficient} criterion for periodic operators to be diagonalizable is to insist that all eigenvalues of $H(k)$ are non-degenerate, \ie none of the bands cross and they are not degenerate. 

That directly impacts any topological classification procedure. Most classification techniques do not solely rely on homotopies $H(\lambda)$ of the hamiltonian, but also of \emph{derived quantities} in their definition of topological class. For the two complex Cartan-Altland-Zirnbauer classes, in the simplest $K$-theoretic approach I need to deal with the $K$-groups $K^0(\T^d)$ and $K^1(\T^d)$ (\cf \eg \cite{Prodan_Schulz_Baldes:complex_topological_insulators:2016}). The topology of the system is then encoded in the equivalence class of the relevant projection $[P_{\mathrm{rel}}(\lambda)]_0 \in K^0(\T^d)$ and equivalence class of relevant unitary $[U(\lambda)]_1 \in K^1(\T^d)$. When $H(\lambda)$ is hermitian, $P_{\mathrm{rel}}(\lambda)$ and $U(\lambda)$ depend continuously on the parameter — and therefore, continuous deformations cannot change the equivalence class. When $H(\lambda)$ is not diagonalizable, this is false, continuity of the derived operators $P_{\mathrm{rel}}(\lambda)$ and $U(\lambda)$ is \emph{not} automatic. Instead, this is something that needs to be checked on a case-by-case basis. Below, I will discuss that these issues persist for twisted equivariant $K$-groups \cite{Freed_Moore:twisted_equivariant_matter:2013}. 

\subsection{Formulas for topological invariants may be ill-defined} 
\label{why_diagonalizability_matters:topological_invariants}
That explains why \emph{formulas} for topological invariants may be ill-defined at points where $H(\lambda)$ is not diagonalizable. Of course, this is well-known in the literature (\cf \eg the orange region in \cite[Figure~4]{Yokomizo_Murakami:non_Bloch_band_theory_non_hermitian_systems:2019}). But there are derivations in existing works which fail in subtle ways in case $\sigma_{\mathrm{rel}}$ contains Jordan blocks; an example can be found in \cite[Appendix~H.1]{Kawabata_Shiozaki_Ueda_Sato:classification_non_hermitian_systems:2019} that relates Chern classes to the resolvent operator (whose operator kernel is the Green's function) via a version of the Riesz-Dunford formula~\eqref{intro:eqn:definition_P_rel}. At parameter values where the hamiltonian has a Jordan block inside the relevant spectrum (in this case states with $\Re E < 0$), the complex integral is no longer well-defined since the resolvent has higher-order poles in the complex plane. At these singular points in parameter space the projection given by \cite[equation~(H9)]{Kawabata_Shiozaki_Ueda_Sato:classification_non_hermitian_systems:2019} can no longer be expressed as an integral in the complex plane, and in principle, the value of the various Chern numbers can change. 

A subtle, but important question is whether it is the topological \emph{classification} that fails at these points or if it is only the \emph{formulas} for the invariants that fails. That subtle distinction sometimes arises, because the abstract definition of topological invariants typically only requires continuity rather than smoothness or analyticity. Formulas that compute these topological invariants, however, often involve quantities from differential geometry (\eg connections and curvatures) that are only defined once I can guarantee differentiability or smoothness. On balance, I think the evidence points to the \emph{classification failing.} 

One way to introduce Chern numbers in periodic class~A systems is as topological invariants that characterize vector bundles up to isomorphisms, which are obtained by gluing the subspaces $\ran P_{\mathrm{rel}}(k)$ together in a \emph{continuous} fashion. The formula for the Chern number requires us to compute first-order derivatives of $P_{\mathrm{rel}}(k)$, and if all I require is continuity, then this is not necessarily a given. In the context of physical systems, $P_{\mathrm{rel}}(k)$ is usually even analytic in $k$ so this is almost never a problem in practice; although there are systems with slowly decaying, long-range interactions \cite{Aharony_Fisher:critical_behavior_dipolar_interactions:1973,Yamamoto_et_al:topological_origin_magnetostatic_waves:2019} that are the exception to the rule. Conceptually, though, it is nevertheless true that the classification is well-defined for projections that are only continuous in $k$ even though the standard formulae require differentiability; mathematically, the fact that continuous and analytic equivalence of vector bundles are one and the same is known as the \emph{Oka principle} (\cf \eg \cite[p.~268, Satz~I]{Grauert:analytische_Faserungen:1958} and \cite[Section~II.F]{DeNittis_Lein:exponentially_loc_Wannier:2011}). 

\subsection{For some simple example systems regions with Jordan blocks are topologically non-trivial obstacles} 
\label{why_diagonalizability_matters:jordan_blocks_matter_example}
There is at least one work, which explicitly identifies parameter regions where the hamiltonian becomes non-diagonalizable as a topologically non-trivial obstacle to deformations. Wojcik et al.'s approach \cite{Wojcik_Sun_Bzdusek_Fan:topological_classification_non_hermitian_hamiltonians:2020} is as simple as it is elegant: they opt to work with the homotopy definition of topological phases directly and compute the homotopy groups for periodic two-band class~A hamiltonians in $d = 2$ explicity. (Usually this is the insurmountable obstacle and the reason why few works take the direct approach: unlike, say, $K$-groups homotopy groups are not algorithmically computable.) Energy bands of non-hermitian operators can be braided in the complex plane, and Wojcik et al.\ show from first principles that I can classify hamiltonians in terms of the \emph{non-abelian} braid group (\cf Section~IV and Fig.~5) and its interaction with the standard hermitian invariants. While Wojcik et al.\ explain how to generalize their classification to $N > 2$ bands, incorporating symmetries seems more ambitious — at least if I want to compute the relevant homotopy groups explicitly. 

All in all, their results suggest two important conclusions: first of all, for the purpose of topological classifications regions where $H$ acquires Jordan blocks are obstacles like the gap-closing regions. Crossing any of these forbidden regions allows us to change the topological phase. 

A second important conclusion from their result is that any topological classification of non-hermitian — indeed, non-hermitian, \emph{diagonalizable} operators in terms of \emph{commutative} groups do not necessarily capture all topological features of the system. Since $K$-groups are commutative by design, it seems that \emph{any} approach to classify suitable non-hermitian operators relying solely on $K$-groups cannot resolve all features of the various inequivalent topological phases in a topological class. 

\subsection{Inconsistency with the classification in Kawabata et al.\ when including non-diagonalizable operators} 
\label{why_diagonalizability_matters:gap_Kawabata_classification}
As explained in Section~\ref{38_fold_recap:extended_operator} Kawabata et al.\ \emph{by definition} identify the topological phase of a non-hermitian operator with the topological phase of a “normalized” operator $\hat{H}$ that is homotopic to $H$. Depending on the gap type, this normalized operator is a unitary in the point gap case and a spectrally flattened hamiltonian (up to possibly a factor of $\pm \ii$) when considering line gaps. Implicit in this definition is the claim that the classification map 
\begin{align*}
	H \mapsto [H] \overset{\mathrm{def}}{=} [\hat{H}] 
\end{align*}
which assigns to each operator the topological phase (homotopy equivalence class) $[H]$ is well-defined. Well-definedness is a term of art in mathematics and means that the object being defined exists and its definition is self-consistent. In the context of homotopy equivalence classes it means that $[H] = [\hat{H}]$ must be independent of the path in operator space (homotopy) connecting $H$ with its normalized operator $\hat{H}$. What is more, since when $H$ is diagonalizable the normalized operators can be obtained directly, the classification map needs to be consistent with that as well.

\subsubsection{Spectral splitting that enters the classification technique in \cite{Kawabata_Shiozaki_Ueda_Sato:classification_non_hermitian_systems:2019} need not be continuous} 
\label{why_diagonalizability_matters:gap_Kawabata_classification:discontinuities_spectral_splitting}
To locate the exact spot where things go wrong, one really needs to go into the nitty-gritty. Kawabata et al.\ classify hamiltonians by computing twisted equivariant $K$-groups, which were developed by Freed and Moore \cite{Freed_Moore:twisted_equivariant_matter:2013,Gomi:twisted_equivariant_K_theory:2017}. These $K$-groups characterize spectrally flattened hamiltonians with symmetries and not only require a hermitian operator $H$ as initial data, but \emph{also a splitting of the Hilbert space} $\Hil = \Hil_+ \oplus \Hil_-$ (\cf \cite[Chapter~10]{Freed_Moore:twisted_equivariant_matter:2013}). In the context of condensed matter physics, the splitting corresponds to valence and conduction bands; put another way, the “negative energy” subspace $\Hil_- = \ran P_{\mathrm{rel}}$ is the range of the relevant projection. Hence, twisted equivariant $K$-theory not only requires the continuity of $H(\lambda)$ in the deformation parameter $\lambda$, but also the relevant projection $P_{\mathrm{rel}}(\lambda)$ (or, equivalently, $\Hil = \Hil_+(\lambda) \oplus \Hil_-(\lambda)$). 

For the sake of clarity, let me focus on the real line gap classification. Although my arguments can easily be modified for the other gap-type classifications. Given that the classification is supposed to hold for non-hermitian operators, it must also apply to normal operators. So from now on let me assume that $H$ is normal, \ie $[H , H^{\dagger}] = 0$. Consequently, any spectral projection, including $P_{\mathrm{rel}} = P_{\mathrm{rel}}^{\dagger}$, is \emph{hermitian} out of the box. 

Kawabata et al.\ construct the “normalized” operator, \ie the spectrally flattened hamiltonian $Q$ (\cf \cite[Figure~2~(c)]{Kawabata_Shiozaki_Ueda_Sato:classification_non_hermitian_systems:2019}), by means of a homotopy. Alternatively, I can tread the path from Section~\ref{38_fold_recap:extended_operator} and define $P_{\mathrm{rel}}$ and therefore, $Q = \id_{\Hil} - 2 P_{\mathrm{rel}} = Q^{\dagger}$ directly, \eg as a contour integral~\eqref{intro:eqn:definition_P_rel} or via functional calculus. In this case, $H$ and $Q$ need to lie in the same topological phase. 

Now assume that I am given two normal operators $H_0$ and $H_1$ that are homotopic in the larger set of non-hermitian operators, but \emph{not} homotopic in the smaller set of diagonalizable operators (each with the relevant symmetries). In that case, $P_{\mathrm{rel}}(\lambda)$ and $Q(\lambda)$ are not even well-defined for values of $\lambda$ where $H(\lambda)$ is not diagonalizable (\cf Section~\ref{why_diagonalizability_matters:spectral_projections_spectrally_flattened_hamiltonians}). That means I have no idea whether the topological phase of $Q_0 = Q(0)$ and $Q(1) = Q_1$ are one and the same or not! The example covered in Section~\ref{why_diagonalizability_matters:jordan_blocks_matter_example} not only tells us that this really occurs in practice, but that there are examples when $Q_0$ and $Q_1$ must lie in different topological phases as they are labeled by different topological invariants. 

However, if I impose that homotopies have to lie in the smaller set of \emph{diagonalizable} operators, order in the universe is restored: as long as the homotopy $H(\lambda)$ stays diagonalizable, the corresponding relevant projection $P_{\mathrm{rel}}$ and the spectrally flattened operator $Q(\lambda)$ inherit the continuity in $\lambda$. For diagonalizable operators, this definition is consistent. 

\subsubsection{Existence of different normalized operators} 
\label{why_diagonalizability_matters:gap_Kawabata_classification:consistency_different_normalizations}
The inconsistency covered in the previous subsection is conclusive. Nevertheless, it helps to look at it from yet another perspective. Namely, the normalized operator is \emph{not unique}, and therefore to ensure well-definedness, one would have to prove that the topological phase is independent of our choice of normalization. 

Mathematically speaking, I have a choice of scalar product even when I insist that symmetries need to be implemented (anti)unitarily; I will give an explicit example from electromagnetism where this occurs below. In class~A, \ie in the absence of any symmetries, there are as many such decompositions as there are scalar products on my vector space $\Hil$. That immediately raises the question: if 
\begin{align*}
	H &= V_H \, \sabs{H} 
	= V_H' \, \sabs{H}' 
	\label{why_diagonalizability_matters:eqn:two_polar_decompositions}
\end{align*}
are two polar decompositions of my operator where \eg $V_H$ and $V_H'$ are unitary with respect to $\scpro{\varphi}{\psi}$ and the second scalar product 
\begin{align}
	\scpro{\varphi}{\psi}' = \scpro{\varphi}{W \, \psi} 
	, 
	\label{why_diagonalizability_matters:eqn:non_prime_prime_scalar_products}
\end{align}
which I assume are related by a bounded operator $W$ with bounded inverse. 

The above formula~\eqref{why_diagonalizability_matters:eqn:two_polar_decompositions} directly shows that $H$, $V_H$ and $V_H'$ are all homotopic. So for the point gap classification to be self-consistent (\ie well-defined), on would need to show that the topological classes of $V_H$ and $V_H'$ must always agree. 

Of course, I can play this game with spectrally flattened hamiltonians, too, which enter the line gap classifications. Ignoring the factor $\pm \ii$, my choice of scalar product selects one of the two spectrally flattened hamiltonians, 
\begin{align*}
	Q &= \id_{\Hil} - 2 P_{\mathrm{rel}} 
	= Q^{\dagger} 
	, 
	\\
	Q' &= \id_{\Hil} - 2 P_{\mathrm{rel}}' 
	= {Q'}^{\dagger'} 
	, 
\end{align*}
both of which are defined in terms of two different relevant projections. The condition that their ranges agree,  
\begin{align*}
	\ran P_{\mathrm{rel}} = \Hil_- = \ran P_{\mathrm{rel}}' 
	,
\end{align*}
does not uniquely single out one of them, though. Indeed, there are \emph{infinitely many} oblique projections onto a given subspace $\Hil_-$. However, once I ask the projection to be orthogonal does this association become unique, and I am led to either $P_{\mathrm{rel}} = P_{\mathrm{rel}}^{\dagger}$ or $P_{\mathrm{rel}}' = {P_{\mathrm{rel}}'}^{\dagger'}$, depending on my choice of scalar product. 

By the same token, for each scalar product I obtain an extended hamiltonian 
\begin{align*}
	\widetilde{H}' \overset{\mathrm{def}}{=} \left (
	\begin{matrix}
		0 & H \\
		H^{\dagger'} & 0 \\
	\end{matrix}
	\right )
	= \widetilde{H}^{\prime \; \dagger'} 
\end{align*}
that differs from $\widetilde{H}$ only by how I “hermitianize” the operator $H$. Again, the unanswered question is whether the topological classifications of $\widetilde{H}$ and $\widetilde{H}'$ must always agree. 

Even if I add symmetries to the discussion and insist that the symmetries must be (anti)unitary in any scalar product I use, I still have plenty of scalar products to choose from. In fact, any weight for \eqref{why_diagonalizability_matters:eqn:non_prime_prime_scalar_products} that commutes with all the symmetries $[W , U_{(\dagger)}] = 0$ will do. Then equation~\eqref{why_diagonalizability_matters:eqn:non_prime_prime_scalar_products} gives me a second scalar product with respect to which all symmetries are (anti)unitary. 

While this scenario may seem quite artificial, it actually does occur in applications. Many classical waves can be recast as a Schrödinger equation where a Maxwell-type operator $M = W^{-1} D$ plays the role of the quantum hamiltonian \cite{DeNittis_Lein:Schroedinger_formalism_classical_waves:2017}. Maxwell-type operators are \emph{hermitian} with respect to a weighted scalar product $\scppro{\varphi}{\psi} = \scpro{\varphi}{W \, \psi}$ and hence, diagonalizable (\cf \cite[Proposition~6.2]{DeNittis_Lein:Schroedinger_formalism_classical_waves:2017}). For electromagnetic waves, material symmetries (\cf \cite[Section~3]{DeNittis_Lein:symmetries_electromagnetism:2020}) are (anti)unitary with respect to the usual (vacuuum) \emph{and} the weighted scalar product; conceptually, this is really important since the idea is that materials selectively break or preserve vacuum symmetries. In principle, the recipe of Kawabata et al.\ can be implemented with the vacuum adjoint ${}^{\dagger}$ or the biorthogonal adjoint ${}^{\ddagger}$. Do these two classifications of the Maxwell operator agree? And are they consistent with the classification obtained in \cite[Section~4]{DeNittis_Lein:symmetries_electromagnetism:2020}? 
\section[Classifying the projection onto relevant states in a simplified setting]{Classifying the projection onto \linebreak relevant states in a simplified setting} 
\label{my_way_simplified}
Now it is my turn to explain the ins and outs of the classfication scheme I outlined in the introduction. The diagonalizability assumption that I have discussed at length ensures $P_{\mathrm{rel}}$ is well-defined and continuous, symmetry- and gap-preserving deformations of $H$ lead to continuous deformations of the relevant projection. While I could start with the fully generic case, the following assumption allows me to simplify some of my arguments: 
\begin{assumption}\label{my_way_simplified:assumption:simplifying_assumption}
	Throughout this section, I will assume that $H$ is normal with respect to the initially given scalar product, \ie $[H , H^{\dagger}] = 0$ holds true. 
\end{assumption}
As a consequence, equation~\eqref{diagonalizable_operators:eqn:naive_decomposition_real_imaginary_parts} splits $H$ into \emph{commuting} real and imaginary parts. And symmetries are assumed to be (anti)unitary with respect to the \emph{same} scalar product that makes $H$ normal. This streamlines many arguments, because \eg $\dagger$-symmetries exchange $H = H_{\Re} + \ii H_{\Im}$ with its \emph{commuting} adjoint $H^{\dagger} = H_{\Re} - \ii H_{\Im}$, 
\begin{align*}
	U_{\dagger} \, H \, U_{\dagger}^{-1} &= \pm H^{\dagger} 
	\; \; \Longleftrightarrow \; \; 
	U_{\dagger} \, H^{\dagger} \, U_{\dagger}^{-1} = \pm H 
	. 
\end{align*}
The action of ($\dagger$-)symmetries can then be rephrased in terms of real and imaginary part operators. 

Furthermore, the normality of $H$ implies all spectral projections — including $P_{\mathrm{rel}}$ — are hermitian out of the box. 

Later on in Section~\ref{my_way} I will explain how to modify the arguments made here in case $H$ is diagonalizable, but not normal with respect to the initially given scalar product.

\subsection{Revisiting the classification of hermitian operators} 
\label{my_way_simplified:hermitian_case}
Let me illustrate my approach with a familiar example. Suppose $H = H^{\dagger}$ models a time-reversal symmetric fermionic system from condensed matter, and possesses an even particle-hole symmetry $C$ and an odd time-reversal symmetry $T$. Furthermore, I suppose there is a spectral gap around $0$. 

At zero temperature, the state of such systems is given by the Fermi projection where all states up to the Fermi energy $E_{\mathrm{F}} = 0$ are completely filled and those above are all empty. Here, the projection onto the relevant states is nothing but the Fermi projection, 
\begin{align*}
	P_{\mathrm{rel}} = P_{\mathrm{F}} = 1_{(-\infty,0]}(H) 
	. 
\end{align*}
The presence of symmetries of $H$ will lead to symmetries and constraints of $P_{\mathrm{rel}}$. The \emph{distinction between symmetries and constraints} is essential for a physically meaningful topological classification. Clearly, the time-reversal symmetry $T \, H \, T^{-1} = + H$ manifests itself on the level of the relevant projection as 
\begin{align}
	T \, P_{\mathrm{rel}} \, T^{-1} = P_{\mathrm{rel}} 
	, 
	\label{my_way_simplified:eqn:symmetry}
\end{align}
and $T$ is a \emph{symmetry} of $P_{\mathrm{rel}}$. 

In contrast, a particle-hole symmetry $C \, H \, C^{-1} = - H$ leads to a \emph{constraint} imposed on the projection, 
\begin{align}
	C \, P_{\mathrm{rel}} \, C^{-1} = \id_{\Hil} - P_{\mathrm{rel}} 
	. 
	\label{my_way_simplified:eqn:constraint}
\end{align}
Its presence forces that deformations of states below the Fermi energy (“electron states”) are in lock-step with matching deformations above the Fermi energy (“hole states”). Moreover, electrons and holes can never mingle as that would require a gap-closing deformation. 

Lastly, their product — possibly garnished with a factor $\pm \ii$ — gives a chiral symmetry, which gives rise to another constraint, 
\begin{align}
	S \, P_{\mathrm{rel}} \, S^{-1} = \id_{\Hil} - P_{\mathrm{rel}} 
	. 
	\label{my_way_simplified:eqn:constraint_2}
\end{align}
Equivalently, symmetries and constraints may be formulated in terms of the \emph{spectrally flattened hamiltonian} 
\begin{align}
	Q &= (\id_{\Hil} - P_{\mathrm{rel}}) - P_{\mathrm{rel}} 
	\notag \\
	&= \id_{\Hil} - 2 P_{\mathrm{rel}}
	. 
	\label{my_way_simplified:eqn:spectrally_flattened_hamiltonian}
\end{align}
In my terminology, symmetries commute with $Q$ whereas constraints anticommute. 

In short, I can reduce the topological classification of hermitian operators to the classification involving only projections. At first, this seems to imply that only $K_0$-groups play a role here (which in the operator-theoretic approach consist of equivalence classes of projections), but unitaries (whose equivalence classes make up the relevant $K_1$-group) emerge naturally in this context. 

Ordinarily, unitaries enter through the spectrally flattened hamiltonian: after choosing a basis for the eigenspaces of $S$, I can write 
\begin{align*}
	Q \simeq \left (
	\begin{matrix}
		0 & u \\
		u^{\dagger} & 0 \\
	\end{matrix}
	\right )
\end{align*}
in terms of a unitary $u$. The choice of basis is important, since class~AIII admits only a \emph{relative} classification and I have to fix a reference system, which I regard as trivial. Then equation~\eqref{my_way_simplified:eqn:spectrally_flattened_hamiltonian} implies that the unitary can be recovered directly from the relevant projection,
\begin{align*}
	P_{\mathrm{rel}} &\simeq \frac{1}{2} \left (
	\begin{matrix}
		\id & -u \\
		-u^{\dagger} & \id \\
	\end{matrix}
	\right )
	, 
\end{align*}
rather than the spectrally flattened hamiltonian $Q$ (\cf the discussion in Chapters~2.3 and 7.3 in \cite{Prodan_Schulz_Baldes:complex_topological_insulators:2016}). Hence, knowing $P_{\mathrm{rel}}$ suffices to perform a $K$-theoretic classification. 

One central difference to the philosophy of \cite{Kawabata_Shiozaki_Ueda_Sato:classification_non_hermitian_systems:2019} is that it is entirely unnecessary to find a homotopy that deforms the original hamiltonian to $Q$. That is because the relevant symmetries of $H$ are encoded in symmetries and constraints of $P_{\mathrm{rel}}$ — and hence, $Q$. 

Based on the distinction between symmetries and constraints, Kennedy and Zirnbauer performed a topological classification of fermionic systems \cite{Kennedy_Zirnbauer:Bott_periodicity_Z2_symmetric_ground_states:2016} using homotopy theory. When the system is periodic, then the Fermi projection gives rise to the so-called Bloch vector bundle that comes furnished with the odd time-reversal symmetry~\eqref{my_way_simplified:eqn:symmetry}. These class~AII vector bundles have been classified by De~Nittis and Gomi \cite{DeNittis_Gomi:AII_bundles:2014}, and not surprisingly the $\Z_2$-valued Kane-Mele invariant arises in the classification. 

The careful distinction between symmetries and constraints has also been essential for obtaining a physically meaningful classification in classical waves \cite{DeNittis_Lein:symmetries_electromagnetism:2020,Lein_Sato:topological_classification_magnons:2019}. In topological photonic and magnonic crystals complex conjugation gives rise to a particle-hole-type \emph{constraint} that stems from the real-valuedness of electromagnetic fields (\cf \cite[Section~3.3]{DeNittis_Lein:Schroedinger_formalism_classical_waves:2017}) and classical spin waves (\cf \cite[Section~III.C.1]{Lein_Sato:topological_classification_magnons:2019}), respectively. Since the real-valuedness is a fundamental tenet of classical waves, in these contexts the particle-hole constraints are unbreakable. 

\subsection{Definition of $P_{\mathrm{rel}}$ for normal operators} 
\label{my_way_simplified:definition_P_rel}
Mathematically, the definition immediately extends to diagonalizable operators in a straightforward fashion: suppose I am given a diagonalizable operator $H$ whose spectrum is as in \eg Figure~\ref{intro:figure:highly_symmetric_spectrum} or \ref{38_fold_recap:figure:nested_Cs_spectrum}. On physical grounds I identify the relevant states. What is important for this is that the relevant part of the spectrum $\sigma_{\mathrm{rel}}$ needs to be separated from the remainder by a gap. At this stage I \emph{do not} assume the existence of a line gap or so, all I care about is that I can enclose $\sigma_{\mathrm{rel}}$ with a closed contour that does not enclose or intersect with any other part of the spectrum of $H$. 

When $H$ is also periodic, choosing the relevant spectrum $\sigma_{\mathrm{rel}}$ amounts to choosing relevant energy or frequency bands $\bigl \{ E_{n_j}(k) \bigr \}_{j = 1 , \ldots N_{\mathrm{rel}}}$. Consequently, I can expand
\begin{align}
	P_{\mathrm{rel}}(k) &= \sum_{j = 1}^{N_{\mathrm{rel}}} \sopro{\varphi_{n_j}(k)}{\varphi_{n_j}(k)}
	\label{my_way_simplified:eqn:P_rel_periodic_Bloch_functions}
\end{align}
in terms the relevant Bloch (eigen)functions $\varphi_n(k)$. The diagonalizability condition enforces that the spectral projections give rise to a resolution of the identity, 
\begin{align*}
	\id = \sum_{n = 1}^N \sopro{\varphi_n(k)}{\varphi_n(k)}
\end{align*}
Alternatively, I may equivalently write the relevant projection as a Riesz-Dunford integral 
\begin{align}
	P_{\mathrm{rel}}(k) &= \frac{\ii}{2\pi} \int_{\Gamma(\sigma_{\mathrm{rel}})} \dd z \, \bigl ( H(k) - z \bigr )^{-1} 
	, 
	\label{my_way_simplified:eqn:P_rel_periodic_complex_integral}
\end{align}
where the contour $\Gamma(\sigma_{\mathrm{rel}})$ encloses only the relevant part of the spectrum. As explained in Section~\ref{why_diagonalizability_matters:spectral_projections_spectrally_flattened_hamiltonians} diagonalizability ensures the integral~\eqref{my_way_simplified:eqn:P_rel_periodic_complex_integral} is well-defined. 

If all I am interested in are periodic systems, then either \eqref{my_way_simplified:eqn:P_rel_periodic_Bloch_functions} or \eqref{my_way_simplified:eqn:P_rel_periodic_complex_integral} will do just fine and I may proceed in my analysis. However, this is not entirely satisfactory since topological phenomena are known to exist also in disordered systems. In fact, in some insulators, disorder is the proximate cause behind the absence of conducting, delocalized states \cite{Anderson:Anderson_localization_absence_diffusion:1958}. So I think it is important to offer definitions for $P_{\mathrm{rel}}$ that do not rely on periodicity. For diagonalizable operators, I have two options, functional calculus or expressing it as a complex integral akin to \eqref{my_way_simplified:eqn:P_rel_periodic_complex_integral}. 

For normal operators functional calculus systematically assigns an operator 
\begin{align}
	f(H) &\overset{\mathrm{def}}{=} \int_{\C} f(E) \, \dd \sopro{\psi_E}{\psi_E}
	\label{diagonalizable_operators:eqn:functional_calculus}
\end{align}
to each suitable functions $f : \C \longrightarrow \C$ (\cf Appendix~\ref{appendix:functional_calculus_normal_operators} and references therein). Probably the best-known example is $f(E) = \e^{- \ii t E}$, which gives rise to the time-evolution. Diagonalizability enters again as a crucial assumption through the guise of normality. 

One fact will be important in just a moment: the adjoint of $f(H)$ is 
\begin{align*}
	f(H)^{\dagger} = \bar{f}(H)
	, 
\end{align*}
and \emph{not} $\bar{f}(H^{\dagger})$, where $\bar{f}(E) = \overline{f(E)}$ is the complex conjugate function. In particular, if $f = \bar{f}$ is a real-valued function, then the operator $f(H)^{\dagger} = f(H)$ is automatically hermitian — even if $H$ is not. 

Spectral projections, including the projection onto the relevant states, fall into that category, since it can be defined through functional calculus 
\begin{align}
	P_{\mathrm{rel}} \overset{\mathrm{def}}{=} 1_{\sigma_{\mathrm{rel}}}(H)
	\label{diagonalizable_operators:eqn:relevant_projection_functional_calculus}
\end{align}
for the indicator function $1_{\sigma_{\mathrm{rel}}}(E)$, which equals $1$ when $E$ lies in $\sigma_{\mathrm{rel}}$ and $0$ otherwise. Since $1_{\Lambda}(H) = 0$ is the trivial projection whenever $\Lambda$ lies outside of the spectrum, $\Lambda \cap \sigma(H) = \emptyset$, I can in fact enlarge $\sigma_{\mathrm{rel}}$ and \eg replace $\sigma_{\mathrm{rel}} = \sigma_{++}$ with the larger set $\sigma_{\mathrm{rel}} = (0,\infty) \times (0,\infty)$ (real and imaginary parts need to be positive); as long as there is no other spectrum in the upper-right quadrant of the complex plane, the two resulting projections will coincide. I will exploit this fact below to greatly simplify computations. 

When $\sigma_{\mathrm{rel}} = \sigma_{\mathrm{rel},\Re} \times \sigma_{\mathrm{rel},\Im}$ can be taken as a product (\eg a square in the complex plane), I can exploit that normal operators $H = H_{\Re} + \ii H_{\Im}$ split into real and imaginary parts, and simplify~\eqref{diagonalizable_operators:eqn:relevant_projection_functional_calculus} to 
\begin{align}
	P_{\mathrm{rel}} &= 1_{\sigma_{\mathrm{rel},\Re}}(H_{\Re}) \; 1_{\sigma_{\mathrm{rel},\Im}}(H_{\Im}) 
	\label{diagonalizable_operators:eqn:P_rel_product_set}
	\\
	&= 1_{\sigma_{\mathrm{rel},\Im}}(H_{\Im}) \; 1_{\sigma_{\mathrm{rel},\Re}}(H_{\Re})
	. 
	\notag 
\end{align}
Since real and imaginary parts commute, the order did not matter in the above definition. For instance, assume I consider an operator whose spectrum is given by Figure~\ref{intro:figure:highly_symmetric_spectrum}, and I would like to construct the relevant projection for $\sigma_{++}$. 

The second option I have is to express 
\begin{align}
	P_{\mathrm{rel}} = \frac{\ii}{2 \pi} \int_{\Gamma(\sigma_{\mathrm{rel}})} \dd z \, (H - z)^{-1} 
	\label{diagonalizable_operators:eqn:relevant_projection_Riesz_Dunford_formula}
\end{align}
via the Riesz-Dunford formula as a contour integral, where the contour $\Gamma(\sigma_{\mathrm{rel}})$ encloses $\sigma_{\mathrm{rel}}$ in a counterclockwise fashion, but does not intersect with any other spectrum of $H$. Note that also here, the diagonalizability is essential. The operator above gives rise to an ordinary integral of the complex-valued function
\begin{align*}
	f_{\psi}(z) = \scpro{\psi \, }{ \, (H - z)^{-1} \psi}
\end{align*}
in the complex plane after taking expectation values; here, $\psi \in \Hil$ is a vector in the Hilbert space that is a parameter. I can think of the collection $\bigl \{ f_{\varphi_n}(z) \bigr \}_{n \in \mathcal{I}}$ as matrix elements of the resolvent on the diagonal with respect to a basis $\{ \varphi_n \}_{n \in \mathcal{I}}$ of $\Hil$. The offdiagonal matrix elements can be recovered from the polarization formula, 
\begin{align*}
	\scpro{\varphi}{\psi} &= \frac{1}{4} \Bigl ( \scpro{\varphi + \psi}{\varphi + \psi} - \scpro{\varphi - \psi}{\varphi - \psi} 
	\, \Bigr . + \\
	&\qquad \quad \Bigl . 
	- \ii \, \scpro{\varphi + \ii \, \psi}{\varphi + \ii \, \psi} + \ii \, \scpro{\varphi - \ii \, \psi}{\varphi - \ii \, \psi} \Bigr ) 
	\, . 
\end{align*}
For diagonalizable, including normal operators, one can show that all poles of $f_{\psi}(z)$ are first order. But if $\sigma_{\mathrm{rel}}$ contains Jordan blocks, then the poles inside $\sigma_{\mathrm{rel}}$ for some $\psi$ are higher-order and the contour integral is ill-defined. 

Because $H$ is normal, the spectral projection $P_{\mathrm{rel}} = P_{\mathrm{rel}}^{\dagger}$ is hermitian (\cf Theorem~\ref{appendix:functional_calculus_normal_operators:thm:functional_calculus}~(1)). Once I introduce $P_{\mathrm{rel}}^{\perp} = \id_{\Hil} - P_{\mathrm{rel}}$, the Hilbert space 
\begin{align*}
	\Hil = \ran P_{\mathrm{rel}} \oplus \ran P_{\mathrm{rel}}^{\perp}
\end{align*}
then splits neatly into the relevant states and all other states. This is the decomposition that enters as a datum in the twisted equivariant $K$-theory (\cf \cite[Chapter~10]{Freed_Moore:twisted_equivariant_matter:2013}) that is used in \cite{Kawabata_Shiozaki_Ueda_Sato:classification_non_hermitian_systems:2019}. 

I shall also introduce the projection 
\begin{align}
	P_{\mathrm{rel},\dagger} \overset{\mathrm{def}}{=} 1_{\sigma_{\mathrm{rel}}}(H^{\dagger}) 
	\label{diagonalizable_operators:eqn:P_rel_dagger}
\end{align}
for the operator $H^{\dagger}$. When $H \neq H^{\dagger}$ and $\sigma_{\mathrm{rel}} \neq \overline{\sigma_{\mathrm{rel}}}$, the operators $P_{\mathrm{rel}} = P_{\mathrm{rel}}^{\dagger}$ and $P_{\mathrm{rel},\dagger} = P_{\mathrm{rel},\dagger}^{\dagger} \neq P_{\mathrm{rel}}$ are \emph{not} adjoints of one another! 

Nevertheless, there is a simple relation between $P_{\mathrm{rel},\dagger}$ and spectral projections of $H = H_{\Re} + \ii H_{\Im}$: given that $H^{\dagger} = H_{\Re} - \ii H_{\Im}$ differs from $H$ only by a $-$ sign in front of the imaginary part, 
\begin{align}
	P_{\mathrm{rel},\dagger} &= 1_{\overline{\sigma_{\mathrm{rel}}}}(H) 
	\label{diagonalizable_operators:eqn:P_rel_dagger_as_spectral_projection_of_H}
\end{align}
is just the spectral projection for the spectral region $\overline{\sigma_{\mathrm{rel}}}$ obtained by reflecting $\sigma_{\mathrm{rel}}$ about the real axis. When the set $\sigma_{\mathrm{rel}} = \sigma_{\mathrm{rel},\Re} \times \sigma_{\mathrm{rel},\Im}$ has product form, a computation analogous to \eqref{diagonalizable_operators:eqn:P_rel_product_set} easily allows me to confirm this directly, 
\begin{align*}
	P_{\mathrm{rel},\dagger} &= 1_{\sigma_{\mathrm{rel}}}(H^{\dagger})
	\\
	&= 1_{\sigma_{\mathrm{rel},\Re}}(H_{\Re}) \; 1_{\sigma_{\mathrm{rel},\Im}}(- H_{\Im}) 
	\\
	&= 1_{\sigma_{\mathrm{rel},\Re}}(H_{\Re}) \; 1_{-\sigma_{\mathrm{rel},\Im}}(H_{\Im})
	\\
	&= 1_{\overline{\sigma_{\mathrm{rel}}}}(H)
	. 
\end{align*}
%

\subsection[Extension of the topological classification to certain non-hermitian examples]{Extension of the topological classification \linebreak to certain non-hermitian examples} 
\label{my_way_simplified:non_hermitian_examples}
These ideas can be extended in a natural way from the hermitian to the non-hermitian case. Physics decides what states should be regard as relevant for the classification, which fixes $\sigma_{\mathrm{rel}}$. From there, I proceed algorithmically: the relevant states give rise to a spectral projection $P_{\mathrm{rel}}$. Depending on the symmetries of $H$ and how symmetric I have chosen the spectral region $\sigma_{\mathrm{rel}}$, this gives rise to symmetries and constraints of $P_{\mathrm{rel}}$. And given that some symmetries relate $H$ to $H^{\dagger}$, \ie I am classifying \emph{pairs} $(H_{\Re},H_{\Im})$ of commuting hermitian operators, it is possible that a second projection, $P_{\mathrm{rel},\dagger}$ enters the game. To move from the abstract to the concrete, let me discuss two examples. Only afterwards, will I detail the general scheme in Section~\ref{my_way_simplified:general_principles}.

\subsubsection[Example 1: a hamiltonian whose spectrum is point and reflection symmetric]{Example 1: a hamiltonian whose \linebreak spectrum is point and reflection symmetric} 
\label{my_way_simplified:non_hermitian_examples:1}
%
Assume the spectrum of my non-hermititan operator is as in Figure~\ref{intro:figure:highly_symmetric_spectrum}, that is it breaks up into four symmetric components which I label $\sigma_{\pm \pm}$, where each sign indicates whether real and imaginary parts are positive or negative. 

The symmetry in the spectrum suggests the presence of (at least) three symmetries. For the sake of argument, let me suppose that $H$ comes furnished with an odd time-reversal symmetry, 
\begin{align*}
	T \, H \, T^{-1} &= + H
	, 
	&&
	T^2 = - \id_{\Hil} 
	, 
\end{align*}
and is pseudohermitian, 
\begin{align*}
	V_{\dagger} \, H \, V_{\dagger}^{-1} &= + H^{\dagger}
	. 
\end{align*}
Moreover, let me suppose $T$ and $V_{\dagger}$ commute, 
\begin{align*}
	[T , V_{\dagger}] = 0 
	. 
\end{align*}
Consequently, their product $T_{\dagger} = V_{\dagger} \, T$ is an odd time-reversal-$\dagger$ symmetry, 
\begin{align*}
	T_{\dagger} \, H \, T_{\dagger}^{-1} &= + H^{\dagger}
	, 
	&&
	T_{\dagger}^2 = - \id_{\Hil} 
	. 
\end{align*}
I can translate the symmetry conditions on $H = H_{\Re} + \ii H_{\Im}$ to symmetry conditions on real and imaginary parts, 
\begin{subequations}\label{my_way_simplified:eqn:example_1_symmetries_real_imaginary_part}
	\begin{align}
		T \, H \, T^{-1} = + H
		\; \; &\Longleftrightarrow \; \; 
		\begin{cases}
			T \, H_{\Re} \, T^{-1} &= + H_{\Re} \\
			T \, H_{\Im} \, T^{-1} &= - H_{\Im} \\
		\end{cases}
		, 
		\label{my_way_simplified:eqn:example_1_symmetries_real_imaginary_part:TR}
		\\
		V_{\dagger} \, H \, V_{\dagger}^{-1} = + H^{\dagger}
		\; \; &\Longleftrightarrow \; \; 
		\begin{cases}
			V_{\dagger} \, H_{\Re} \, V_{\dagger}^{-1} &= + H_{\Re} \\
			V_{\dagger} \, H_{\Im} \, V_{\dagger}^{-1} &= - H_{\Im} \\
		\end{cases}
		,
		\label{my_way_simplified:eqn:example_1_symmetries_real_imaginary_part:pseudo_hermiticity}
		\\
		T_{\dagger} \, H \, T_{\dagger}^{-1} = + H^{\dagger}
		\; \; &\Longleftrightarrow \; \; 
		\begin{cases}
			T_{\dagger} \, H_{\Re} \, T_{\dagger}^{-1} &= + H_{\Re} \\
			T_{\dagger} \, H_{\Im} \, T_{\dagger}^{-1} &= + H_{\Im} \\
		\end{cases}
		\label{my_way_simplified:eqn:example_1_symmetries_real_imaginary_part:TR_dagger}
		. 
	\end{align}
\end{subequations}
A peek at Table~\ref{symmetries:table:overview_symmetries} tells us that the presence of $T$ leads to the spectral symmetry $\sigma(H) = + \overline{\sigma(H)}$ (reflection symmetric about the real axis), and $V_{\dagger}$ is responsible for $\sigma(H) = + \overline{\sigma(H)}$ (reflection symmetric about the real axis). Their product $T_{\dagger} = T V_{\dagger}$ is a third symmetry, an odd time-reversal${}^{\dagger}$ symmetry to be precise; its presence does not give rise to any symmetries in the spectrum. 

Point gap, real line gap, imaginary line gap, this operator possesses them all. So which is the relevant classification for physics? This depends on what states I deem relevant. 
\medskip

\noindent
\paragraph{Choosing the relevant states symmetrically with respect to reflections about the real axis} 
\label{my_way_simplified:extracting_symmetries_constraints:symmetric_real}
Suppose I designate the states with positive real part to be physically relevant, \ie 
\begin{align*}
	\sigma_{\mathrm{rel}} = \sigma_{++} \cup \sigma_{+-} = + \overline{\sigma_{\mathrm{rel}}}
	. 
\end{align*}
Accordingly, the projection onto the relevant states 
\begin{align*}
	P_{\mathrm{rel}} = 1_{[0,\infty)}(H_{\Re}) \, 1_{\R}(H_{\Im}) 
	= 1_{[0,\infty)}(H_{\Re}) 
\end{align*}
just involves the real part operator — $1_{\R}(H_{\Im}) = \id_{\Hil}$ is just a fancy way of writing the identity. 

Taking a quick peek at equation~\eqref{my_way_simplified:eqn:example_1_symmetries_real_imaginary_part}, I see that both symmetries as well as their product leave the sign of the real part operator $H_{\Re}$ untouched. Consequently, all three operators are symmetries of $P_{\mathrm{rel}}$, 
\begin{subequations}\label{my_way_simplified:eqn:example_1_symmetries_P_rel}
	\begin{align}
		V_{\dagger} \, P_{\mathrm{rel}} \, V_{\dagger}^{-1} &= P_{\mathrm{rel}} 
		, 
		\label{my_way_simplified:eqn:example_1_symmetries_P_rel:U_dagger}
		\\
		T \, P_{\mathrm{rel}} \, T^{-1} &= P_{\mathrm{rel}} 
		, 
		\label{my_way_simplified:eqn:example_1_symmetries_P_rel:TR}
		\\
		T_{\dagger} \, P_{\mathrm{rel}} \, T_{\dagger}^{-1} &= P_{\mathrm{rel}} 
		. 
		\label{my_way_simplified:eqn:example_1_symmetries_P_rel:TR_dagger}
	\end{align}
\end{subequations}
In none of my arguments was it important that $\sigma_{++}$ and $\sigma_{+-}$ were separated by a gap. Indeed, the relevant gap that needs to be maintained is between the spectra to the left and to the right of the imaginary axis. In the parlance of \cite{Kawabata_Shiozaki_Ueda_Sato:classification_non_hermitian_systems:2019}, my system is characterized by a \emph{real line gap}. Hence, I may instead consider the spectrally flattened hamiltonian~\eqref{my_way_simplified:eqn:spectrally_flattened_hamiltonian}. All symmetries and constraints of $P_{\mathrm{rel}}$ manifest themselves as symmetries of $Q = \id_{\Hil} - 2 P_{\mathrm{rel}}$. The time-reversal symmetry $T$ translates to a time-reversal symmetry of $Q = + T \, Q \, T^{-1}$. Importantly, the pseudo-hermiticity condition translates to an \emph{ordinary symmetry} of $Q$, 
\begin{align*}
	V_{\dagger} \, Q \, V_{\dagger}^{-1} &= Q 
	. 
\end{align*}
Similarly, the odd time-reversal-$\dagger$ symmetry acts as an odd (regular, non-$\dagger$!) time-reversal symmetry, 
\begin{align*}
	T_{\dagger} \, Q \, T_{\dagger}^{-1} &= Q 
	. 
\end{align*}
This definition gives the same spectrally flattened operator as the deformation procedure indicated in \cite[Figure~2~(b)]{Kawabata_Shiozaki_Ueda_Sato:classification_non_hermitian_systems:2019}. However, unlike Kawabata et al.\ the spectrally flattened hamiltonian is not constructed through a deformation of $H$, that is completely unnecessary. The procedure described here defines $P_{\mathrm{rel}}$ \emph{directly} via functional calculus and is \emph{inherently more general}. For instance, my definition of $Q$ still applies to an operator whose spectrum is as in Figure~\ref{38_fold_recap:figure:nested_Cs_spectrum}: I could still declare one of the (inverted) C-shaped parts of the spectrum as relevant and then define $P_{\mathrm{rel}} = 1_{\sigma_{\mathrm{rel}}}(H)$. But since the two Cs overlap in the angular direction, such an operator does not possess a real line gap in the sense of \cite{Kawabata_Shiozaki_Ueda_Sato:classification_non_hermitian_systems:2019}. And unitary flattening will give an operator whose spectrum is the entire circle line without any gaps. 

To summarize, I am left with the problem of giving a topological classification for a projection $P_{\mathrm{rel}}$ that possesses two antilinear symmetries (which act like odd time-reversal-type symmetries on the “Fermi projection”) and one linear, commuting symmetry. One way to obtain a complete topological classification for periodic operators is to combine \cite[Theorem~4.9]{DeNittis_Lein:symmetries_electromagnetism:2020} with the classification of class~AII vector bundles by De~Nittis and Gomi \cite{DeNittis_Gomi:AII_bundles:2014}. 

For the benefit of the readers, I will outline the arguments from \cite{DeNittis_Lein:symmetries_electromagnetism:2020}; for this part of the classification, periodicity is not necessary. The operator $V_{\dagger} = V_{\dagger}^{\dagger} = V_{\dagger}^{-1}$ is a hermitian unitary, and given that I am considering a non-trivial symmetry $V_{\dagger} \neq \pm \id_{\Hil}$, the spectrum of $V_{\dagger}$ equals $\{ -1 , +1 \}$. Since $V_{\dagger}$ commutes with the projection, 
\begin{align*}
	P_{\mathrm{rel}} = P_{\mathrm{rel},+} + P_{\mathrm{rel},-} 
\end{align*}
splits into two parts, one that projects onto relevant states that are also eigenvectors of $V_{\dagger}$ to $+1$ and those to $-1$. Moreover, given that $V_{\dagger}$ commutes with the time-reversal symmetry $T$ as well as $T_{\dagger} = V_{\dagger} \, T$, I deduce that $T$ and $T_{\dagger}$ are block-diagonal and are therefore symmetries for the two components separately, 
\begin{align*}
	T \, P_{\mathrm{rel},\pm} \, T^{-1} &= P_{\mathrm{rel},\pm}
	, 
	\\
	T_{\dagger} \, P_{\mathrm{rel},\pm} \, T_{\dagger}^{-1} &= P_{\mathrm{rel},\pm}
	. 
\end{align*}
And according to \cite[equation~(4.14)]{DeNittis_Lein:symmetries_electromagnetism:2020} $T$ and $T_{\dagger}$ are identical on the subspaces $\ran P_{\mathrm{rel},\pm}$ up to a global sign. Consequently, I should not think of $T$ and $T_{\dagger}$ as two distinct symmetries. Now I am left with \emph{two} projections $P_{\mathrm{rel},\pm}$ and a \emph{single} odd time-reversal-type symmetry, say, $T$. Consequently, I will obtain two sets of topological invariants for class~AII systems that are independent of one another, one for the $+1$ sub bundle of $V_{\dagger}$ and the other to the $-1$ sub bundle. 

One way to obtain a complete list of invariants is to add the assumption that $H$ is periodic and then use vector bundle theory; however, the classification is expected to remain valid even when weak disorder is present. Thanks to the periodicity, the projections give rise to a family of projections $P_{\mathrm{rel},\pm}(k)$ indexed by Bloch momentum; the time-reversal symmetry manifests itself as $T \, P_{\mathrm{rel},\pm}(k) \, T^{-1} = P_{\mathrm{rel},\pm}(-k)$. From the family of projections, I can construct vector bundles from them by gluing together the ranges of $P_{\mathrm{rel},\pm}(k)$ over the entire Brillouin torus, which inherits the odd time-reversal symmetry. The precise definition is a bit technical and I refer to De~Nittis and Gomi \cite[Section~2]{DeNittis_Gomi:AII_bundles:2014}. Then I can read off the topological classification from Theorems~1.4 and 1.7 in \cite{DeNittis_Gomi:AII_bundles:2014} for dimensions $d \leq 4$. The result is summarized in Table~\ref{my_way_simplified:table:example_1_real_line_gap}. I can compare this with \cite[Table~IX]{Kawabata_Shiozaki_Ueda_Sato:classification_non_hermitian_systems:2019}, more specifically the real line gap classification for the case AII, $\eta_+$. Keep in mind that Kawabata et al.\ only list the strong invariants. Here, I see that — at least as far as the strong invariants are concerned — the two classifications agree. In fact, my classification is finer, because I am able to give the weak invariants as well. 

\begin{table}[t]
	\begin{center}
		\renewcommand{\arraystretch}{1.5}
		\begin{tabular}{c | c | c | c | c}
			\emph{Classification} & $d = 1$ & $d = 2$ & $d = 3$ & $d = 4$ \\ \hline \hline 
			$P_{\mathrm{rel},\pm}$ class~AII & $0$ & $\Z_2 \oplus \Z_2$ & $\Z_2^4 \oplus \Z_2^4$ & $\Z_2^{10} \oplus \Z \oplus \Z_2^{10} \oplus \Z$ \\ \hline
			\cite{Kawabata_Shiozaki_Ueda_Sato:classification_non_hermitian_systems:2019} for AII, $\eta_+$, $\mathrm{L_r}$ & $0$ & $\Z_2 \oplus \Z_2$ & $\Z_2 \oplus \Z_2$ & $\Z \oplus \Z$ \\
		\end{tabular}
	\end{center}
	\caption{Classification obtained here for $\sigma_{\mathrm{rel}} = \sigma_{++} \cup \sigma_{+-} = \overline{\sigma_{\mathrm{rel}}}$ in dimensions $d \leq 4$ and comparison with the real line gap classification obtained in \cite[Table~IX]{Kawabata_Shiozaki_Ueda_Sato:classification_non_hermitian_systems:2019} for the case AII, $\eta_+$. One-dimensional class AII vector bundles are all trivial. In dimension $2$, the two $\Z_2$-valued invariants are the Kane-Mele invariants for the two vector bundles generated from $P_{\mathrm{rel},+}$ and $P_{\mathrm{rel},-}$. When $d = 3$, I likewise obtain the two top Kane-Mele invariants as well as two sets of three $\Z_2$-valued weak invariants. And for $d = 4$, the two $\Z$-valued invariants are the second Chern numbers of the two vector bundles, which are complemented with 10 $\Z_2$-valued weak invariants. Note that the classification obtained here includes both, strong and weak invariants whereas Kawabata et al.\ only give strong invariants. }
	\label{my_way_simplified:table:example_1_real_line_gap}
\end{table}
%
\medskip

\noindent
\paragraph{Asymmetrically chosen relevant states} 
\label{my_way_simplified:extracting_symmetries_constraints:asymmetric}
So what happens when I choose a different set of relevant states? Suppose only states associated with $\sigma_{\mathrm{rel}} = \sigma_{++}$ are deemed relevant. Clearly, this breaks all spectral symmetries, including $\overline{\sigma_{++}} = \sigma_{+-} \neq \sigma_{++}$. Writing out 
\begin{align*}
	P_{\mathrm{rel}} = 1_{\sigma_{++}}(H) = 1_{[0,\infty)}(H_{\Re}) \; 1_{[0,\infty)}(H_{\Im})
\end{align*}
and using the symmetries~\eqref{my_way_simplified:eqn:example_1_symmetries_real_imaginary_part} of real and imaginary part, leads to the following relations 
\begin{align*}
	T \, P_{\mathrm{rel}} \, T^{-1} &= 1_{\overline{\sigma_{++}}}(H) = 1_{\sigma_{++}}(H^{\dagger}) 
	= P_{\mathrm{rel},\dagger}
	\\ 
	V_{\dagger} \, P_{\mathrm{rel}} \, V_{\dagger}^{-1} &= 1_{\overline{\sigma_{++}}}(H) = 1_{\sigma_{++}}(H^{\dagger}) 
	= P_{\mathrm{rel},\dagger} 
\end{align*}
between $P_{\mathrm{rel}}$ and a similarly defined projection 
\begin{align*}
	P_{\mathrm{rel},\dagger} &= 1_{\sigma_{++}}(H^{\dagger}) 
	= 1_{[0,\infty)}(H_{\Re}) \; 1_{[0,\infty)}(-H_{\Im})
	\\
	&= 1_{[0,\infty)}(H_{\Re}) \; 1_{(-\infty,0]}(H_{\Im})
	. 
\end{align*}
Clearly, $T$ and $V_{\dagger}$ \emph{separately} are no longer symmetries of $P_{\mathrm{rel}}$, but their product is! And since $T$ and $V_{\dagger}$ commute by assumption, the product $T_{\dagger} = V_{\dagger} \, T$ is \emph{still} an odd time-reversal symmetry of both, $P_{\mathrm{rel}}$ and $P_{\mathrm{rel},\dagger}$, 
\begin{align*}
	T_{\dagger} \, P_{\mathrm{rel}} \, T_{\dagger}^{-1} &= P_{\mathrm{rel}} 
	, 
	\\
	T_{\dagger} \, P_{\mathrm{rel},\dagger} \, T_{\dagger}^{-1} &= P_{\mathrm{rel},\dagger} 
	. 
\end{align*}
Hence, $Q$ possesses less symmetries than the specimen considered in the last subsection, 
\begin{subequations}\label{my_way_simplified:eqn:example_2_symmetries_P_rel}
	\begin{align}
		V_{\dagger} \, Q \, V_{\dagger}^{-1} &= Q_{\dagger} 
		\overset{\mathrm{def}}{=} (\id - P_{\mathrm{rel},\dagger}) - P_{\mathrm{rel},\dagger}
		, 
		\label{my_way_simplified:eqn:example_2_symmetries_P_rel:U_dagger}
		\\
		T \, Q \, T^{-1} &= Q_{\dagger} 
		,
		\label{my_way_simplified:eqn:example_2_symmetries_P_rel:TR}
		\\
		T_{\dagger} \, Q \, T_{\dagger}^{-1} &= Q 
		. 
		\label{my_way_simplified:eqn:example_2_symmetries_P_rel:TR_dagger}
	\end{align}
\end{subequations}
This is an important point: \emph{it is not the symmetries of $H$, but the symmetries and constraints of $P_{\mathrm{rel}}$ or, equivalently, the spectrally flattened hamiltonian $Q$ which matter.} And my choice of relevant states may break some or all of the symmetries that $H$ possesses. 

So let me play the classification game again: I have two projections $P_{\mathrm{rel}}$ and $P_{\mathrm{rel},\dagger}$, and each comes the odd time-reversal symmetry $T_{\dagger}$. However, these two projections are unitarily equivalent, equation~\eqref{my_way_simplified:eqn:example_2_symmetries_P_rel:U_dagger}, so topologically speaking, $P_{\mathrm{rel}}$ and $P_{\mathrm{rel},\dagger}$ are the same. Nevertheless, I may be able to define the relative index~\eqref{general:eqn:index_two_projections} for the pair of projections, which is also a topological quantity. At present, it is unclear whether this index is well-defined and how this relative index manifests itself physically. 

Specializing to the periodic case once more, I obtain only a \emph{single} class~AII vector bundle, and my classification is “half” of that given in the previous subsection. This corresponds to the point gap case in \cite{Kawabata_Shiozaki_Ueda_Sato:classification_non_hermitian_systems:2019} for class~AII, $\eta_+$: it is possible to deform the operator so that the relevant spectrum crosses the real or imaginary axis, there is no symmetry that would forbid this. I have summarized the findings in Table~\ref{my_way_simplified:table:example_1_point_gap}. 

\begin{table}
	\begin{center}
		\renewcommand{\arraystretch}{1.5}
		\begin{tabular}{c | c | c | c | c}
			\emph{Classification} & $d = 1$ & $d = 2$ & $d = 3$ & $d = 4$ \\ \hline \hline 
			Index~\eqref{general:eqn:index_two_projections} + $P_{\mathrm{rel}}$ class~AII & $\Z \oplus 0$ & $\Z \oplus \Z_2$ & $\Z \oplus \Z_2^4$ & $\Z \oplus \Z_2^{10} \oplus \Z$ \\ \hline
			\cite{Kawabata_Shiozaki_Ueda_Sato:classification_non_hermitian_systems:2019} for AII, $\eta_+$, $\mathrm{P}$ & $0$ & $\Z_2$ & $\Z_2$ & $\Z$ \\
		\end{tabular}
	\end{center}
	\caption{Classification obtained here for $\sigma_{\mathrm{rel}} = \sigma_{++} \neq \pm \sigma_{\mathrm{rel}} , \pm \overline{\sigma_{\mathrm{rel}}}$ in dimensions $d \leq 4$ and comparison with the point gap classification obtained in \cite[Table~IX]{Kawabata_Shiozaki_Ueda_Sato:classification_non_hermitian_systems:2019} for class~AII, $\eta_+$. In my classification, there is the possibility of an additional $\Z$-valued relative index~\eqref{general:eqn:index_two_projections} that stems from comparing the projections $P_{\mathrm{rel}}$ and $P_{\mathrm{rel},\dagger}$. The remaining contributions come from standard theory. One-dimensional class AII vector bundles are all trivial. In dimension $2$, the $\Z_2$-valued invariant is the Kane-Mele invariant for the class~AII vector bundle generated from $P_{\mathrm{rel}}$ alone. When $d = 2$, I likewise obtain the top Kane-Mele invariants as well as three other weak $\Z_2$-valued invariants. And for $d = 4$, the two $\Z$-valued invariants are the second Chern number and the relative index~\eqref{general:eqn:index_two_projections} for the two projections $P_{\mathrm{rel}}$ and $P_{\mathrm{rel},\dagger}$. Note that the classification obtained here includes both, strong and weak invariants whereas Kawabata et al.\ only give strong invariants. }
	\label{my_way_simplified:table:example_1_point_gap}
\end{table}
%
\medskip

\noindent
\paragraph{Choosing the relevant states symmetrically with respect to reflections about the imaginary axis} 
\label{my_way_simplified:extracting_symmetries_constraints:symmetric_imaginary}
A second “symmetric” choice is to declare states above the real line to be relevant, 
\begin{align*}
	\sigma_{\mathrm{rel}} = \sigma_{++} \cup \sigma_{-+} = - \overline{\sigma_{\mathrm{rel}}} 
	, 
\end{align*}
and then proceed with the analysis. In this case the relevant projection
\begin{align*}
	P_{\mathrm{rel}} = 1_{\R}(H_{\Re}) \; 1_{[0,\infty)}(H_{\Im})
	= 1_{[0,\infty)}(H_{\Im})
\end{align*}
only depends on the imaginary part, and two of the symmetries flip the sign of $H_{\Im}$, I obtain two constraints and one odd time-reversal symmetry for the projection, 
\begin{align*}
	V_{\dagger} \, P_{\mathrm{rel}} \, V_{\dagger}^{-1} &= \id_{\Hil} - P_{\mathrm{rel}} 
	, 
	\\
	T \, P_{\mathrm{rel}} \, T^{-1} &= \id_{\Hil} - P_{\mathrm{rel}} 
	, 
	\\
	T_{\dagger} \, P_{\mathrm{rel}} \, T_{\dagger}^{-1} &= P_{\mathrm{rel}} 
	. 
\end{align*}
Because the $\dagger$-projection 
\begin{align*}
	P_{\mathrm{rel},\dagger} &= 1_{\sigma_{\mathrm{rel}}}(H^{\dagger}) = 1_{[0,\infty)}(-H_{\Im})
	\\
	&= \id_{\Hil} - P_{\mathrm{rel}} 
\end{align*}
coincides with the projection onto the orthogonal complement of the relevant states, $P_{\mathrm{rel}}$ and $P_{\mathrm{rel},\dagger}$ are again not independent projections. In principle, this may mean that a relative index~\eqref{general:eqn:index_two_projections} between the two projections may become relevant for the classification. 

Rephrasing the three symmetries in terms of the spectrally flattened hamiltonian $Q = - Q_{\dagger}$, 
\begin{align*}
	V_{\dagger} \, Q \, V_{\dagger}^{-1} &= - Q 
	, 
	\\
	T \, Q \, T^{-1} &= - Q 
	, 
	\\
	T_{\dagger} \, Q \, T_{\dagger}^{-1} &= Q 
	, 
\end{align*}
shows that I am classifying an operator with a chiral, an odd particle-hole and an odd time-reversal symmetry, \ie I am dealing with an operator from Cartan-Altland-Zirnbauer class~CII. While I am not aware of an exhaustive classification of class~CII, I know its strong invariants (\cf \eg \cite[Table~I]{Chiu_Teo_Schnyder_Ryu:classification_topological_insulators:2016}), and I see that this is in perfect agreement with the imaginary line gap classification in \cite[Table~IX]{Kawabata_Shiozaki_Ueda_Sato:classification_non_hermitian_systems:2019}; I have summarized the result in Table~\ref{my_way_simplified:table:example_1_imaginary_line_gap}. 

\begin{table}
	\begin{center}
		\renewcommand{\arraystretch}{1.5}
		\begin{tabular}{c | c | c | c | c | c}
			\emph{Classification} & $d = 0$ & $d = 1$ & $d = 2$ & $d = 3$ & $d = 4$ \\ \hline \hline 
			Index~\eqref{general:eqn:index_two_projections} + $P_{\mathrm{rel}}$ class~CII & $\Z \oplus 0$ & $\Z \oplus 2 \Z$ & $\Z \oplus 0$ & $\Z \oplus \Z_2$ & $\Z \oplus \Z_2$ \\ \hline
			\cite{Kawabata_Shiozaki_Ueda_Sato:classification_non_hermitian_systems:2019} for AII, $\eta_+$, $\mathrm{L_i}$ & $0$ & $2 \Z$ & $0$ & $\Z_2$ & $\Z_2$ \\
		\end{tabular}
	\end{center}
	\caption{Classification obtained here for $\sigma_{\mathrm{rel}} = \sigma_{++} \cup \sigma_{-+} = - \overline{\sigma_{\mathrm{rel}}}$ in dimensions $d \leq 4$ and comparison with the imaginary line gap classification obtained in \cite[Table~IX]{Kawabata_Shiozaki_Ueda_Sato:classification_non_hermitian_systems:2019} for class~AII, $\eta_+$. 
	My analysis shows I am dealing with an operator of Cartan-Altland-Zirnbauer class~CII, which is part of the canonical literature (\eg \cite[Table~I]{Chiu_Teo_Schnyder_Ryu:classification_topological_insulators:2016}). The agreement extends to the other dimensions $d = 5 , 6 , 7$ and thus, thanks to Bott periodicity, to all dimensions. 
	As before, the relative index~\eqref{general:eqn:index_two_projections} between the projections $P_{\mathrm{rel}}$ and $P_{\mathrm{rel},\dagger} = \id_{\Hil} - P_{\mathrm{rel}}$ may give an additional topological invariant. }
	\label{my_way_simplified:table:example_1_imaginary_line_gap}
\end{table}
%
\medskip

\noindent
\paragraph{Choosing the relevant states point-symmetrically} 
\label{par:choosing_the_relevant_states_point_symmetrically}
So far the classification obtained here is in perfect agreement with the classification of \cite{Zhou_Lee:non_hermitian_topological_classification:2019,Kawabata_Shiozaki_Ueda_Sato:classification_non_hermitian_systems:2019}. However, looking at the spectrum of the operator, there is one more case, namely when 
\begin{align*}
	\sigma_{\mathrm{rel}} = \sigma_{++} \cup \sigma_{--}
	= - \sigma_{\mathrm{rel}}
\end{align*}
is chosen point-symmetrically. This choice does not correspond to any line gap in \cite{Kawabata_Shiozaki_Ueda_Sato:classification_non_hermitian_systems:2019}. 

Repeating the symmetry analysis once more reveals that $P_{\mathrm{rel}}$ and $P_{\mathrm{rel},\dagger}$ satisfy all the same relations as in the case $\sigma_{\mathrm{rel}} = \sigma_{++} \cup \sigma_{-+}$ (Table~\ref{my_way_simplified:table:example_1_point_symmetric_sigma_rel}). So also here the system's topology is classified as class~CII — even though it is not sensible to call my choice of relevant spectrum as having an imaginary line gap in the sense of \cite{Kawabata_Shiozaki_Ueda_Sato:classification_non_hermitian_systems:2019}. 

\begin{table}
	\begin{center}
		\renewcommand{\arraystretch}{1.5}
		\begin{tabular}{c | c | c | c | c | c}
			\emph{Classification} & $d = 0$ & $d = 1$ & $d = 2$ & $d = 3$ & $d = 4$ \\ \hline \hline 
			Index~\eqref{general:eqn:index_two_projections} + $P_{\mathrm{rel}}$ class~CII & $\Z \oplus 0$ & $\Z \oplus 2 \Z$ & $\Z \oplus 0$ & $\Z \oplus \Z_2$ & $\Z \oplus \Z_2$ \\ \hline
			\cite{Kawabata_Shiozaki_Ueda_Sato:classification_non_hermitian_systems:2019} not applicable & —  & — & — & — & — \\ 
		\end{tabular}
	\end{center}
	\caption{Classification obtained here for $\sigma_{\mathrm{rel}} = \sigma_{++} \cup \sigma_{--} = - \sigma_{\mathrm{rel}}$ in dimensions $d \leq 4$. This point-symmetric case does not fit into the point gap/line gap classification scheme of \cite{Kawabata_Shiozaki_Ueda_Sato:classification_non_hermitian_systems:2019}. However, since $P_{\mathrm{rel}}$ possesses the same number and types of symmetries as in the case $\sigma_{\mathrm{rel}} = \sigma_{++} \cup \sigma_{-+}$, their classifications coincide. }
	\label{my_way_simplified:table:example_1_point_symmetric_sigma_rel}
\end{table}
%

\subsubsection{Example 2: a hamiltonian with point-symmetric spectrum} 
\label{my_way_simplified:non_hermitian_examples:2}
The first example was completely consistent with the classification of \cite{Zhou_Lee:non_hermitian_topological_classification:2019,Kawabata_Shiozaki_Ueda_Sato:classification_non_hermitian_systems:2019} — with the exception of the last case, which is not covered by the literature. The second example is more interesting: I consider another operator that merely has inversion-symmetric spectrum. This is the case if, say, $H$ could possess a chiral symmetry, 
\begin{align*}
	S \, H \, S^{-1} &= - H 
	, 
\end{align*}
and an even particle-hole${}^{\dagger}$ symmetry, 
\begin{align*}
	C_{\dagger} \, H \, C_{\dagger}^{-1} &= - H^{\dagger} 
	, 
\end{align*}
which I again assume to commute, 
\begin{align*}
	[C_{\dagger} , S] = 0
	. 
\end{align*}
Their product $T_{\dagger} = C_{\dagger} S$ is an even time-reversal${}^{\dagger}$ symmetry, 
\begin{align*}
	T_{\dagger} \, H \, T_{\dagger}^{-1} &= + H^{\dagger} 
	. 
\end{align*}
As before, it helps tremendously to express the symmetry action in terms of real and imaginary part operators: 
\begin{subequations}\label{my_way_simplified:eqn:example_2_symmetries_real_imaginary_part}
	\begin{align}
		S \, H \, S^{-1} = + H
		\; \; &\Longleftrightarrow \; \; 
		\begin{cases}
			S \, H_{\Re} \, S^{-1} &= - H_{\Re} \\
			S \, H_{\Im} \, S^{-1} &= - H_{\Im} \\
		\end{cases}
		, 
		\label{my_way_simplified:eqn:example_2_symmetries_real_imaginary_part:chiral}
		\\
		C_{\dagger} \, H \, C_{\dagger}^{-1} = + H^{\dagger}
		\; \; &\Longleftrightarrow \; \; 
		\begin{cases}
			C_{\dagger} \, H_{\Re} \, C_{\dagger}^{-1} &= - H_{\Re} \\
			C_{\dagger} \, H_{\Im} \, C_{\dagger}^{-1} &= - H_{\Im} \\
		\end{cases}
		,
		\label{my_way_simplified:eqn:example_2_symmetries_real_imaginary_part:PH_dagger}
		\\
		T_{\dagger} \, H \, T_{\dagger}^{-1} = + H^{\dagger}
		\; \; &\Longleftrightarrow \; \; 
		\begin{cases}
			T_{\dagger} \, H_{\Re} \, T_{\dagger}^{-1} &= + H_{\Re} \\
			T_{\dagger} \, H_{\Im} \, T_{\dagger}^{-1} &= + H_{\Im} \\
		\end{cases}
		\label{my_way_simplified:eqn:example_2_symmetries_real_imaginary_part:TR_dagger}
		. 
	\end{align}
\end{subequations}
The presence of $S$ and $C_{\dagger}$ symmetry both impose $\sigma(H) = - \sigma(H)$. Nevertheless, I shall continue to assume that $\sigma(H)$ splits into four spectral islands akin to Figure~\ref{intro:figure:highly_symmetric_spectrum} that come in two pairs as in Figure~\ref{my_way_simplified:figure:point_symmetric_spectrum}. 

\begin{figure}[t]
	\begin{centering}
		\resizebox{70mm}{!}{\includegraphics{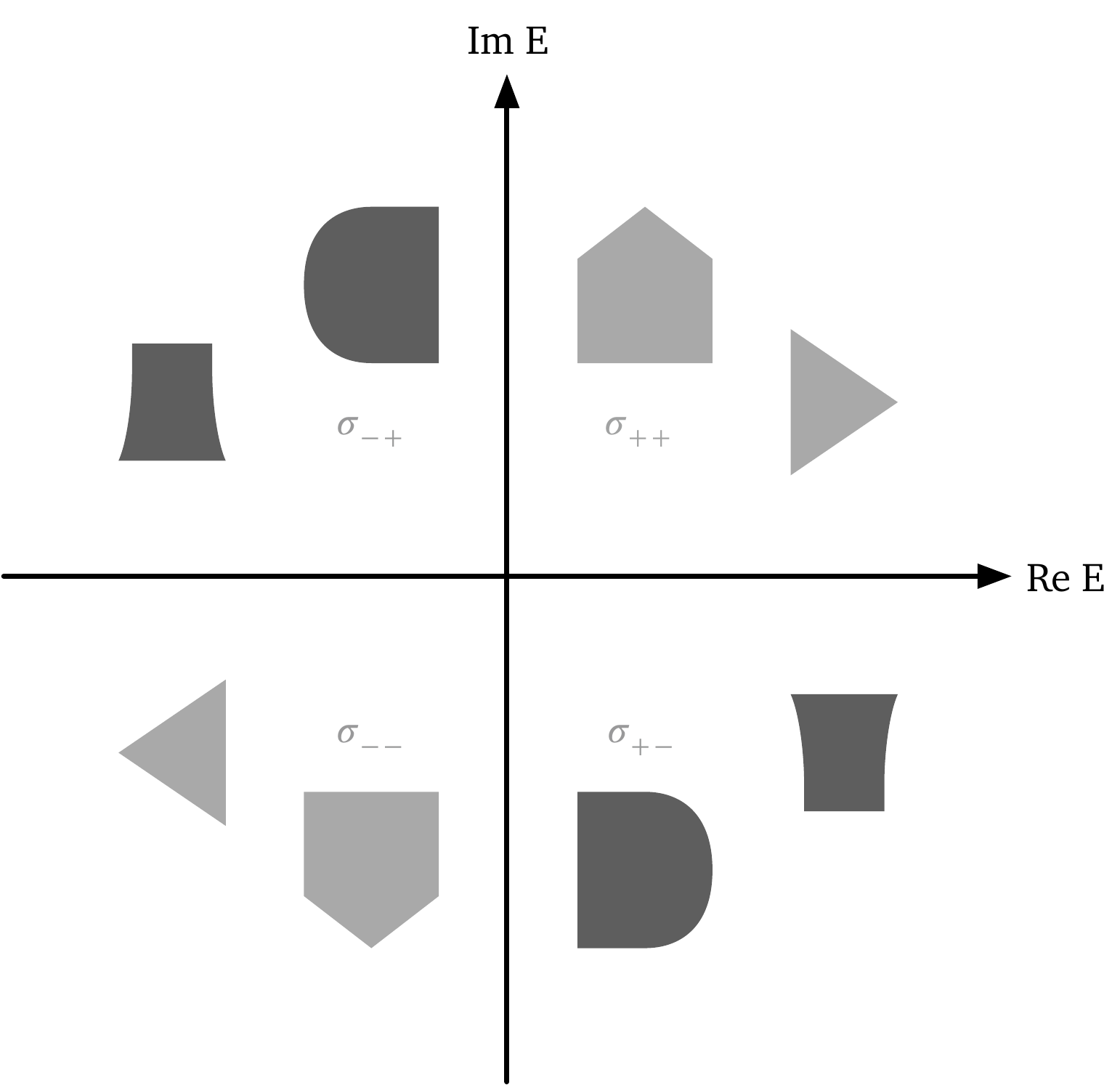}}
	\end{centering}
	\caption{Spectrum with point symmetry only. Very often such symmetries are due to the presence of discrete symmetries such as parity and particle-hole${}^{\dagger}$ symmetries. }
	\label{my_way_simplified:figure:point_symmetric_spectrum}
\end{figure}

There are three distinct cases, which I will analyze in turn below: I can choose $\sigma_{\mathrm{rel}}$ asymmetrically, point-symmetrically or reflection-symmetrically. The three topological classifications are summarized in Table~\ref{my_way_simplified:table:example_2_comparison_correct_classification} (\emph{not} Table~\ref{my_way_simplified:table:example_2_comparison_false_classification}). 

Let me compare that with the classification obtained by Kawabata et al. At first glance, it seems that this operator is — in the notation of \cite{Kawabata_Shiozaki_Ueda_Sato:classification_non_hermitian_systems:2019} — of class BDI${}^{\dagger}$ and their classifications, summarized in \cite[Table~V]{Kawabata_Shiozaki_Ueda_Sato:classification_non_hermitian_systems:2019}, are seemingly in perfect agreement. Unfortunately, BDI${}^{\dagger}$ is the false topological class to compare my classification to. 

\begin{table}
	\begin{center}
		\renewcommand{\arraystretch}{1.5}
		\begin{tabular}{c | c | c | c | c}
			\emph{Classification} & $d = 1$ & $d = 2$ & $d = 3$ & $d = 4$ \\ \hline \hline 
			$\sigma_{\mathrm{rel}} = \sigma_{++}$ & $0$ & $0$ & $0$ & $2 \Z$ \\ \hline
			\cite{Kawabata_Shiozaki_Ueda_Sato:classification_non_hermitian_systems:2019} for BDI${}^{\dagger}$, $\mathrm{P}$ & $0$ & $0$ & $0$ & $2 \Z$ \\ \hline \hline 
			$\sigma_{\mathrm{rel}} = \sigma_{++} \cup \sigma_{+-}$ & $\Z$ & $0$ & $0$ & $0$ \\ \hline
			\cite{Kawabata_Shiozaki_Ueda_Sato:classification_non_hermitian_systems:2019} for BDI${}^{\dagger}$, $\mathrm{L_r}$ & $\Z$ & $0$ & $0$ & $0$ \\ \hline \hline 
			$\sigma_{\mathrm{rel}} = \sigma_{++} \cup \sigma_{--}$ & $0$ & $0$ & $0$ & $2 \Z \oplus 2 \Z$ \\ \hline
			\cite{Kawabata_Shiozaki_Ueda_Sato:classification_non_hermitian_systems:2019} for BDI${}^{\dagger}$, $\mathrm{L_i}$ & $0$ & $0$ & $0$ & $2 \Z \oplus 2 \Z$ \\
		\end{tabular}
	\end{center}
	\caption{Comparison of the classifications obtained here for the three choices of relevant spectrum to class BDI${}^{\dagger}$ in \cite[Table~V]{Kawabata_Shiozaki_Ueda_Sato:classification_non_hermitian_systems:2019}. 
	While it seems BDI${}^{\dagger}$ is the correct class in \cite{Kawabata_Shiozaki_Ueda_Sato:classification_non_hermitian_systems:2019} to compare it to and it looks as if the two classifications agree mathematically, this is the \emph{wrong class} in \cite{Kawabata_Shiozaki_Ueda_Sato:classification_non_hermitian_systems:2019} to compare it to. The reason lies with differences how symmetries are labeled here and in \cite{Kawabata_Shiozaki_Ueda_Sato:classification_non_hermitian_systems:2019}. The correct comparison can be found in Table~\ref{my_way_simplified:table:example_2_comparison_correct_classification}. }
	\label{my_way_simplified:table:example_2_comparison_false_classification}
\end{table}

Rather, in the notation of \cite{Kawabata_Shiozaki_Ueda_Sato:classification_non_hermitian_systems:2019} the system is of class~D, $\mathcal{S}_+$, which admits a point- and a line-type subclass, but makes no distinction between real and imaginary line gaps. No doubt, it is easy to make this misclassification because of the subtle differences in notation between this work and \cite{Kawabata_Shiozaki_Ueda_Sato:classification_non_hermitian_systems:2019} (see Table~\ref{symmetries:table:overview_symmetries}). Indeed, in Kawabata et al.'s notation $S$ is a commuting sublattice symmetry, $C_{\dagger}$ a particle-hole symmetry (no $\dagger$) and $T$ a time-reversal${}^{\dagger}$ symmetry (\emph{with} $\dagger$). 

The classification scheme developed here deviates from Kawabata et al.'s: the spectrum suggests three essentially distinct rather than two choices; I can choose the spectrum asymmetrically, point-symmetrically or reflection-symmetrically. 

Our classifications only agree in one of three cases, \cf Table~\ref{my_way_simplified:table:example_2_comparison_correct_classification}, namely when I pick the spectrum reflection symmetrically. This corresponds to the line gap classification in \cite{Kawabata_Shiozaki_Ueda_Sato:classification_non_hermitian_systems:2019}. The asymmetric case does not agree with the point gap classification as one might have expected. And as before the point-symmetric choice for $\sigma_{\mathrm{rel}}$ falls outside Kawabata et al.'s classification scheme.  

\begin{table}
	\begin{center}
		\renewcommand{\arraystretch}{1.5}
		\begin{tabular}{c | c | c | c | c}
			\emph{Classification} & $d = 1$ & $d = 2$ & $d = 3$ & $d = 4$ \\ \hline \hline 
			$\sigma_{\mathrm{rel}} = \sigma_{++}$ & $0$ & $0$ & $0$ & $2 \Z$ \\ \hline
			\cite{Kawabata_Shiozaki_Ueda_Sato:classification_non_hermitian_systems:2019} for D, $\mathcal{S}_+$, $\mathrm{P}$ & $\Z$ & $0$ & $\Z$ & $0$ \\ \hline \hline 
			$\sigma_{\mathrm{rel}} = \sigma_{++} \cup \sigma_{+-}$ & $\Z$ & $0$ & $0$ & $0$ \\ \hline
			\cite{Kawabata_Shiozaki_Ueda_Sato:classification_non_hermitian_systems:2019} for D, $\mathcal{S}_+$, $\mathrm{L}$ & $\Z$ & $0$ & $0$ & $0$ \\ \hline \hline 
			$\sigma_{\mathrm{rel}} = \sigma_{++} \cup \sigma_{--}$ & $0$ & $0$ & $0$ & $2 \Z \oplus 2 \Z$ \\ \hline
			\cite{Kawabata_Shiozaki_Ueda_Sato:classification_non_hermitian_systems:2019} not applicable & — & — & — & — \\
		\end{tabular}
	\end{center}
	\caption{The table compares the classifications obtained here for the three choices of relevant spectrum to that obtained by Kawabata et al. for the symmetry class~D, $\mathcal{S}_+$ (\cf \cite[Table~VII]{Kawabata_Shiozaki_Ueda_Sato:classification_non_hermitian_systems:2019}). Only for the two reflection-symmetric cases where \eg $\sigma_{\mathrm{rel}} = \sigma_{++} \cup \sigma_{+-} = + \overline{\sigma_{\mathrm{rel}}}$ do they agree. }
	\label{my_way_simplified:table:example_2_comparison_correct_classification}
\end{table}
\medskip

\noindent
\paragraph{Asymmetrially chosen relevant spectrum} 
\label{par:asymmetrially_chosen_relevant_spectrum}
When I choose $\sigma_{\mathrm{rel}} = \sigma_{++}$, a quick glance at Table~\ref{symmetries:table:overview_symmetries} reveals that $S$ and $C_{\dagger}$ are broken, \ie they map 
\begin{align*}
	P_{\mathrm{rel}} &= 1_{[0,\infty)}(H_{\Re}) \; 1_{[0,\infty)}(H_{\Im})
\end{align*}
neither to itself, 
\begin{align*}
	P_{\mathrm{rel},\dagger} &= 1_{[0,\infty)}(H_{\Re}) \; 1_{(-\infty,0]}(H_{\Im})
\end{align*}
or their orthogonal complements. But $P_{\mathrm{rel}}$ and $P_{\mathrm{rel},\dagger}$ possesses the even time-reversal-type symmetry $T_{\dagger} = C_{\dagger} S$, 
\begin{align*}
	T_{\dagger} \, P_{\mathrm{rel}} \, T_{\dagger}^{-1} &= P_{\mathrm{rel}}
	, 
	\\
	T_{\dagger} \, P_{\mathrm{rel},\dagger} \, T_{\dagger}^{-1} &= P_{\mathrm{rel},\dagger}
	. 
\end{align*}
However, the states in $P_{\mathrm{rel},\dagger}$ are irrelevant and there is no relation between $P_{\mathrm{rel}}$ and its daggered counterpart. Therefore, it does not enter our classification. 

Consequently, I need to apply the classification of class~AI vector bundles, which is trivial in $d \leq 3$ and for $d = 4$ isomorphic class~AI vector bundles are classified by their second Chern number (\cf \cite[Theorem~1.6]{DeNittis_Gomi:AII_bundles:2014}). 
\medskip

\noindent
\paragraph{Point-symmetrically chosen relevant states} 
\label{par:point_symmetrically_chosen_relevant_states}
The second distinct case is when I pick the spectrum point symmetrically, \eg 
\begin{align*}
	\sigma_{\mathrm{rel}} = \sigma_{++} \cup \sigma_{--} = - \sigma_{\mathrm{rel}} 
	. 
\end{align*}
Then $P_{\mathrm{rel}}$ has a complete set of symmetries, 
\begin{align*}
	S \, P_{\mathrm{rel}} \, S^{-1} &= P_{\mathrm{rel}} 
	, 
	\\
	C_{\dagger} \, P_{\mathrm{rel}} \, C_{\dagger}^{-1} &= P_{\mathrm{rel}} 
	, 
	\\
	T_{\dagger} \, P_{\mathrm{rel}} \, T_{\dagger}^{-1} &= P_{\mathrm{rel}}
	. 
\end{align*}
So for the purpose of the classification, the action of $C_{\dagger}$ and $T_{\dagger}$ is that of two even time-reversal-type symmetries — just like in Example~1 when the states were chosen reflection-symmetrically. A classification of such systems has been obtained in \cite[Theorems~4.5 and 4.9]{DeNittis_Lein:symmetries_electromagnetism:2020}, where it was shown that such systems can be viewed as $2 \times \mathrm{AI}$. Consequently, up until dimension $3$, those systems as topologically trivial and in dimension $4$, the system is characterized by two second Chern numbers. 
\medskip

\noindent
\paragraph{Reflection-symmetrically chosen relevant states} 
\label{par:reflection_symmetrically_chosen_relevant_states}
Should I designate all of the spectrum to the right of the imaginary axis or above the real axis as relevant, \ie 
\begin{align*}
	\sigma_{\mathrm{rel}} = \sigma_{++} \cup \sigma_{+-} = + \overline{\sigma_{\mathrm{rel}}} 
\end{align*}
or $\sigma_{\mathrm{rel}} = \sigma_{++} \cup \sigma_{-+} = - \overline{\sigma_{\mathrm{rel}}}$, then $V$ and $C$ give us two constraints, 
\begin{align*}
	S \, P_{\mathrm{rel}} \, S^{-1} &= \id_{\Hil} - P_{\mathrm{rel}} 
	, 
	\\
	C_{\dagger} \, P_{\mathrm{rel}} \, C_{\dagger}^{-1} &= \id_{\Hil} - P_{\mathrm{rel}} 
	, 
\end{align*}
and the even time-reversal symmetry $T_{\dagger} = C_{\dagger} S$, 
\begin{align*}
	T_{\dagger} \, P_{\mathrm{rel}} \, T_{\dagger}^{-1} = P_{\mathrm{rel}}
	. 
\end{align*}
That means the relevant classification is that of class~BDI, because the spectrally flattened operator $Q$ possesses a chiral symmetry, an even particle-hole symmetry and an even time-reversal symmetry. This is consistent with the line gap classification of class~D, $\mathcal{S}_+$ of \cite{Kawabata_Shiozaki_Ueda_Sato:classification_non_hermitian_systems:2019} (see Table~\ref{my_way_simplified:table:example_2_comparison_correct_classification}). 

\subsection{General classification of normal operators} 
\label{my_way_simplified:general_principles}
I now extract the general principles from the examples discussed in the previous section, starting with a list akin to equation~\eqref{my_way_simplified:eqn:example_1_symmetries_real_imaginary_part} that tells me how symmetries transform real and imaginary part operators.

\subsubsection{How symmetries and $\dagger$-symmetries manifest themselves} 
\label{my_way_simplified:general_principles:manifestation_symmetries}
When the hamiltonian is normal, equation~\eqref{diagonalizable_operators:eqn:naive_decomposition_real_imaginary_parts} splits $H = H_{\Re} + \ii H_{\Im}$ into the sum of two hermitian, \emph{commuting} operators. The presence of ($\dagger$)-symmetries of $H$ leads to symmetries of $H_{\Re}$ and $H_{\Im}$. Additional sign flips may occur in $H_{\Im}$ when the symmetry transformation is antilinear or connects $H$ with $H^{\dagger} = H_{\Re} - \ii H_{\Im}$. For linear symmetries, this gives me 
\begin{subequations}\label{my_way_simplified:eqn:linear_non_dagger}
	\begin{align}
		V \, H \, V^{-1} = + H
		\; \; &\Longleftrightarrow \; \; 
		\begin{cases}
			V \, H_{\Re} \, V^{-1} &= + H_{\Re} \\
			V \, H_{\Im} \, V^{-1} &= + H_{\Im} \\
		\end{cases}
		, 
		\label{my_way_simplified:eqn:linear_non_dagger:ordinary}
		\\
		S \, H \, S^{-1} = - H
		\; \; &\Longleftrightarrow \; \; 
		\begin{cases}
			S \, H_{\Re} \, S^{-1} &= - H_{\Re} \\
			S \, H_{\Im} \, S^{-1} &= - H_{\Im} \\
		\end{cases}
		\label{my_way_simplified:eqn:linear_non_dagger:chiral}
		, 
	\end{align}
\end{subequations}
while for antilinear symmetries, I get 
\begin{subequations}\label{my_way_simplified:eqn:antilinear_non_dagger}
	\begin{align}
		T \, H \, T^{-1} = + H
		\; \; &\Longleftrightarrow \; \; 
		\begin{cases}
			T \, H_{\Re} \, T^{-1} &= + H_{\Re} \\
			T \, H_{\Im} \, T^{-1} &= - H_{\Im} \\
		\end{cases}
		, 
		\label{my_way_simplified:eqn:antilinear_non_dagger:TR}
		\\
		C \, H \, C^{-1} = - H
		\; \; &\Longleftrightarrow \; \; 
		\begin{cases}
			C \, H_{\Re} \, C^{-1} &= - H_{\Re} \\
			C \, H_{\Im} \, C^{-1} &= + H_{\Im} \\
		\end{cases}
		\label{my_way_simplified:eqn:antilinear_non_dagger:PH}
		. 
	\end{align}
\end{subequations}
Analogously, I obtain the sign combinations for linear $\dagger$-symmetries, 
\begin{subequations}\label{my_way_simplified:eqn:linear_dagger}
	\begin{align}
		V_{\dagger} \, H \, V_{\dagger}^{-1} = + H^{\dagger}
		\; \; &\Longleftrightarrow \; \; 
		\begin{cases}
			V_{\dagger} \, H_{\Re} \, V_{\dagger}^{-1} &= + H_{\Re} \\
			V_{\dagger} \, H_{\Im} \, V_{\dagger}^{-1} &= - H_{\Im} \\
		\end{cases}
		, 
		\label{my_way_simplified:eqn:linear_dagger:ordinary}
		\\
		S_{\dagger} \, H \, S_{\dagger}^{-1} = - H^{\dagger}
		\; \; &\Longleftrightarrow \; \; 
		\begin{cases}
			S_{\dagger} \, H_{\Re} \, S_{\dagger}^{-1} &= - H_{\Re} \\
			S_{\dagger} \, H_{\Im} \, S_{\dagger}^{-1} &= + H_{\Im} \\
		\end{cases}
		\label{my_way_simplified:eqn:linear_dagger:chiral}
		, 
	\end{align}
\end{subequations}
and their antilinear siblings, 
\begin{subequations}\label{my_way_simplified:eqn:antilinear_dagger}
	\begin{align}
		T_{\dagger} \, H \, T_{\dagger}^{-1} = + H^{\dagger}
		\; \; &\Longleftrightarrow \; \; 
		\begin{cases}
			T_{\dagger} \, H_{\Re} \, T_{\dagger}^{-1} &= + H_{\Re} \\
			T_{\dagger} \, H_{\Im} \, T_{\dagger}^{-1} &= + H_{\Im} \\
		\end{cases}
		, 
		\label{my_way_simplified:eqn:antilinear_dagger:TR}
		\\
		C_{\dagger} \, H \, C_{\dagger}^{-1} = - H^{\dagger}
		\; \; &\Longleftrightarrow \; \; 
		\begin{cases}
			C_{\dagger} \, H_{\Re} \, C_{\dagger}^{-1} &= - H_{\Re} \\
			C_{\dagger} \, H_{\Im} \, C_{\dagger}^{-1} &= - H_{\Im} \\
		\end{cases}
		\label{my_way_simplified:eqn:antilinear_dagger:PH}
		. 
	\end{align}
\end{subequations}
As usual, the four antilinear maps~\eqref{my_way_simplified:eqn:antilinear_non_dagger} and \eqref{my_way_simplified:eqn:antilinear_dagger} come in two flavors, the even and the odd variety depending on whether they square to $+ \id_{\Hil}$ or $- \id_{\Hil}$. 

To simplify notation, I introduce the sign $\epsilon_{\Re,\Im} = \pm 1$ determined from
\begin{align*}
	U \, H_{\Re,\Im} \, U^{-1} &= \epsilon_{\Re,\Im} H_{\Re,\Im} 
	. 
\end{align*}
These sign combinations come in pairs, \eg a TR${}^{\dagger}$ symmetry has the same sign combination as an ordinary, commuting symmetry. Of course, these symmetries are not having the same effect, the first one is antilinear, the second one linear. Moreover, these sign flips explain why the presence of symmetries of $H$ manifest themselves as symmetries in the spectrum (\cf Table~\ref{symmetries:table:overview_symmetries}). 

Similarly, I can compute the symmetry conditions on the unitary phase $V_H = H \, \sabs{H}^{-1}$ and the absolute value $\sabs{H} = \sqrt{H \, H^{\dagger}}$. All symmetries necessarily need to commute with $\sabs{H}$: for non-$\dagger$ symmetries the condition $U \, H \, U^{-1} = \pm H$ manifests itself as 
\begin{align*}
	U \, \bigl ( H \, H^{\dagger} \bigr ) \, U^{-1} &= U \, H \, U^{-1} \, U \, H^{\dagger} \, U^{-1} 
	\\
	&= (\pm 1)^2 \, H \, H^{\dagger} 
	= H \, H^{\dagger} 
	, 
\end{align*}
and therefore also preserves the absolute value, $U \, \sabs{H} \, U^{-1} = + \sabs{H}$. Diagonalizability is not needed in the above computation. By rewriting the phase operator as the product 
\begin{align*}
	V_H = H \, \sabs{H}^{-1} 
	, 
\end{align*}
I see that (anti)commutativity of $U$ with $H$ leads to (anti)commutativity with the phase operator $V_H$. 

For $\dagger$-symmetries, the discussion is more subtle and diagonalizability enters in the derivation: I can repeat the above computation and arrive at 
\begin{align*}
	U_{\dagger} \, \bigl ( H \, H^{\dagger} \bigr ) \, U_{\dagger}^{-1} &= H^{\dagger} \, H
	, 
\end{align*}
where $H$ and its adjoint have traded places on the right. However, by assumption $H$ is normal, \ie it commutes with $H^{\dagger}$, and I can reshuffle them as I see fit to get once more 
\begin{align*}
	U_{\dagger} \, \sabs{H} \, U_{\dagger}^{-1} &= + \sabs{H}
	. 
\end{align*}
In conclusion, for normal operators $\dagger$-symmetries of $H$ again commute with the modulus operator $\sabs{H}$, and $H$ (anti)commutes with $U_{\dagger}$ if and only if its phase $V_H$ does. 

\subsubsection{Symmetries lead to relations amongst the spectral projections} 
\label{my_way_simplified:general_principles:symmetries_relations_spectral_projections}
Relations~\eqref{my_way_simplified:eqn:linear_non_dagger}–\eqref{my_way_simplified:eqn:antilinear_dagger} allow me to enumerate symmetries and constraints for spectral projections, including $P_{\mathrm{rel}}$. In all of my examples, I am able to replace $\sigma_{\mathrm{rel}}$ with a finite or semi-infinite square in the complex plane, 
\begin{align*}
	\Omega = \Omega_{\Re} \times \Omega_{\Im}
	= [\lambda_0 , \lambda_1] \times \ii [\mu_0 , \mu_1] 
	\subseteq \C
	. 
\end{align*}
That simplifies the discussion, spectral projections~\eqref{diagonalizable_operators:eqn:P_rel_product_set} for products transform as 
\begin{align}
	U \, 1_{\Omega}(H) \, U^{-1} &= 1_{\Omega_{\Re}}(\epsilon_{\Re} H_{\Re}) \; 1_{\Omega_{\Im}}(\epsilon_{\Im} H_{\Im}) 
	\notag 
	\\
	&= 1_{\epsilon_{\Re} \Omega_{\Re}}(H_{\Re}) \; 1_{\epsilon_{\Im} \Omega_{\Im}}(H_{\Im})
	. 
	\label{my_way_simplified:eqn:transformation_PVM}
\end{align}
Depending on the sign combination $(\epsilon_{\Re},\epsilon_{\Im})$, the symmetry $U$ maps the spectral projection for the set $\Omega$ onto a spectral projection for the set $\pm \Omega$ or $\pm \overline{\Omega}$. I emphasize that the arguments extend directly to arbitrary Borel sets in the complex plane (\cf Appendix~\ref{appendix:functional_calculus_normal_operators}). 

Equation~\eqref{diagonalizable_operators:eqn:P_rel_product_set} (and also Appendix~\ref{appendix:diagonalizable_operators:facts}) tells us that the spectral projections of $H$ and its adjoint are related through equation~\eqref{diagonalizable_operators:eqn:P_rel_dagger_as_spectral_projection_of_H}. The latter further implies that if $\Omega \cap \sigma(H)$ is symmetric with respect to reflections about the real axis, \ie when $\Omega \cap \sigma(H) = \overline{\Omega \cap \sigma(H)}$, the corresponding spectral projections coincide, $1_{\Omega}(H) = 1_{\Omega}(H^{\dagger})$. 

\subsubsection{Deriving symmetries and constraints for the relevant states} 
\label{my_way_simplified:general_principles:deriving_symmetries_and_constraints}
Once I designate a subset $\sigma_{\mathrm{rel}}$ of the energy or frequency spectrum as being physically relevant, I obtain the projection $P_{\mathrm{rel}} = 1_{\sigma_{\mathrm{rel}}}(H)$ onto the corresponding states. The presence of ($\dagger$-)symmetries in $H$ will lead to the presence of symmetries and constraints of $P_{\mathrm{rel}}$. I call $U$ a symmetry of $P_{\mathrm{rel}}$ if and only if it is a linear or antilinear bounded map with bounded inverse that commutes with $P_{\mathrm{rel}}$, 
\begin{align}
	U \, P_{\mathrm{rel}} \, U^{-1} &= P_{\mathrm{rel}}
	. 
	\label{my_way_simplified:eqn:symmetry}
\end{align}
Similarly, I call $U$ a $\dagger$-symmetry if and only if it connects $P_{\mathrm{rel}}$ with $P_{\mathrm{rel},\dagger}$ from equation~\eqref{diagonalizable_operators:eqn:P_rel_dagger}, 
\begin{align}
	U_{\dagger} \, P_{\mathrm{rel}} \, U_{\dagger}^{-1} &= P_{\mathrm{rel},\dagger}
	. 
	\label{my_way_simplified:eqn:dagger_symmetry}
\end{align}
Constraints connect $P_{\mathrm{rel}}$ with $\id_{\Hil} - P_{\mathrm{rel},(\dagger)}$ and similarly come in two flavors, either as an ordinary constraint, 
\begin{align}
	U \, P_{\mathrm{rel}} \, U^{-1} &= \id_{\Hil} - P_{\mathrm{rel}}
	. 
	\label{my_way_simplified:eqn:constraint}
\end{align}
or a $\dagger$-constraint, 
\begin{align}
	U_{\dagger} \, P_{\mathrm{rel}} \, U_{\dagger}^{-1} &= \id_{\Hil} - P_{\mathrm{rel},\dagger}
	. 
	\label{my_way_simplified:eqn:dagger_constraint}
\end{align}
Which — if any — of these symmetries and constraints are present depends on the symmetries of $H$ and the set of relevant states $\sigma_{\mathrm{rel}}$. This analysis generalizes the arguments from Section~\ref{my_way_simplified:non_hermitian_examples}. Not all symmetries and constraints need to be immediately obvious. For example, the $C_{\dagger} = T U_{\dagger}$ symmetry from Section~\ref{my_way_simplified:extracting_symmetries_constraints:asymmetric} was preserved even though $T$ and $U_{\dagger}$ separately were broken. 

For a given relevant projection $P_{\mathrm{rel}}$ and its sibling $P_{\mathrm{rel},\dagger}$, I can construct two spectrally flattened operators, namely 
\begin{align*}
	Q &= \id_{\Hil} - 2 P_{\mathrm{rel}} 
	, 
	\\
	Q_{\dagger} &= \id_{\Hil} - 2 P_{\mathrm{rel},\dagger} 
	. 
\end{align*}
These give an alternative characterization of my symmetries and constraints: antilinear symmetries of the type~\eqref{my_way_simplified:eqn:symmetry} give rise to time-reversal symmetries of $Q$, \ie I have $T \, Q \, T^{-1} = + Q$. Similarly, a constraint becomes a chiral symmetry or a particle-hole symmetry of $Q$, depending on whether it is linear or antilinear. 

The list of ($\dagger$-)symmetries and ($\dagger$-)constraints then leads to a usually incomplete list of topological invariants that are supported. A linear constraint~\eqref{my_way_simplified:eqn:constraint} indicates the presence of a class~AIII winding number. An odd antilinear symmetry~\eqref{my_way_simplified:eqn:symmetry} suggests that Kane-Melé-type invariants enter the classification. Indeed, this was the case for the examples studied in Section~\ref{my_way}. I used a conjunctive here, because at present we do not fully understand the cases when symmetries and constraints are simultaneously present in their normal and $\dagger$ variety. For example, there exists no \emph{complete} list of topological invariants. 

In conclusion, the classification of diagonalizable operators can be recast as a problem of classifying an orthogonal projection or a pair of orthogonal projections with symmetries and constraints connecting them. As we have seen in a few of the examples, this means I can not only compute the topological classification, but I also already know how to compute at least some of the invariants. 

\subsection{Topological invariants} 
\label{my_way_simplified:topological_invariants}
At this point I decide against pursuing a complete zoology of normal operators (which as I shall argue in Section~\ref{my_way} extends to generic diagonalizable operators), and comparing that with the 38 classes and gap-type subclasses of \cite{Kawabata_Shiozaki_Ueda_Sato:classification_non_hermitian_systems:2019,Zhou_Lee:non_hermitian_topological_classification:2019}. Nevertheless, there are two generic situations, which I think merit a few more comments.

\subsubsection{Cases where $P_{\mathrm{rel}}$ has no $\dagger$-symmetries and $\dagger$-constraints} 
\label{general:case_of_no_dagger_symmetries}
Here, I need to study the classification of the hermitian operators $P_{\mathrm{rel}}$ or $Q$ with a given set of symmetries and constraints. I reckon that standard techniques, including \cite{Kennedy_Zirnbauer:Bott_periodicity_Z2_symmetric_ground_states:2016,DeNittis_Gomi:AI_bundles:2014,DeNittis_Gomi:AII_bundles:2014,DeNittis_Gomi:AIII_bundles:2015} could be used to exhaustively describe the topology in these cases. So even though the initial problem is non-hermitian, the projections I arrive at in the end \emph{are} hermitian and standard theory for hermitian operators applies; the non-hermitian nature of the problem seems to play no role for the topological classification. 

\subsubsection{Cases where $P_{\mathrm{rel}}$ has some $\dagger$-symmetries and/or $\dagger$-constraints} 
\label{general:dagger_symmetries}
Suppose my setting is such that $P_{\mathrm{rel}}$ comes furnished with a $\dagger$-symmetry~\eqref{my_way_simplified:eqn:symmetry}. Then the $\dagger$-symmmetry $U_{\dagger}$ relates the two projections $P_{\mathrm{rel}}$ and $P_{\mathrm{rel},\dagger}$ with one another. And for this case, it is well-known that provided certain technical conditions are satisfied (\cf \cite[Proposition~2.4]{Avron_Seiler_Simon:charge_deficiency_charge_transport_index_formulas_projections:1994}) I can define a topological invariant 
\begin{align}
	\mathrm{Index} \bigl ( P_{\mathrm{rel}},P_{\mathrm{rel},\dagger} \bigr ) = - \mathrm{Ind} \bigl ( P_{\mathrm{rel}} \, U_{\dagger} \, P_{\mathrm{rel}} \bigr ) 
	\label{general:eqn:index_two_projections}
\end{align}
that classifies one projection relative to another. Here, the $\mathrm{Index}$ map on the left is the index of two projections that is formally defined as 
\begin{align*}
	\mathrm{Index}(P,Q) &= \dim \, \ker \, \bigl ( Q - (\id_{\Hil} - P) \bigr ) 
	\, + \\
	&\quad 
	- \dim \, \ker \, \bigl ( P - (\id_{\Hil} - Q) \bigr ) 
	, 
\end{align*}
and $\mathrm{Ind}$ on the right is the \emph{Fredholm index}
\begin{align*}
	\mathrm{Ind}(F) = \dim \ker F - \dim \ker F^{\dagger} 
	. 
\end{align*}
Both are invariant under continuous deformations by definition, and are known to be topological invariants in certain situations; that includes certain models for systems exhibiting the Quantum Hall Effect (\cf \cite[Section~6]{Avron_Seiler_Simon:charge_deficiency_charge_transport_index_formulas_projections:1994}). 

It stands to reason that they are topological invariants also for certain classes of diagonalizable topological insulators. To the best of my knowledge, these have not yet been considered in this context, though. 

Naturally, the presence of other ($\dagger$-)symmetries and ($\dagger$-)\linebreak constraints needs to be taken into account. 

\subsection{Relation to point and line gap classifications of \cite{Kawabata_Shiozaki_Ueda_Sato:classification_non_hermitian_systems:2019}} 
\label{my_way_simplified:relation_to_Kawabata_et_al}
One initial datum in the topological classification of Kawabata et al. \cite{Kawabata_Shiozaki_Ueda_Sato:classification_non_hermitian_systems:2019} is the gap type, \ie whether I am dealing with a point gap, a generic line gap, a real line gap or an imaginary line gap. 

The point gap case corresponds to choosing a relevant part of the spectrum that lacks any symmetry, 
\begin{align*}
	\sigma_{\mathrm{rel}} \neq - \sigma_{\mathrm{rel}} , \; \pm \overline{\sigma_{\mathrm{rel}}} 
	. 
\end{align*}
The real line gap means I deem all states to the left of the imaginary axis as relevant, \ie I pick 
\begin{align*}
	\sigma_{\mathrm{rel}} = \bigl \{ E \in \sigma(H) \; \; \vert \; \; \Re E \leq 0 \bigr \} 
	. 
\end{align*}
Likewise, an imaginary line gap corresponds to choosing 
\begin{align*}
	\sigma_{\mathrm{rel}} = \bigl \{ E \in \sigma(H) \; \; \vert \; \; \Im E \leq 0 \bigr \} 
	. 
\end{align*}
Of course, I could have equivalently picked states to the right of the imaginary axis or above the real line, respectively; this just amounts to replacing $P_{\mathrm{rel}}$ with $\id_{\Hil} - P_{\mathrm{rel}}$. 

Generic line gaps similarly split the complex plane in half, where the relevant states are those lying below or above the line; by convention on where the spectral gap resides this line must run through the origin. In contrast to the two previous line gaps the vector $n_{\Re} + \ii n_{\Im}$ that spans the line need not be parallel to the imaginary or real axis. The relevant states are those that lie below or above the line, 
\begin{align*}
	\sigma_{\mathrm{rel}} = \Bigl \{ E_{\Re} + \ii E_{\Im} \in \sigma(H) \; \; \big \vert \; \; \pm \bigl ( n_{\Re} \, E_{\Re} + n_{\Im} \, E_{\Im} \bigr ) \leq 0 \Bigr \} 
	. 
\end{align*}
I have summarized all of these cases in Table~\ref{my_way_simplified:table:gap_types_relevant_spectrum}. 

The second example covered in Section~\ref{my_way_simplified:non_hermitian_examples:2} has two cases which fall into the (generic) line gap classification, namely when $\sigma_{\mathrm{rel}}$ consists of the states above the real axis \emph{or} to the right of the imaginary axis; both will lead to the same classification. That is because the operators possesses a symmetry which makes the spectrum point-symmetric and therefore connects $P_{\mathrm{rel}}$ to its complement $\id_{\Hil} - P_{\mathrm{rel}}$. 

When $H$ comes with symmetries that lead to symmetries in the spectrum, then these sets inherit these symmetries. I point once more to the examples discussed in Section~\ref{my_way_simplified:non_hermitian_examples}. 

One last word regarding spectral flattening. For simple spectral gaps (as in Figure~\ref{intro:figure:highly_symmetric_spectrum}, but not the nested Cs as in Figure~\ref{38_fold_recap:figure:nested_Cs_spectrum}) my choice of spectrally flattened hamiltonian~\eqref{my_way_simplified:eqn:spectrally_flattened_hamiltonian} coincides with the one obtained from the procedure outlined in \cite[Figure~2]{Kawabata_Shiozaki_Ueda_Sato:classification_non_hermitian_systems:2019} only for the point gap and real line gap case. When the line gap is imaginary, they differ by a factor $\pm \ii$, which is immaterial for their topological classification. 

\begin{table}
	\begin{centering}
		\newcolumntype{A}{>{\centering\arraybackslash\normalsize}m{30mm}}
		\newcolumntype{B}{>{\centering\arraybackslash\normalsize}m{40mm}}
		\renewcommand{\arraystretch}{1.5}
		\begin{tabular}{A | B}
			\emph{Gap type} & \emph{Relevant states} \\ \hline \hline 
			Generic line gap & above/below line \\ \hline
			Real line gap & $\pm \Re E \leq 0$ \\ \hline 
			Imaginary line gap & $\pm \Im E \leq 0$ \\ \hline 
			Point gap & asymmetrically chosen \\ 
		\end{tabular}
	\end{centering}
	\caption{Correspondence between gap type and relevant spectrum $\sigma_{\mathrm{rel}}$. My definition is consistent with the Fermi projection for hermitian systems, although mathematically, one could have equivalently chosen $\Re E \geq 0$ and $\Im E \geq 0$, respectively. }
	\label{my_way_simplified:table:gap_types_relevant_spectrum}
\end{table}
%
\section[Extending the classification of $P_{\mathrm{rel}}$ to the general case]{Extending the classification \linebreak of $P_{\mathrm{rel}}$ to the general case} 
\label{my_way}
The last piece of the puzzle is to check whether replacing the simplifying assumption of normality (Assumption~\ref{my_way_simplified:assumption:simplifying_assumption}) with diagonalizability (Assumption~\ref{intro:assumption:diagonalizability}) changes anything as far as the topological classification is concerned. Fortunately, the answer is no and the purpose of this section is to explain to the reader why. 

From this moment on let me operate under the original Assumption~\ref{intro:assumption:diagonalizability} from the introduction. When the ($\dagger$-)symmetries are not (anti)unitary with respect to the biorthogonal scalar product, the Hilbert space the symmetries would like to live in is different from the Hilbert space the operator feels most comfortable in. Let us explore some of the ramifications together.

\subsection[Mismatch of geometry and focussing on algebraic properties]{Mismatch of geometry and \linebreak focussing on algebraic properties} 
\label{my_way:geometry_vs_algebra}
This mismatch of geometries, which emerges from the two choices of scalar products, is well-known in the context of non-hermitian topological insulators. For example, Schomerus carefully works out the consequences this mismatch can have in \cite{Schomerus:non_reciprocal_response_non_hermitian_metamaterials:2020}. While the discussion is framed under the rubrik of “non-orthogonality”, I think it is more apt to speak of a \emph{mismatch} of geometries — I can always pick a scalar product that is compatible with \emph{either} the symmetries \emph{or} the hamiltonian, just not both simultaneously. 

That tension cannot be resolved, unless I simply forgo geometry altogether. Instead of working with operators on \emph{Hilbert} spaces, I discard the scalar product and just think of Banach spaces, that is normed, complete vector spaces. By definition of diagonalizability the similarity transform $G$ that relates the biorthogonal scalar product~\eqref{diagonalizable_operators:eqn:weighted_scalar_product} to the original scalar product is bounded and has a bounded inverse. Therefore, the norms $\scpro{\varphi}{\varphi}^{\nicefrac{1}{2}}$ and $\scppro{\varphi}{\varphi}^{\nicefrac{1}{2}}$
are equivalent (\cf my discussion in Section~\ref{diagonalizable_operators:topological_classification_algebraic}), and $\bigl ( \Hil , \scpro{\, \cdot \,}{\, \cdot \,} \bigr )$ and $\bigl ( \Hil , \scppro{\, \cdot \,}{\, \cdot \,} \bigr )$ agree as \emph{Banach} spaces. 

Rather than think of symmetries as (anti)unitaries, I can regard them as bounded, invertible (anti)linear maps that square to $\pm \id_{\Hil}$. Indeed, this exact same reasoning is often used in reverse: Kuiper's Theorem \cite{Kuiper:homotopy_type_unitary_group:1965}, for example, states that in class~A working with unitaries is the same is as working with bounded \emph{invertible} operators whose inverses are bounded, \ie that $\mathcal{B}(\Hil)^{-1} = \mathrm{GL}(\Hil)$ can be deformation retracted to $\mathcal{U}(\Hil)$. To obtain the point gap classification Kawabata et al.\ homotopically deform $H \in \mathcal{B}(\Hil)^{-1}$ to a unitary operator $\hat{H} \in \mathcal{U}(\Hil)$. Indeed, all ingredients for the topological classification of diagonalizable operators can be rephrased just using algebraic constructs: unitary becomes invertible with bounded inverse; the spectrum as a set can be traced to the \emph{invertibility} of $H - E$; commutativity, which characterizes diagonalizable operators via 
\begin{align*}
	\bigl [ H_{\Re} , H_{\Im} \bigr ] = 0 
	, 
\end{align*}
is entirely algebraic; the defining relations of projections, $P^2 = P$, and spectrally flattened hamiltonians, $Q^2 = \id_{\Hil}$, are algebraic. Put another way, it stands to reason that the topological classification of generic diagonalizable operators only depends on the \emph{algebraic} structure and not the geometric structure. 

The attentive reader will have noticed a gap in my line of argumentation: at least in case $\dagger$-symmetries are present, \ie symmetries which relate $H$ to $H^{\dagger}$, I need to use the adjoint, which is tied to the initially given scalar product and therefore decidedly not algebraic. The solution is to use the cartesian decomposition of $H = H_{\Re} + \ii H_{\Im}$ and replace the adjoint with the algebraic relation $H^{\dagger} = W^{-1} \, (H_{\Re} - \ii H_{\Im}) \, W$. 

\subsection{Continuous, symmetry- and gap-preserving deformations of $H$ lead to continuous deformations of $P_{\mathrm{rel}}$ and $P_{\mathrm{rel},\dagger}$} 
\label{my_way:homotopies}
One of the reasons I have had to exclude non-diagonalizable operators from the topological classification is that continuous symmetry- and gap-preserving deformations $\lambda \mapsto H(\lambda)$ (with respect to the norm topology) do \emph{not} lead to continuous deformations of $P_{\mathrm{rel}}(\lambda)$ (\cf my discussion in Section~\ref{why_diagonalizability_matters:spectral_projections_spectrally_flattened_hamiltonians}). Here I will take a moment to show that these deficiencies are cured once I impose diagonalizability, \ie that $P_{\mathrm{rel}}(\lambda)$ and $P_{\mathrm{rel},\dagger}(\lambda)$ inherit the continuity from $H(\lambda)$. In what follows, I will tacitly assume that all deformations $H(\lambda)$ preserve diagonalizability, the spectral gap and all relevant symmetries. 

A sensible approach would be to look at functional calculus more broadly, since it is one of 5 equivalent characterizations of diagonalizability given in Theorem~\ref{diagonalizable_operators:thm:characterizations_diagonalizable_operators}. For example, if I wanted to split $H = H_{\Re} + \ii H_{\Im}$ into real and imaginary parts, I would need to involve the biorthogonal scalar product~\eqref{diagonalizable_operators:eqn:weighted_scalar_product}. This, in turn, involves the \emph{$\lambda$-dependent} weight operator operator $W(\lambda) = G(\lambda)^{\dagger} G(\lambda)$, \ie for different values of $\lambda$, I would have to use a different scalar product. Proving the continuity of \eg the real part 
\begin{align*}
	H(\lambda) \mapsto H_{\Re}(\lambda) = \frac{1}{2} \Bigl ( H(\lambda) + W(\lambda)^{-1} \, H(\lambda)^{\dagger} \, W(\lambda) \Bigr )
\end{align*}
would now depend on the continuity of adjoining with $W(\lambda)$. This is not as simple as it looks since $G(\lambda)$ is not uniquely determined; in fact, concatenating any $\lambda$-dependent unitary $V(\lambda)$ to make $G'(\lambda) = V(\lambda) \, G(\lambda)$ leads to the exact same operator $W(\lambda)$. 

Fortunately, I have offered several equivalent definitions of $P_{\mathrm{rel}}$, and the complex integral~\eqref{intro:eqn:definition_P_rel} allows for a more direct approach. Likewise, $P_{\mathrm{rel},\dagger}$ can be expressed as a contour integral after replacing $H$ with $H^{\dagger}$. 

But let me study $P_{\mathrm{rel}}$ first. The contour integral has two variable components, the resolvent operator $\bigl ( H(\lambda) - z \bigr )^{-1}$ and the contour that encloses the relevant part $\sigma_{\mathrm{rel}}(\lambda)$ of the spectrum. I conjecture that the spectrum of diagonalizable operators is inner \emph{and} outer semicontinuous with respect to \emph{diagonalizable} perturbations; I will investigate this point in a future work \cite{Lein_Lenz:diagonalizable_operators:2020}. Proceeding under the assumption that this is indeed true, this will guarantee that on the one hand spectrum inside $\sigma_{\mathrm{rel}}(\lambda)$ cannot suddenly disappear; nor does the gap between $\sigma_{\mathrm{rel}}(\lambda)$ and the remainder of the spectrum suddenly collapse. To show continuity of $P_{\mathrm{rel}}(\lambda)$ for an arbitary parameter value $\lambda_0$, I pick a contour $\Gamma \bigl ( \sigma_{\mathrm{rel}}(\lambda_0) \bigr )$. Then at least in a small neighborhood of $\lambda_0$, the contour $\Gamma \bigl ( \sigma_{\mathrm{rel}}(\lambda_0) \bigr )$ encloses $\sigma_{\mathrm{rel}}(\lambda)$ and \emph{only} $\sigma_{\mathrm{rel}}(\lambda)$ also for $\lambda \approx \lambda_0$. Consequently, in the vicinity of $\lambda_0$, I can express
\begin{align}
	P_{\mathrm{rel}}(\lambda) = \frac{\ii}{2 \pi} \int_{\Gamma(\sigma_{\mathrm{rel}}(\lambda_0))} \dd z \, \bigl ( H(\lambda) - z \bigr )^{-1} 
	\label{my_way:eqn:P_rel_lambda_contour_integral}
\end{align}
as a contour integral with respect to a \emph{fixed} contour. The resolvent inherits the continuity of $H(\lambda)$ in the parameter, and from that I conclude that also $P_{\mathrm{rel}}(\lambda)$ is continuous in a neighborhood of $\lambda_0$. Since the value $\lambda_0$ was arbitrary, that shows continuity for as long as $\sigma_{\mathrm{rel}}(\lambda)$ is separated by a gap from the remainder of the spectrum. 

The necessary modifications for $P_{\mathrm{rel},\dagger}$ are straightforward: \emph{a priori} I need to assume that $\sigma_{\mathrm{rel}}(\lambda) \cap \sigma \bigl ( H(\lambda)^{\dagger} \bigr )$ is separated from the remainder $\sigma \bigl ( H(\lambda)^{\dagger} \bigr ) \setminus \sigma_{\mathrm{rel}}(\lambda)$ by a gap. So let me proceed under the assumption that this is so. Given that the spectrum of $H$ and $\sigma(H^{\dagger}) = \overline{\sigma(H)}$ are related by complex conjugation, the latter condition translates to $\overline{\sigma_{\mathrm{rel}}(\lambda)} \cap \sigma \bigl ( H(\lambda) \bigr )$ being gapped from the rest of the spectrum of $H$. The adjoint operation $H \mapsto H^{\dagger}$ is norm continuous (\cf \cite[Theorem~VI.3~(e)]{Reed_Simon:M_cap_Phi_1:1972}), and $H^{\dagger}$ is diagonalizable exactly when $H$ is (Lemma~\ref{appendix:diagonalizable_operators:lem:useful_facts}~(3)). Consequently, the adjoint of any continuous deformation $H(\lambda)$ is another continuous deformation $H(\lambda)^{\dagger}$ of a diagonalizable operator. So my arguments for $P_{\mathrm{rel}}(\lambda)$ apply to $P_{\mathrm{rel},\dagger}(\lambda)$ as well after replacing $H(\lambda)$ with its adjoint, provided I have a spectral gap. 

All of these arguments are compatible with the presence of ($\dagger$-)symmetries and ($\dagger$-)constraints. For example, a time-reversal symmetry $T$ of $H(\lambda)$ transforms the resolvent operator to 
\begin{align*}
	T \, \bigl ( H(\lambda) - z \bigr )^{-1} \, T^{-1} &= \bigl ( T \, H(\lambda) \, T^{-1} - \bar{z} \bigr )^{-1} 
	\\
	&= \bigl ( H(\lambda) - \bar{z} \bigr )^{-1} 
	\\
	&= \bigl ( H(\lambda) - \bar{z} \bigr )^{-1} 
	. 
\end{align*}
Note that equation~\eqref{my_way:eqn:P_rel_lambda_contour_integral} has a purely imaginary prefactor whose sign gets flipped when commuting it with $T$. Symmetries of the hamiltonian only become symmetries or constraints of the relevant projection if the relevant spectrum has the appropriate symmetry. Then one can choose a contour compatible with the symmetries. 

\subsection[Symmetries of the hamiltonian and the relevant projection]{Symmetries of the hamiltonian \linebreak and the relevant projection} 
\label{my_way:symmetries_hamiltonian}
The presence of symmetries now leads to relations between $H = H_{\Re} + \ii H_{\Im}$ and its biorthogonal adjoint $H^{\ddagger} = H_{\Re} - \ii H_{\Im}$. Evidently, the symmetry relations that only involve $H$, \eqref{my_way_simplified:eqn:linear_non_dagger} and \eqref{my_way_simplified:eqn:antilinear_non_dagger}, are untouched. 

The two linear $\dagger$-symmetries can be rephrased as 
\begin{widetext}
	\begin{subequations}\label{my_way:eqn:linear_dagger}
		\begin{align}
			V_{\dagger} \, H \, V_{\dagger}^{-1} = + W \, H^{\ddagger} \, W^{-1}
			\; \; &\Longleftrightarrow \; \; 
			\begin{cases}
				V_{\dagger} \, H_{\Re} \, V_{\dagger}^{-1} &= + W \, H_{\Re} \, W^{-1} \\
				V_{\dagger} \, H_{\Im} \, V_{\dagger}^{-1} &= - W \, H_{\Im} \, W^{-1} \\
			\end{cases}
			, 
			\label{my_way:eqn:linear_dagger:ordinary}
			\\
			S_{\dagger} \, H \, S_{\dagger}^{-1} = - W \, H^{\ddagger} \, W^{-1} 
			\; \; &\Longleftrightarrow \; \; 
			\begin{cases}
				S_{\dagger} \, H_{\Re} \, S_{\dagger}^{-1} &= - W \, H_{\Re} \, W^{-1} \\
				S_{\dagger} \, H_{\Im} \, S_{\dagger}^{-1} &= + W \, H_{\Im} \, W^{-1} \\
			\end{cases}
			\label{my_way:eqn:linear_dagger:chiral}
			. 
		\end{align}
	\end{subequations}
	%
%
Similarly, their antilinear siblings also involve conjugating with $W$, 
%
	%
	\begin{subequations}\label{my_way:eqn:antilinear_dagger}
		\begin{align}
			T_{\dagger} \, H \, T_{\dagger}^{-1} = + W \, H^{\ddagger} \, W^{-1}
			\; \; &\Longleftrightarrow \; \; 
			\begin{cases}
				T_{\dagger} \, H_{\Re} \, T_{\dagger}^{-1} &= + W \, H_{\Re} \, W^{-1} \\
				T_{\dagger} \, H_{\Im} \, T_{\dagger}^{-1} &= + W \, H_{\Im} \, W^{-1} \\
			\end{cases}
			, 
			\label{my_way:eqn:antilinear_dagger:TR}
			\\
			C_{\dagger} \, H \, C_{\dagger}^{-1} = - W \, H^{\ddagger} \, W^{-1} 
			\; \; &\Longleftrightarrow \; \; 
			\begin{cases}
				C_{\dagger} \, H_{\Re} \, C_{\dagger}^{-1} &= - W \, H_{\Re} \, W^{-1} \\
				C_{\dagger} \, H_{\Im} \, C_{\dagger}^{-1} &= - W \, H_{\Im} \, W^{-1} \\
			\end{cases}
			\label{my_way:eqn:antilinear_dagger:PH}
			. 
		\end{align}
	\end{subequations}
\end{widetext}
What is more, functional calculus is also compatible with these symmetries, which is important when I want to infer symmetries of spectral projections. The first thing to note is that $H$ is diagonalizable exactly when $H^{\dagger}$ is (\cf Lemma~\ref{appendix:diagonalizable_operators:lem:useful_facts}~(3)). Consequently, $H^{\dagger}$ has a functional calculus. Secondly, for any bounded invertible map $V$ with bounded inverse $V^{-1}$ the spectral projections of $V \, H \, V^{-1}$ and $H$ are related by the similarity transform $V$, 
\begin{align*}
	1_{\Lambda} \bigl ( V \, H \, V^{-1} \bigr ) = V \, 1_{\Lambda}(H) \, V^{-1} 
	, 
\end{align*}
where $\Lambda \subseteq \C$ is any Borel set in the complex plane (Lemma~\ref{appendix:diagonalizable_operators:lem:similarity_tranform_functional_calculus}). Compared with equation~\eqref{diagonalizable_operators:eqn:P_rel_dagger_as_spectral_projection_of_H}, the relation between the spectral projections of $H^{\dagger}$ and $H$ is augmented by the similarity transform $W$, 
\begin{align*}
	1_{\Lambda}(H^{\dagger}) &= 1_{\Lambda} \bigl ( W \, H^{\ddagger} \, W^{-1} \bigr ) 
	= W \, 1_{\Lambda}(H^{\ddagger}) \, W^{-1} 
	\\
	&= W \, 1_{\overline{\Lambda}}(H)  \, W^{-1} 
	. 
\end{align*}
%

\subsection[Symmetries and constraints of the projection onto the relevant states]{Symmetries and constraints of \linebreak the projection onto the relevant states} 
\label{my_way:symmetries_constraints_projection}
The readers can hopefully identify the pattern: the equations from Section~\ref{my_way_simplified} that contain $H^{\dagger}$ need to be augmented by adjoining with $W$. Starting with the daggered projection~\eqref{diagonalizable_operators:eqn:P_rel_dagger}, I instead get  
\begin{align}
	P_{\mathrm{rel},\dagger} = W \, 1_{\overline{\sigma_{\mathrm{rel}}}}(H) \, W^{-1} 
	. 
	\label{my_way:eqn:daggered_projection}
\end{align}
The daggered symmetry and constraint conditions that relate $P_{\mathrm{rel}}$ with $P_{\mathrm{rel},\dagger}$ are identical, \ie I still retain equations~\eqref{my_way_simplified:eqn:dagger_symmetry} and \eqref{my_way_simplified:eqn:dagger_constraint}, albeit for the modified $\dagger$-projection $P_{\mathrm{rel},\dagger}$ from equation~\eqref{my_way:eqn:daggered_projection}. 

The symmetries and constraints can equally be expreessed in terms of the spectrally flattened hamiltonian $Q = \id_{\Hil} - 2 P_{\mathrm{rel}}$ and its $\dagger$-counterpart 
\begin{align*}
	Q_{\dagger} = \id_{\Hil} - 2 P_{\mathrm{rel},\dagger} 
	, 
\end{align*}
whose definition is identical to that in Section~\ref{my_way_simplified} except that I insert \eqref{my_way:eqn:daggered_projection} as the $\dagger$-projection. 

\subsection[The topological classification for periodic operators is not affected]{The topological classification \linebreak for periodic operators is not affected} 
\label{my_way:periodic_operators_vector_bundles}
At least once I impose mild conditions on $W$, its presence does not affect the topological classification and indeed, for the purposes of the topological classification an emergent $\dagger$-symmetry like for example 
\begin{align}
	U_{\dagger} \, P_{\mathrm{rel}} \, U_{\dagger}^{-1} &= P_{\mathrm{rel},\dagger}
	\notag \\
	&= W \, 1_{\overline{\sigma_{\mathrm{rel}}}}(H) \, W^{-1}
	\label{my_way:eqn:emergent_dagger_symmetry_P_rel}
\end{align}
is just as good as the symmetry for $P_{\mathrm{rel}} = 1_{\sigma_{\mathrm{rel}}}(H)$ without $W$, 
\begin{align*}
	U \, 1_{\sigma_{\mathrm{rel}}}(H) \, U^{-1} = 1_{\overline{\sigma_{\mathrm{rel}}}}(H) 
	. 
\end{align*}
It is tempting to combine the two operators to $U' = W^{-1} \, U$, which now satisfies the above equation after replacing the (anti)unitary $U$ with the (anti)linear similarity transform $U'$. But I need to take a little more care. 

Let me spell out the details for the periodic case, where I can reach into the toolbox of vector bundle theory. The mild assumption I have referred to earlier is: 
\begin{assumption}\label{my_way:assumption:regulatity_periodic_case}
	We suppose that $H = \int_{\BZ}^{\oplus} \dd k \, H(k)$ and therefore $W = \int_{\BZ}^{\oplus} \dd k \, W(k)$ are periodic, and $H(k)$ as well as $W(k)$ depend on Bloch momentum $k$ in a \emph{continuous} fashion. 
\end{assumption}
At least for periodic tight-binding operators $H(k)$ this assumptions is almost always satisfied in practice. Exceptions do happen \eg due to conical intersections at $k = 0$ and $E = 0$ that are characteristic for Maxwell-type operators, which describe certain classical waves (\cf \cite{DeNittis_Lein:Schroedinger_formalism_classical_waves:2017} and \cite[Section~3.2]{DeNittis_Lein:adiabatic_periodic_Maxwell_PsiDO:2013}). However, usually periodic hamiltonians physicists encounter are even \emph{analytic}; for the purpose of topological classifications, though, continuity suffices (\cf the discussion in \cite[Section~II.F]{DeNittis_Lein:exponentially_loc_Wannier:2011}).

\subsubsection{The Bloch vector bundle} 
\label{my_way:periodic_operators_vector_bundles:bloch_vector_bundle}
Thanks to the above Assumption~\ref{my_way:assumption:regulatity_periodic_case} and the spectral gap, I can define the so-called \emph{Bloch vector bundle} 
\begin{align*}
	\mathcal{E}(P_{\mathrm{rel}}) : \bigsqcup_{k \in \BZ} \ran P_{\mathrm{rel}}(k) \overset{\pi}{\longrightarrow} \BZ 
\end{align*}
over the $d$-dimensional Brillouin torus $\BZ$; for a precise mathematical definition, I refer to \eg \cite[Section~IV]{DeNittis_Lein:exponentially_loc_Wannier:2011}. 

In the absence of any symmetries, \ie class~A, $\mathcal{E}(P_{\mathrm{rel}})$ and the analogously defined $\mathcal{E} \bigl ( W \, P_{\mathrm{rel}} \, W^{-1} \bigr )$ are isomorphic vector bundles: the continuous map $k \mapsto W(k)$ can now be interpreted as a vector bundle isomorphism 
\begin{align}
	\bfig
		\node Bloch_bundle(-500,0)[\mathcal{E}(P_{\mathrm{rel}})]
		\node W_Bloch_bundle(500,0)[\mathcal{E} \bigl ( W \, P_{\mathrm{rel}} \, W^{-1} \bigr )]
		\node Brillouin_torus(0,-600)[\BZ]
		\arrow[Bloch_bundle`W_Bloch_bundle;W]
		\arrow[Bloch_bundle`Brillouin_torus;\pi]
		\arrow[W_Bloch_bundle`Brillouin_torus;\pi_W]
	\efig
	\label{my_way:eqn:equivalence_vector_bundles}
\end{align}
as it depends continuosuly on $k$ and isormorphically maps the fiber $\ran P_{\mathrm{rel}}(k)$ onto the fiber
\begin{align*}
	\ran \bigl ( W(k) \, P_{\mathrm{rel}}(k) \, W(k)^{-1} \bigr ) = W(k) \, \bigl [ \ran P_{\mathrm{rel}}(k) \bigr ] 
\end{align*}
over the same base point $k$. The classification of class~A hermitian topological insulators now translates to classifying complex vector bundles up to isomorphism. The resulting equivalence classes, \ie topological phases, are characterized by the rank and Chern classes, albeit not necessarily completely (\cf \cite[Section~V.G]{DeNittis_Lein:exponentially_loc_Wannier:2011} for a counterexample). The converse conclusion nevertheless holds: as isomorphic vector bundles, they lie in the same topological phase and all topological invariants, that is rank and Chern classes, of $\mathcal{E}(P_{\mathrm{rel}})$ and $\mathcal{E} \bigl ( W \, P_{\mathrm{rel}} \, W^{-1} \bigr )$ necessarily agree. 

This line of argumentation tells me that for the purpose of topological classification the relation $P_{\mathrm{rel}} = P_{\mathrm{rel},\dagger}$ in the case where $H$ is normal and $\sigma_{\mathrm{rel}} = \overline{\sigma_{\mathrm{rel}}}$ (Section~\ref{my_way_simplified}) is just as good as $P_{\mathrm{rel}} = W^{-1} \, P_{\mathrm{rel},\dagger} \, W$. 

\subsubsection{Dealing with symmetries and $\dagger$-symmetries} 
\label{my_way:periodic_operators_vector_bundles:dagger_symmetries}
Once I add symmetries of $P_{\mathrm{rel}}$ into the mix, making these arguments precise is more involved if the works on classes~AI, AII and AIII \cite{DeNittis_Gomi:AI_bundles:2014,DeNittis_Gomi:AII_bundles:2014,DeNittis_Gomi:AIII_bundles:2015} are any indication. For each of these cases, I have to clarify what I mean by “vector bundle with symmetries” and make precise when two vector bundles with symmetries are equivalent. I will not attempt to venture into the details here and refer the readers to the aforementioned works by De~Nittis and Gomi. 

The presence of $\dagger$-symmetries lead to relations between $\mathcal{E}(P_{\mathrm{rel}})$ and the vector bundle $\mathcal{E}(P_{\mathrm{rel},\dagger}) \cong \mathcal{E} \bigl ( 1_{\overline{\sigma_{\mathrm{rel}}}}(H) \bigr )$: if $U_{\dagger}$ is linear, it means the vector bundles 
\begin{align*}
	\mathcal{E}(P_{\mathrm{rel}}) &\cong \mathcal{E} \bigl ( U_{\dagger} \, P_{\mathrm{rel}} \, U_{\dagger}^{-1} \bigr ) 
	= \mathcal{E}(P_{\mathrm{rel},\dagger}) 
	\\
	&\cong \mathcal{E} \bigl ( 1_{\overline{\sigma_{\mathrm{rel}}}}(H) \bigr )
\end{align*}
are isomorphic in the sense of class~A (denoted with $\cong$). That is because the presence of the $\dagger$-symmetry leads to the fiber-wise relation 
\begin{align}
	U_{\dagger} \, P_{\mathrm{rel}}(k) \, U_{\dagger}^{-1} = W(k) \, P_{\mathrm{rel},\dagger}(k) \, W(k)^{-1} 
	, 
	\label{my_way:eqn:fiber_wise_relation_vector_bundles}
\end{align}
which can equivalently written as $U'(k) \, P_{\mathrm{rel}}(k) \, U'(k)^{-1} = P_{\mathrm{rel},\dagger}(k)$ for $U'(k) = W(k)^{-1} \, U_{\dagger}(k)$. 

When $U_{\dagger}$ is antilinear, I can adapt the arguments from \cite[Section~V.C]{DeNittis_Lein:exponentially_loc_Wannier:2011}: usually antilinear symmetries flip the sign of $k$, so I need to replace $k$ by $-k$ on the right-hand side in the fiber-wise relation~\eqref{my_way:eqn:fiber_wise_relation_vector_bundles}. As a result the new fiber-wise relation leads to 
\begin{align*}
	\mathcal{E}^{\dagger}(P_{\mathrm{rel}}) \cong f^* \bigl ( \mathcal{E}(P_{\mathrm{rel},\dagger}) \bigr )
\end{align*}
being isomorphic in the sense of complex vector bundles. On the left $\mathcal{E}^{\dagger}(P_{\mathrm{rel}})$ is the conjugate vector bundle where all transition functions are replaced by their complex conjugates (\cf \cite[Chapter~14]{Milnor_Stasheff:characteristic_classes:1974}). And on the right, I am considering the pullback bundle with respect to the function $f : k \mapsto -k$ that flips the sign of momentum. Intuitively, to construct the pullback bundle $f^* \bigl ( \mathcal{E}(P_{\mathrm{rel},\dagger}) \bigr )$ I glue the fibers $\ran P_{\mathrm{rel},\dagger}(k)$ together in a mirror universe. 

Adapting the arguments and the computation in the proof of \cite[Theorem~5.4]{DeNittis_Lein:exponentially_loc_Wannier:2011}, I arrive at the following relation between the Chern classes: 
\begin{align}
	c_n \bigl ( \mathcal{E}(P_{\mathrm{rel}}) \bigr ) = (-1)^n \, c_n \bigl ( \mathcal{E}(P_{\mathrm{rel},\dagger}) \bigr ) 
	, 
	&&
	n \in \N
	\label{my_way:eqn:relation_Chern_classes_P_rel_P_rel_dagger}
\end{align}
Put another way, all even Chern classes agree whereas all odd Chern classes are equal in magnitude, but have opposite sign. The sign flip comes from the relation 
\begin{align*}
	c_n \bigl ( \mathcal{E}^{\dagger}(P_{\mathrm{rel},\dagger}) \bigr ) = (-1)^n \, c_n \bigl ( \mathcal{E}(P_{\mathrm{rel},\dagger}) \bigr ) 
\end{align*}
between the Chern classes of a vector bundle and its conjugate bundle (\cf \cite[Lemma~14.9]{Milnor_Stasheff:characteristic_classes:1974}). 

In case the relevant spectrum $\sigma_{\mathrm{rel}} = \overline{\sigma_{\mathrm{rel}}}$ is chosen symmetrically, the relevant projections $P_{\mathrm{rel},\dagger} = W^{-1} \, P_{\mathrm{rel}} \, W$ are related by a similarity transform and lead to isomorphic complex vector bundles. Thus, their Chern numbers all agree, 
\begin{align*}
	c_n \bigl ( \mathcal{E}(P_{\mathrm{rel}}) \bigr ) = c_n \bigl ( \mathcal{E}(P_{\mathrm{rel},\dagger}) \bigr ) 
	. 
\end{align*}
When combined with equation~\eqref{my_way:eqn:relation_Chern_classes_P_rel_P_rel_dagger}, I deduce that all \emph{odd} Chern classes must vanish. In low dimension, $d \leq 3$, this means the presence of a time-reversal${}^{\dagger}$ symmetry forces the Bloch bundle $\mathcal{E}(P_{\mathrm{rel}})$ to be trivial as a complex vector bundle. 

Of course, the above arguments do not preclude the absence of topological phenomena even in low dimension, only that the first Chern class is not a useful topological invariant. Other topological invariants may become relevant, though. Consequently, absent any other symmetries I should regard the vector bundle not as a complex (\ie class~A) vector bundle, but as a class~AI ($U_{\dagger}^2 = + \id_{\Hil}$) or class~AII ($U_{\dagger}^2 = - \id_{\Hil}$) vector bundle, which have been constructed and classified in \cite{DeNittis_Gomi:AI_bundles:2014,DeNittis_Gomi:AII_bundles:2014}. Provided the time-reversal symmetry connects fibers at $k$ and $-k$, in low dimension ($d \leq 3$) class~AI vector bundles over the torus are all trivial (\cf \cite[Theorem~1.6]{DeNittis_Gomi:AI_bundles:2014}); and class~AII vector bundles of $\dim \BZ \leq 4$ are characterized by Kane-Melé-type invariants (\cf \cite[Theorems~1.5 and 1.7]{DeNittis_Gomi:AII_bundles:2014}). 

\subsubsection{Dealing with ($\dagger$-)constraints} 
\label{my_way:periodic_operators_vector_bundles:constraints}
The presence $W$ in constraints is not changing anything either. That is because by the  the complex vector bundles $\mathcal{E}(\id_{\Hil} - P_{\mathrm{rel}})$ and 
\begin{align*}
	\mathcal{E} \bigl ( W \, (\id_{\Hil} - P_{\mathrm{rel}}) \, W^{-1} \bigr ) \cong \mathcal{E}(\id_{\Hil} - P_{\mathrm{rel}})
\end{align*}
are isomorphic, and the constraint $U \, P_{\mathrm{rel}} \, U^{-1} = \id_{\Hil} - P_{\mathrm{rel}}$ is equivalent to the constraint
\begin{align}
	U \, P_{\mathrm{rel}} \, U^{-1} = W \, (\id_{\Hil} - P_{\mathrm{rel}}) \, W^{-1}
	\label{my_way:eqn:constraint_condition_with_similarity_transform}
\end{align}
\emph{with} $W$ for the purpose of topological classifications. Absent any other symmetries it stands to reason that projections with a constraint of the form~\eqref{my_way:eqn:constraint_condition_with_similarity_transform} are classified as class~AIII vector bundles \cite{DeNittis_Gomi:AIII_bundles:2015}, \emph{provided} $W(k)$ is at least continuous in Bloch momentum $k$. 

\subsubsection{The relative index of two projections} 
\label{my_way:periodic_operators_vector_bundles:relative_index}
There \emph{is} one place where I cannot get rid of $W$, namely for the relative index~\eqref{general:eqn:index_two_projections} for two projections. Usually the relative index is well-defined if $W$ were \emph{unitary} rather than just bounded invertible (and other, technical conditions on the projections are satisfied). But here, $W$ is only bounded invertible, and a more careful analysis is necessary to ensure the index is well-defined and a topological invariant. Even if it were well-defined, I still would have to prove that for a given topological class this relative index of projections can be non-zero (a math problem) and manifests itself in experiment (a physics problem). 

In that case the relative index~\eqref{general:eqn:index_two_projections} would retain some information on the geometry of the system. The definition of the index is entirely algebraic and could be non-zero even when $W = \id_{\Hil}$, \ie when $H$ is normal with respect to the scalar product that makes symmetries (anti)unitary. 

One important point I want to impress upon the reader is that the Hilbert space structure in my arguments is not needed. It suffices that the vector bundle isomorphisms are implemented fiberwise by bounded \emph{invertible} maps rather than unitaries, for example. And symmetries are likewise implemented by bounded, (anti)linear, invertible maps that square to $\pm \id_{\Hil}$. Conversely, any complex vector bundle can be equipped with a family of scalar products on each of the fibers to make it into a hermitian vector bundle (\cf \cite[Proposition~1.2]{Hatcher:vector_bundles_K_theory:2009}).

\subsection{Extension to disordered system via $K$-theory} 
\label{my_way:K_theory}
While my hands-on arguments with Bloch vector bundles only apply to periodic operators, it stands to reason that they extend to disordered systems as well, at least in the weak disorder limit. The standard approach here is to use $K$-theory, \eg \cite{Thiang:K_theoretic_classification_topological_insulators:2016,Alldridge_Max_Zirnbauer:bulk_boundary_correspondences_free_fermion_topological_phases:2020} or \cite{Prodan_Schulz_Baldes:complex_topological_insulators:2016} for the two complex classes; twisted equivariant $K$-theory \cite{Freed_Moore:twisted_equivariant_matter:2013,Gomi:twisted_equivariant_K_theory:2017} only applies to periodic systems. And in principle, it should be possible to use $K$-groups for oblique projections $P = P^2$ (sometimes referred to as idempotents) and bounded invertible operators with bounded inverses rather than unitaries (\cf \eg Chapters~III and IV in \cite{Blackadar:K_theory:2008}); \emph{that strongly suggests that I can perform the topological classification of non-hermitian operators only on the basis of algebraic rather than geometric data (the latter being derived from a scalar product).} 

Unfortunately, given the plethora of approaches to $K$-theory — I could start from vector bundles, twisted crossed product $C^*$-algebras or von Neumann algebras, include equivariants twists, etc. — it would not seem a wise investment to focus on one particular flavor and give the readers all the necessary details, only to shift the conversation to a technical comparison of different $K$-theoretical frameworks. Nevertheless, broadly speaking any $K$-theoretic treatment must be consistent with a vector-bundle-theoretic approach when the operators are periodic; however, $K$-theory may be used to \emph{extend} this classification to more general systems that lack periodicity \cite{Bourne_Prodan:Chern_numbers_aperiodic_systems:2018}. 
\section{Discussion and comparison with literature} 
\label{discussion}
As this article and others (see \eg \cite{DeNittis_Gomi:K_theoretic_classification_operators_on_Krein_spaces:2019,Bliokh_Leykam_Lein_Nori:topological_classification_homogeneous_electromagnetic_media:2019,Yang_Schnyder_Hu_Chiu:Fermion_doubling_theorems:2019,Yang_Chiu_Fang_Hu:Jones_polynomial_knot_transitions_hermitian_non_hermitian_topological_semimetals:2020,Wojcik_Sun_Bzdusek_Fan:topological_classification_non_hermitian_hamiltonians:2020}) have shown, the classification of non-hermitian operators is still by no means well-understood and the last word has yet to be spoken. This article improves our understanding of three key aspects: 
\begin{enumerate}[(1)]
	\item I have proposed an algorithmic classification procedure that starts with input from physics: after selecting what states are physically relevant, I just need to turn the crank (\cf Sections~\ref{my_way_simplified} and \ref{my_way}). At the end, the problem is reduced to the classification of (pairs of) projections with symmetries and constraints. 
	\item Maintaining diagonalizability of operators is crucial in order to ensure continuity of projections and unitaries, which enter the classification here and in the literature (\cf Sections~\ref{diagonalizable_operators} and \ref{why_diagonalizability_matters}). 
	\item It seems that the classification of diagonalizable non-hermitian operators is based solely on algebraic properties (\eg the spectrum and (anti)commutativity of certain operators) rather than geometric properties (\ie quantities derived from scalar products). 
\end{enumerate}

\subsection{Comparison with the literature} 
\label{discussion:literature}
While I have not attempted to perform an exhaustive classification and derive a “complete” zoology of non-hermitian operators, I have attempted to propose a generic scheme and shown how to implement it for some example operators. My examples from Section~\ref{my_way_simplified:non_hermitian_examples} give some indication on how it compares with the literature, though, in particular the works \cite{Kawabata_Shiozaki_Ueda_Sato:classification_non_hermitian_systems:2019,Zhou_Lee:non_hermitian_topological_classification:2019}, which broke new ground in our field. 

Generally, the classification procedure here seems to be more general that those two works: when the relevant spectrum is chosen point symmetrically, my algorithm can classify the system in a straightforward fashion even though it does not fit into the point gap/line gap scheme of Kawabata et al. Beyond that, our classifications seem to agree only partially: ignoring the case $\sigma_{\mathrm{rel}} = - \sigma_{\mathrm{rel}}$, the first example from Section~\ref{my_way_simplified:non_hermitian_examples:1} is in perfect agreement with \cite{Kawabata_Shiozaki_Ueda_Sato:classification_non_hermitian_systems:2019}; the second example (Section~\ref{my_way_simplified:non_hermitian_examples:2}) is only in partial agreement though. It is not clear to me whether this is due to a scientific typo on either end as is suggested by the result (compare the classifications in Tables~\ref{my_way_simplified:table:example_2_comparison_false_classification} and \ref{my_way_simplified:table:example_2_comparison_correct_classification}) or a genuine disagreement between our methods. 

Of course, I have not chosen these two examples randomly. I wanted to obtain cases that I could classify using only existing theory. This reveals another weakness in our current understanding of topological insulators — for only very few cases do we have an \emph{exhaustive} classification in terms of topological invariants; exhaustive means that we have a complete list of topological invariants that uniquely label each topological phase. At present even in the Cartan-Altland-Zirnbauer classification, of the ten classes, only 4 are well-understood, namely class~A \cite{Hatcher:vector_bundles_K_theory:2009,Grauert:analytische_Faserungen:1958,DeNittis_Lein:exponentially_loc_Wannier:2011}, AI \cite{DeNittis_Lein:exponentially_loc_Wannier:2011,DeNittis_Gomi:AI_bundles:2014}, AII \cite{DeNittis_Gomi:AII_bundles:2014} and AIII \cite{DeNittis_Gomi:AIII_bundles:2015}. Only for those do we have proofs that we have obtained a complete list of topological invariants. Even then our knowledge is either limited to lower-dimensional spaces (typically $d \leq 4$) or subject to additional conditions (like the stable rank condition for class~A). 

Newer works \cite{Yang_Chiu_Fang_Hu:Jones_polynomial_knot_transitions_hermitian_non_hermitian_topological_semimetals:2020,Wojcik_Sun_Bzdusek_Fan:topological_classification_non_hermitian_hamiltonians:2020} as well as my main results indicate the classifications results obtained in \cite{Kawabata_Shiozaki_Ueda_Sato:classification_non_hermitian_systems:2019,Zhou_Lee:non_hermitian_topological_classification:2019} are insufficient and incomplete in two ways: first of all, Wojcik et al.'s classification via homotopy theory \cite{Wojcik_Sun_Bzdusek_Fan:topological_classification_non_hermitian_hamiltonians:2020} links the topological classification to (non-abelian!) braid groups; in the same vein other works have started to use knot theory \cite{Yang_Chiu_Fang_Hu:Jones_polynomial_knot_transitions_hermitian_non_hermitian_topological_semimetals:2020} to characterize certain topological properties of the system derived from the Fermi surface. I will discuss this aspect in more detail in Section~\ref{discussion:beyond_k_theory} below. 

The second point is not just directly related to \cite{Wojcik_Sun_Bzdusek_Fan:topological_classification_non_hermitian_hamiltonians:2020}, but also the second major aspect of this article, and that is the issue of diagonalizability that has not seen sufficient study. 

\subsection[The diagonalizability assumption: further research is needed]{The diagonalizability assumption: \linebreak further research is needed} 
\label{discussion:diagonalizability}
Like many articles on topological insulators, I need to involve more math than your average work from theoretical physics. That is because many of the significant contributions — including \eg \cite{Kawabata_Shiozaki_Ueda_Sato:classification_non_hermitian_systems:2019,Shiozaki_Sato_Gomi:band_topology_3d_crystallographic_groups:2018} — arise from collaborations between theoretical and mathematical physicists. The mathematical tools often have yet to be developed and one needs to know the inner workings to use them correctly. 

The issue of diagonalizability belongs in this category: to the best of my knowledge, there exists no universally agreed upon definition of diagonalizable operator on infinite-dimensional Hilbert spaces. Some identify normal operators as diagonalizable (by insisting that one can find a diagonalizing similarity transform that is unitary). Other works insist on pure point spectrum so that the operator has a complete set of proper eigenvectors. Neither are general enough to treat diagonalizable periodic operators that have continuous spectrum due to (non-constant) energy bands. 

The definition of diagonalizability I have given here is to my knowledge new, and I will explore aspects like perturbations of diagonalizable operators in a future work. I expect that diagonalizable operators have all the nice properties of normal operators, \ie they behave just like hermitian operators but may have complex spectrum. Two important ramifications that are of immediate relevance to the topological classification are that spectra should depend continuously on the perturbation parameter and as a consequence, spectral projections are well-defined and continuous in the perturbation parameter; both are \emph{false} for generic non-hermitian operators (\cf Section~\ref{why_diagonalizability_matters}). 

For these reasons, it seems that existing classifications, including \cite{Kawabata_Shiozaki_Ueda_Sato:classification_non_hermitian_systems:2019,Zhou_Lee:non_hermitian_topological_classification:2019,DeNittis_Gomi:K_theoretic_classification_operators_on_Krein_spaces:2019} only apply to \emph{diagonalizable} operators rather than generic non-hermitian operators. At the very least this point should be addressed specifically and explicitly (\eg by resolving the issues mentioned in Section~\ref{why_diagonalizability_matters:gap_Kawabata_classification}). 

I am by no means alone in singling out this sticking point: \eg \cite{Wojcik_Sun_Bzdusek_Fan:topological_classification_non_hermitian_hamiltonians:2020} identifies the line where $H$ has a Jordan block as \emph{the} topological obstacle in their classification. And other works (\eg \cite{Yang_Chiu_Fang_Hu:Jones_polynomial_knot_transitions_hermitian_non_hermitian_topological_semimetals:2020}) also specifically address this point. 

\subsection{Deriving bulk-boundary correspondences for non-hermitian systems} 
\label{discussion:bulk_boundary_correspondences}
Just like \eg \cite{Gong_et_al:topological_phases_non_hermitian_systems:2018,Kawabata_Shiozaki_Ueda_Sato:classification_non_hermitian_systems:2019,Zhou_Lee:non_hermitian_topological_classification:2019} the present work is solely concerned with the \emph{bulk} classification. At the end of the day, this is only the first step towards proving bulk-boundary correspondences~\eqref{intro:eqn:bulk_boundary_correspondence}. Given that the platforms to realize non-hermitian operators often involve classical waves, the physical observables are typically the boundary modes themselves; by preparing wave packets with specific $k$-values and frequencies, the whole boundary $k$-space can be swept by \eg varying the incident angle of a laser relative to the surface normal. In contrast, measuring the transverse conductivity only gives us the net number of edge modes in the Quantum Hall Effect. 

That then leaves the “mathematical” bulk-boundary correspondence, the second equality $T_{\mathrm{bdy}} = f(T_{\mathrm{bulk}})$ in equation~\eqref{intro:eqn:bulk_boundary_correspondence}, which needs to be derived. How does the present work advance the state-of-the-art in this respect?

\subsubsection[Persistence of hermitian topological phenomena in certain non-hermitian systems]{Persistence of hermitian topological phenomena \linebreak in certain non-hermitian systems} 
\label{discussion:bulk_boundary_correspondences:persistence_hermitian_topological_effects}
It is useful to distinguish topological phenomena that are “non-hermitian versions of topological phenomena in hermitian systems” from \emph{bona fide} non-hermitian topological phenomena that have no hermitian analog. And at least for topological phenomena of the first category, it stands to reason that existing techniques to derive bulk-boundary correspondences can be applied directly. My formalism has the advantage that I can easily make this vague distinction mathematically precise. 

The reason for this is simple: after the first step in my construction I obtain an orthogonal projection $P_{\mathrm{rel}}$, and this projection “no longer remembers” whether the relevant states are associated with real spectrum of a hermitian operator or with spectrum of a diagonalizable operator with complex spectrum. And if $P_{\mathrm{rel}}$ possesses no $\dagger$-symmetries and $\dagger$-constraints, the classification is \emph{identical} to that of a hermitian operator — namely the spectrally flattened hamiltonian $Q = \id_{\Hil} - 2 P_{\mathrm{rel}}$. The similarity is closest if there exists a scalar product with respect to which the symmetry operators are (anti)unitary and $H$ is normal. However, the arguments in Section~\ref{my_way} prove that under mild conditions (specifically Assumption~\ref{my_way:assumption:regulatity_periodic_case}) the classification extends verbatim from normal to \emph{diagonalizable} operators. 

In these circumstances, I can just apply existing techniques to $P_{\mathrm{rel}}$ and/or $Q$. The formalism developed by Schulz-Baldes and Prodan applies to the two complex classes, class~A and class~AIII (\cf \cite[Chapter~7]{Prodan_Schulz_Baldes:complex_topological_insulators:2016}), for instance. Non-hermitian, diagonalizable systems of this kind exhibit topological phenomena with hermitian analogs; and it is for this reason, I call them non-hermitian \emph{generalizations} of hermitian topological phenomena. An example is the theoretically predicted analog of the Quantum Hall Effect in magnonic crystals
\cite{Shindou_et_al:chiral_magnonic_edge_modes:2013,Lein_Sato:topological_classification_magnons:2019}. 

\subsubsection{\emph{Bona fide} non-hermitian topological phenomena} 
\label{discussion:bulk_boundary_correspondences:bona_fide_non_hermitian_topological_phenomena}
In contrast, there are topological phenomena with no hermitian counterpart. That occurs in systems where $\dagger$-symmetries and/or $\dagger$-constraints emerge, which relate $P_{\mathrm{rel}}$ to $P_{\mathrm{rel},\dagger}$ and/or $\id_{\Hil} - P_{\mathrm{rel},\dagger}$. For those systems, our community needs to develop new techniques for proving bulk-boundary correspondences. While there are flexible “meta techniques” such as the Six-Term Exact Sequence approach that have been used to great effect \cite{Kellendonk_Richter_Schulz-Baldes:edge_currents_Chern_numbers_quantum_Hall:2002,Prodan_Schulz_Baldes:complex_topological_insulators:2016,Leung_Prodan:bulk_boundary_correspondence_magneto_electric_effect:2020}, it stands to reason that adapting them to \eg more general $K$-theories is not straightforward and will likely involve hard mathematical work. Nevertheless, this is absolutely necessary if we truly want to understand non-hermitian topological phenomena. 

To give one fascinating example: a recent paper \cite{Bliokh_Leykam_Lein_Nori:topological_classification_homogeneous_electromagnetic_media:2019} has proposed that the presence and polarization of electromagnetic interface modes between “metals” ($\sgn \eps = - \sgn \mu$) and “dielectrics” ($\sgn \eps = + \sgn \mu$) can be explained via two bulk-boundary correspondences; similar topological phenomena have been found in other classical wave equations \cite{Bliokh_Nori:transverse_spin_surface_waves_acoustic_metamaterials:2019,Leykam_Bliokh_Nori:topologial_electromagnetic_edge_modes_slab_waveguides:2020}. If I put my mathematical physicist's hat on, I would be more cautious and say these are \emph{conjectures} of bulk-boundary correspondences. Preliminary research shows that the relevant bulk operators are of class~AI and class~${\mathrm{D}^{\dagger}} \simeq \mathrm{AI}$. So the bulk operators are, at least as far as existing theory is concerned, topologically trivial. Yet, the interface formed between two different, seemingly topologically trivial systems is topologically \emph{non}-trivial. The paper proposes a bulk classification, which does not seem to fit the mold of any of the current classification schemes. Finding the mechanism and formalizing the mathematical principles would allow us to systematically predict novel topological phenomena with no analogs in hermitian systems. 

\subsubsection{Dependence on boundary conditions} 
\label{discussion:bulk_boundary_correspondences:boundary_conditions}
One last big issue in non-hermitian systems is the question whether and how bulk-boundary correspondences depend on the choice of boundary conditions; this question is also relevant for certain hermitian continuum systems. There are cases where boundary conditions seem to break bulk-boundary correspondences \cite{Graf_Jud_Tauber:bulk_boundary_correspondence_boundary_conditions:2020}. Boundary conditions may sometimes also break bulk symmetries, \eg in the language of \cite{DeNittis_Lein:symmetries_electromagnetism:2020} if a dielectric electromagnetic medium with time-reversal symmetry $T_1 = (\sigma_1 \otimes \id) \, C$ is terminated by a perfect electric conductor (\ie we choose PEC boundary conditions), then these boundary conditions break $T_1$ time-reversal symmetry. 

Another direction that has seen a lot of attention in the physics community are works comparing systems with open and periodic boundary conditions (\eg \cite{Yokomizo_Murakami:non_Bloch_band_theory_non_hermitian_systems:2019} or \cite{Bergholtz_Budich_Kunst:review_exceptional_points:2020} for a current review) and connected phenomena like the non-hermitian skin effect \cite{Yao_Wang:edge_states_topological_invariatnts_non_hermitian_systems:2018,Okuma_Kawabata_Shiozaki_Sato:topological_origin_non_hermitian_skin_effect:2020}. 

\subsection[Going beyond $K$-theory: utilizing the theory of braids, knots and weaves]{Going beyond $K$-theory: \linebreak utilizing the theory of braids, knots and weaves} 
\label{discussion:beyond_k_theory}
Within the last two, three years researchers have begun looking beyond $K$-theory to classify the topology of physical systems. On the one hand, this has become necessary, because even in some simple non-hermitian systems, an exhaustive classification can only be classified in terms of \emph{non-abelian} (non-commutative) groups \cite{Wojcik_Sun_Bzdusek_Fan:topological_classification_non_hermitian_hamiltonians:2020}. And given that all $K$-groups are necessarily abelian, at least some aspects of the systems's topology cannot be captured by a $K$-theoretic classification. 

On the other hand, going beyond $K$-theory could open the door to new topological phenomena. There are several works \cite{Yang_Schnyder_Hu_Chiu:Fermion_doubling_theorems:2019,Yang_Chiu_Fang_Hu:Jones_polynomial_knot_transitions_hermitian_non_hermitian_topological_semimetals:2020} that apply knot theory to periodic systems. These characterize certain topological features of the Fermi surface and derived quantities. Perhaps other structures beyond knots \cite{Murasugi:knot_theory:1996} such as weaves \cite{Grishanov_Meshkov_Omelchenko:topology_textile_structures_1:2009,Grishanov_Meshkov_Omelchenko:topology_textile_structures_2:2009,Mahmoudi:classification_weaves:2020} or other structures can be obtained by entangling (energy level sets of) energy bands like threads with one another. While it is not yet clear whether and in what ways those topological features manifest themselves in experiment, this is clearly a very promising avenue to explore. 

\subsection{Other classifications of certain non-hermitian operators} 
\label{discussion:other_symmetries}
Classification problems in mathematics are as rare as grains of sand on a beach. So choosing the right one is important. And there is usually a trade-off: I could impose less assumptions and assume less structure, which leads to a coarser, but more general classification; or I could do the opposite, make more assumptions and obtain a finer classification. For instance, the classification of pseudohermitian (Krein-hermitian) systems in \cite{DeNittis_Gomi:K_theoretic_classification_operators_on_Krein_spaces:2019} is finer than that of \cite{Zhou_Lee:non_hermitian_topological_classification:2019,Kawabata_Shiozaki_Ueda_Sato:classification_non_hermitian_systems:2019} for this reason; a second example are topological insulators with crystalline symmetries \cite{Gomi:topological_classification_crystallographic_point_groups_2d:2017,Shiozaki_Sato_Gomi:band_topology_3d_crystallographic_groups:2018}. 

That being said, my results here suggest that (anti)unitarity of symmetries is not important in the setting of \cite{Zhou_Lee:non_hermitian_topological_classification:2019,Kawabata_Shiozaki_Ueda_Sato:classification_non_hermitian_systems:2019}. More precisely, the assumption of (anti)unitarity can be relaxed to bounded with bounded inverse. Since diagonalizable operators $H = H_{\Re} + \ii H_{\Im}$ are exactly those that can be split into two \emph{commuting}, hermitian operators $H_{\Re} = H_{\Re}^{\ddagger}$ and $H_{\Im} = H_{\Im}^{\ddagger}$, usual, “non-$\dagger$” symmetries $U_j$, $j = 1 , \ldots$, are of the form 
\begin{align*}
	U_j \, H \, U_j^{-1} = \pm W_j^{-1} \, H \, W_j
\end{align*}
where $W_j \in \mathcal{B}(\Hil)^{-1}$ is a similarity transform. 

$\dagger$-symmetries are those that relate the hamiltonian $H$ to $H^{\ddagger} = H_{\Re} - \ii H_{\Im}$, namely 
\begin{align*}
	U_{\dagger,j} \, H \, U_{\dagger,j}^{-1} = \pm W_{\dagger,j}^{-1} \, H^{\ddagger} \, W_{\dagger,j}
	, 
\end{align*}
where again $W_{\dagger,j} \in \mathcal{B}(\Hil)^{-1}$ is a similarity transform. In case the operators $W_j$ and $W_{\dagger,j}$ are “nice enough”, \eg when $H$ and the similarity transforms are periodic, and their fiber operators $H(k)$, $W_j(k)$ and $W_{\dagger,j}(k)$ are continuous in Bloch momentum $k$, the arguments of Section~\ref{my_way:periodic_operators_vector_bundles} apply verbatim. 

Of course, in general, these symmetries need not (anti)commute with one another, so the situation is more general than that considered in \cite{Kawabata_Shiozaki_Ueda_Sato:classification_non_hermitian_systems:2019} even when the operators are all (anti)unitary. 

There is also another situation that is currently not well-understood: what if we consider interfaces between topological insulators of different classes? That is the situation at metal-dielectric interfaces between homogeneous electromagnetic media; the electromagnetic surface modes have been shown to be topological \cite{Bliokh_Leykam_Lein_Nori:topological_classification_homogeneous_electromagnetic_media:2019} since their presence is explained by a bulk-boundary correspondence. The relevant bulk operators are hermitian and of class~AI on the dielectric side, and anti- as well as pseudohermitian and of class~$\mathrm{D}^{\dagger} \simeq \mathrm{AI}$ on the metallic side. The current state-of-the-art \cite{Zhou_Lee:non_hermitian_topological_classification:2019,Kawabata_Shiozaki_Ueda_Sato:classification_non_hermitian_systems:2019} predicts that the bulk systems are topologically trivial. Nevertheless, if I sandwich two topologically trivial bulk systems from different classes, I still get topologically protected interface modes. Here, the pertinent factor seems to be the change in the fundamental nature of the geometric structure — from “Riemannian” to “Minkowskian” — which seems to be at the heart of this topological phenomenon. 

\section*{Acknowledgements} 
\label{sec:acknowledgements}
The authors has been supported by JSPS through a Wakate~B (grant number 16K17761) and a Kiban~C grant (grant number 20K03761) as well as a Fusion grant from the WPI-AIMR. The author thanks Chris Bourne, Ching-Kai Chiu, Giuseppe De~Nittis, Shanhui Fan, Flore Kunst, Koji Sato and Casey Wojcik for their helpful input and encouragement during private discussions. 

%
\begin{appendix}
	\section[Relation between biorthogonal calculus and the weighted scalar product]{Relation between biorthogonal calculus \linebreak and the weighted scalar product} 
	\label{appendix:biorthogonal_calculus_weighted_Hilbert_spaces}
	The biorthogonal calculus that is commonly used in the physics community is a cumbersome way of using a weighted scalar product. This appendix will show the equivalence of the two.

	\subsection{A diagonalizable, but not obviously normal $2 \times 2$ matrix} 
	\label{appendix:biorthogonal_calculus_weighted_Hilbert_spaces:2x2_matrix}
	My example starts with our choice of eigenvalues, $1$ and $\ii$. I choose $g_1 = (1,0)^{\mathrm{T}}$ and $g_2 = (1,1)^{\mathrm{T}}$ as the corresponding eigenvectors. When I collect this information into matrix form, I arrive at 
	\begin{align*}
		D &= \left (
		\begin{matrix}
			1 & 0 \\
			0 & \ii \\
		\end{matrix}
		\right )
		,
		\\
		G^{-1} &= \left (
		\begin{matrix}
			1 & 1 \\
			0 & 1 \\
		\end{matrix}
		\right )
		. 
	\end{align*}
	The $2 \times 2$ matrix I am then interested in is obtained by similarity transform, 
	\begin{align*}
		H &= G^{-1} \, D \, G 
		= \left (
		\begin{matrix}
			1 & 1 \\
			0 & 1 \\
		\end{matrix}
		\right ) \left (
		\begin{matrix}
			1 & 0 \\
			0 & \ii \\
		\end{matrix}
		\right ) \left (
		\begin{matrix}
			1 & -1 \\
			0 & 1 \\
		\end{matrix}
		\right )
		= \left (
		\begin{matrix}
			1 & -1 + \ii \\
			0 & \ii \\
		\end{matrix}
		\right )
		. 
	\end{align*}
	Once I compute the usual hermitian adjoint of $H$, I can easily convince myself that $H \, H^{\dagger}$ and $H^{\dagger} \, H$ disagree. 

	Indeed, respect to the usual, Euclidean scalar product 
	\begin{align*}
		\scpro{\varphi}{\psi}_{\C^2} &= \overline{\varphi_1} \, \psi_1 + \overline{\varphi_2} \, \psi_2 
	\end{align*}
	the two eigenvectors, \ie the column vectors of $G$, are not orthonormal to one another. So $H$ is not normal with respect to the \emph{Euclidean} scalar product. 

	But after choosing an adapted scalar product 
	\begin{align}
		\scppro{\varphi}{\psi} &= \scpro{G \varphi}{G \psi}_{\C^2} 
		, 
		\label{appendix:biorthogonal_calculus_weighted_Hilbert_spaces:2x2_matrix:eqn:G_scalar_product}
	\end{align}
	I can make them orthonormal by \emph{definition}; as the notation suggests, this is nothing but the biorthogonal scalar product. That is because $G$ maps the eigenvectors $g_j$ of $H$ onto the canonical basis vectors $e_j$, which makes $g_1$ and $g_2$ orthonormal, 
	\begin{align*}
		\sscppro{g_j}{g_k} &= \scpro{G g_j \, }{G g_k}_{\C^2}
		= \sscpro{e_j}{e_k}_{\C^2} = \delta_{jk} 
		. 
	\end{align*}
	I can define a hermitian adjoint 
	\begin{align}
		A^{\ddagger} &= (G G^{\dagger})^{-1} \, A^{\dagger} \, (G G^{\dagger}) 
		\label{appendix:biorthogonal_calculus_weighted_Hilbert_spaces:2x2_matrix:eqn:G_adjoint}
		\\
		&= G^{-1} \, \bigl ( G \, A \, G^{-1} \bigr )^{\dagger} \, G
		\notag 
	\end{align}
	with respect to $\scppro{\, \cdot \,}{\, \cdot \,}$, \ie the matrix which satisfies 
	\begin{align*}
		\bscppro{A^{\ddagger} \varphi \, }{ \, \psi} &= \bscppro{\varphi \, }{ \, A \psi} 
		. 
	\end{align*}
	Because the hermitian adjoint 
	\begin{align*}
		H^{\dagger} &= \bigl ( G^{-1} \, D \, G \bigr )^{\dagger} 
		= G^{\dagger} \, \overline{D} \, \bigl ( G^{-1} \bigr )^{\dagger} 
	\end{align*}
	is diagonalized by conjugating with 
	\begin{align*}
		G^{\dagger} &= \left (
		\begin{matrix}
			1 & 0 \\
			-1 & 1 \\
		\end{matrix}
		\right )
		, 
	\end{align*}
	and the eigenvectors of $H^{\dagger}$ are just the two column vectors of this matrix. The conventional “bihermitian” approach now suggests to look at the operators 
	\begin{align*}
		\sopro{\psi_{R,1}}{\psi_{L,1}} &= \left (
		\begin{matrix}
			1 \\ 
			0 \\
		\end{matrix}
		\right ) \, \left (
		\begin{matrix}
			1 & -1 \\
		\end{matrix}
		\right ) 
		= \left (
		\begin{matrix}
			1 & -1 \\
			0 & 0 \\
		\end{matrix}
		\right )
		, 
		\\
		\sopro{\psi_{R,\ii}}{\psi_{L,-\ii}} &= \left (
		\begin{matrix}
			1 \\ 
			1 \\
		\end{matrix}
		\right ) \, \left (
		\begin{matrix}
			0 & 1 \\
		\end{matrix}
		\right ) 
		= \left (
		\begin{matrix}
			0 & 1 \\
			0 & 1 \\
		\end{matrix}
		\right )
		. 
	\end{align*}
	Because the two vectors are eigenvectors of $H$ and $H^{\dagger}$ to complex conjugate eigenvalues, they square to themselves, \ie they are (potentially oblique) projections, 
	\begin{align*}
		\bigl ( \sopro{\psi_{R,E}}{\psi_{L,\bar{E}}} \bigr )^2 = \sopro{\psi_{R,E}}{\psi_{L,\bar{E}}}
		. 
	\end{align*}
	With respect to the usual, Euclidean scalar product, these two projections are \emph{not} hermitian. But if we instead use the scalar product~\eqref{appendix:biorthogonal_calculus_weighted_Hilbert_spaces:2x2_matrix:eqn:G_scalar_product} and the corresponding hermitian adjoint~\eqref{appendix:biorthogonal_calculus_weighted_Hilbert_spaces:2x2_matrix:eqn:G_adjoint}, we can confirm with ease that these operators are indeed $G$-hermitian, 
	\begin{align*}
		\sopro{\psi_{R,E}}{\psi_{L,\bar{E}}}^{\ddagger} &= \soppro{\psi_{R,E}}{\psi_{R,E}}^{\ddagger} 
		\\
		&= \soppro{\psi_{R,E}}{\psi_{R,E}} 
		= \sopro{\psi_{R,E}}{\psi_{L,\bar{E}}}
		. 
	\end{align*}
	%

	\subsection[Generalization to $\C^N$ and infinite-dimensional Hilbert spaces]{Generalization to $\C^N$ and \linebreak infinite-dimensional Hilbert spaces} 
	\label{sub:generalization_to_c_n_and_infinite_dimensional_hilbert_spaces}
	These arguments evidently generalize to $N \times N$ matrices and operators on separable, infinite-dimensional vector spaces that admit a complete set of eigenvectors. All I need to do is \emph{declare} the eigenvectors of $H$ to be orthogonal to each other and have unit length by mapping them onto canonical basis vectors, 
	\begin{align*}
		G^{-1} e_n = \varphi_{R,n} 
		, 
	\end{align*}
	where $e_n = (\delta_{jn})_{j = 1 , \ldots , N}$ and $N$ is either finite or $\infty$. The column vectors of $G^{-1} = \sum_{n = 1}^N \sopro{\varphi_{R,n}}{e_n}$ are nothing but the right-eigenvectors and the column vectors of $G^{\dagger}$ are the left-eigenvectors. 

	The fact that left- and right-eigenvectors sum to the identity follows directly from 
	\begin{align*}
		\id &= \sum_{n = 1}^N \sopro{e_n}{e_n} 
		= \sum_{n = 1}^N \sopro{G \varphi_{R,n}}{G \varphi_{R,n}}
		\\
		&= \sum_{n = 1}^N \soppro{\varphi_{R,n}}{\varphi_{R,n}} 
		= \sum_{n = 1}^N \sopro{\varphi_{R,n}}{\varphi_{L,n}}
		. 
	\end{align*}
	Incorporating $G$ into the scalar product yields another sesquilinear form that satisfies all the axioms of a scalar product, \ie $\scppro{\varphi}{\psi} = \scpro{G \varphi}{G \psi}$ \emph{is} a scalar product. And with respect to this scalar product, all of the rank-$1$ projections 
	\begin{align*}
		\sopro{\varphi_{R,n}}{\varphi_{L,n}}^{\ddagger} &= \soppro{\varphi_{R,n}}{\varphi_{R,n}}^{\ddagger}
		\\
		&= \soppro{\varphi_{R,n}}{\varphi_{R,n}}
		= \sopro{\varphi_{R,n}}{\varphi_{L,n}}
	\end{align*}
	are hermitian with respect to $\sscppro{\, \cdot \,}{\, \cdot \,}$. That means the $\sopro{\varphi_{R,n}}{\varphi_{L,n}}$ are a collection of $\sscppro{\, \cdot \,}{\, \cdot \,}$-orthogonal projections. 

	Once more I can check that $H$ is diagonalized by $G$, and I can check as in the $2 \times 2$-matrix case that $H$ is normal with respect to the scalar product $\scppro{\varphi}{\psi}$. 

	When the spectrum does not just consist of eigenvalues, making this argument is trickier as it is not clear what diagonalizable precisely means in a mathematical sense. Certainly, in some cases, I can still use the biorthogonal calculus, \eg when I am dealing with periodic tight-binding operators. After Bloch-Floquet decomposition, I am left with a matrix-valued function of $k$, and the spectrum of matrices evidently consists solely of eigenvalues. Then the above arguments can be adapted, although now $G = G(k)$ must also be a function of $k$. 

	However, when \eg disorder is present, I cannot adapt the above construction by hand, there is no simple way to make explicit use of the biorthogonal calculus. Nevertheless, as long as 
	\begin{align*}
		\id_{\Hil} &= \int_{\sigma(H)} \dd \sopro{\psi_{R,E}}{\psi_{L,\bar{E}}} 
		\\
		&= \int_{\sigma(H)} \dd \soppro{\psi_{R,E}}{\psi_{R,E}} 
	\end{align*}
	holds true, this defines a projection-valued measure on the complex plane; $\soppro{\psi_{R,E}}{\psi_{R,E}}$ is a formal expression, mathematically speaking it should be replaced by the projection-valued measure (\cf Definition~\ref{appendix:functional_calculus_normal_operators:defn:projection_valued_measure} below). 
	
	By a choice of scalar product, I can make this projection-valued measure hermitian. From the projection-valued measure I can REconstruct two hermitian operators, which I will dutifully denote with 
	\begin{align*}
		H_{\Re} = \int_{\C} (\Re E) \; \dd \soppro{\psi_{R,E}}{\psi_{R,E}}
	\end{align*}
	and a similarly defined $H_{\Im}$. Because the projection-valued measures commute with one another, $H_{\Re}$ and $H_{\Im}$ commute. Thus, the total operator $H = H_{\Re} + \ii H_{\Im}$ is normal. 

	This decomposition also shows directly that the spectra of $H$ and $H^{\dagger}$ are related by complex conjugation, 
	\begin{align*}
		\sigma(H^{\dagger}) = \overline{\sigma(H)}
		, 
	\end{align*}
	and that the spectral projections are similarly related, 
	\begin{align*}
		1_{\sigma_{\mathrm{rel}}}(H) = 1_{\overline{\sigma_{\mathrm{rel}}}}(H^{\dagger}) 
		. 
	\end{align*}
	This expression will be defined in the next section of this Appendix. 

	\section{Functional calculus for normal operators} 
	\label{appendix:functional_calculus_normal_operators}
	Normal operators admit a functional calculus, \ie a systematic way to assign an operator $f(H)$ to a suitable function $f$. The most prominent examples are the time-evolution $\e^{- \ii t H}$ where $f(E) = \e^{- \ii t E}$ and spectral projections, which arise from functional calculus for the indicator functions 
	\begin{align*}
		f(E) = 1_{\Omega}(E) = 
		\begin{cases}
			1 & E \in \Omega \\
			0 & \mbox{else} \\
		\end{cases}
		. 
	\end{align*}
	The set $\Omega \subseteq \C$ is comprised of the relevant energies or frequencies. 
	
	The collection of spectral projections gives rise to the so-called projection-valued measure, which makes expressions like $\dd P(E) = \dd \sopro{\psi_E}{\psi_E}$ mathematically rigorous. The following definition is a straightforward extension of the hermitian case (see \eg \cite[Chapter~3.1]{Teschl:quantum_mechanics:2009}). 
	\begin{definition}[Projection-valued measure]\label{appendix:functional_calculus_normal_operators:defn:projection_valued_measure}
		Let $\mathfrak{B}$ be the Borel $\sigma$-algebra on $\C$. Then a projection-valued measure is a map from the Borel $\sigma$-algebra to the orthogonal projections, 
		\begin{align*}
			P : \mathfrak{B} \longrightarrow \mathcal{B}(\Hil)
			, 
			\quad 
			\Omega \mapsto P(\Omega)
			= P(\Omega)^2 = P(\Omega)^{\dagger}
			, 
		\end{align*}
		such that the following two conditions hold: 
		\begin{enumerate}[(a)]
			\item $P(\C) = \id_{\Hil}$
			\item If $\Omega = \bigcup_n \Omega_n$ is the union of mutually disjoint sets, $\Omega_n \cap \Omega_j = \emptyset$ for all $n \neq j$, then $\sum_n P(\Omega_n) \psi = P(\Omega) \psi$ holds for all $\psi \in \Hil$ (strong $\sigma$-additivity). 
		\end{enumerate}
	\end{definition}
	The first condition is nothing but the well-known completeness condition, 
	\begin{align*}
		P(\C) = \int_{\C} \dd P(E) = \int_{\C} \dd \sopro{\psi_E}{\psi_E}
		= \id_{\Hil}
		. 
	\end{align*}
	The range of $P(\Omega)$ are the states of energies/frequencies contained in the set $\Omega \subseteq \C$. These defining properties imply among other things that 
	\begin{align*}
		P(\Omega) \, P(\Lambda) = P(\Omega \cap \Lambda) 
		= P(\Lambda) \, P(\Omega)
		. 
	\end{align*}
	Depending on the approach to functional calculus, one could either construct the functional calculus from the projection-valued measure or the other way around. I will start with the projection-valued measure, which can be constructed using functional calculus for \emph{hermitian} operators (again, \cf \cite[Chapter~3.1]{Teschl:quantum_mechanics:2009}): I start by splitting $H = H_{\Re} + \ii H_{\Im}$ into real and imaginary parts, 
	\begin{align*}
		H_{\Re} &= \frac{1}{2} \bigl ( H + H^{\dagger} \bigr ) 
		,
		\\
		H_{\Im} &= \frac{1}{\ii 2} \bigl ( H - H^{\dagger} \bigr ) 
		.
	\end{align*}
	By their very definition, real and imaginary part operators are hermitian. And importantly, since $[H , H^{\dagger}] = 0$, the two commute with one another as well. 
	
	Consequently, $H_{\Re} = H_{\Re}^{\dagger}$ and $H_{\Im} = H_{\Im}^{\dagger}$ admit a functional calculus \cite[Theorem~3.1]{Teschl:quantum_mechanics:2009}, which gives meaning to $f(H_{\Re,\Im})$ for any bounded Borel function $f : \R \longrightarrow \C$ on $\R$. Initially, I can make sense of $P(\Omega)$ for product sets $\Omega = \Omega_{\Re} \times \Omega_{\Im}$ that are also Borel: I define the associated projection as 
	\begin{align*}
		P(\Omega) &\overset{\mathrm{def}}{=} 1_{\Omega_{\Re}}(H_{\Re}) \, 1_{\Omega_{\Im}}(H_{\Im}) 
		\\
		&= 1_{\Omega_{\Im}}(H_{\Im}) \, 1_{\Omega_{\Re}}(H_{\Re}) 
		. 
	\end{align*}
	As product sets are a base for the topology given by Borel sets on $\C$, this definition extends to arbitrary Borel sets on $\C$. Long story short, this gives a mathematically rigorous definition of expressions like 
	\begin{align*}
		f(H) = \int_{\C} f(E) \; \dd P(E) 
	\end{align*}
	as well for suitable functions $f : \C \longrightarrow \C$. 
	\begin{theorem}[Functional calculus for normal operators]\label{appendix:functional_calculus_normal_operators:thm:functional_calculus}
		Let $H$ be a normal, bounded operator on a Hilbert space $\Hil$, and suppose $\mu \in \C$ is a scalar and $f , g \in \mathrm{Bo}_{\mathrm{b}}(\C,\C)$ bounded Borel functions. Then the map 
		\begin{align*}
			\mathrm{Bo}_{\mathrm{b}}(\R,\C) \ni f \mapsto f(H) \in \mathcal{B}(\Hil)
		\end{align*}
		has the following properties: 
		\begin{enumerate}[(1)]
			\item $f \mapsto f(H)$ is a $\ast$-homomorphism, \ie 
			\begin{align*}
				(f + \mu g)(H) &= f(H) + \mu \, g(H)
				,
				\\
				(f \, g)(H) &= f(H) \, g(H) 
				,
				\\
				1_{\C}(H) &= \id_{\Hil}
				,
				\\
				f(H)^{\dagger} &= \bar{f}(H)
				. 
			\end{align*}
			\item If $H \psi = E \psi$, then $f(H) \psi = f(E) \, \psi$.
			\item $f \geq 0$ $\Rightarrow$ $f(H) \geq 0$
		\end{enumerate}
	\end{theorem}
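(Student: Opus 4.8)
The plan is to reduce the whole statement to the spectral calculus of the commuting pair of hermitian operators $H_{\Re}$ and $H_{\Im}$ that was already used to build the projection-valued measure $P$ on $\C$ in the paragraphs preceding the theorem. Recall that $f(H)$ is defined as $f(H) = \int_{\C} f(E) \, \dd P(E)$, which I read as the norm limit of simple-function approximations $\sum_j f(E_j) \, P(\Omega_j)$ over finer and finer Borel partitions $\{\Omega_j\}$ of a bounded region containing $\sigma(H)$. The first thing I would establish is the norm bound $\snorm{f(H)} \leq \sup_{E \in \sigma(H)} \sabs{f(E)}$; it follows from the analogous bound for hermitian functional calculus applied to $H_{\Re}$ and $H_{\Im}$, together with mutual orthogonality of the projections $P(\Omega_j)$. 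This estimate is what licenses every subsequent limiting argument, namely the passage from simple functions, where all the algebra is transparent, to arbitrary bounded Borel functions.

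Next I would verify property~(1) piece by piece. Additivity and compatibility with scalar multiples are immediate from linearity of the integral against $\dd P$. The normalization $1_{\C}(H) = \id_{\Hil}$ is exactly condition~(a) of Definition~\ref{appendix:functional_calculus_normal_operators:defn:projection_valued_measure}. For multiplicativity I would first check $(1_{\Omega} 1_{\Lambda})(H) = 1_{\Omega \cap \Lambda}(H) = P(\Omega)\,P(\Lambda)$ directly from the defining property $P(\Omega)\,P(\Lambda) = P(\Omega \cap \Lambda)$ of the PVM, extend by bilinearity to simple functions, and then to general bounded Borel $f,g$ by uniform approximation on $\sigma(H)$ and the norm bound above. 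The adjoint relation $f(H)^{\dagger} = \bar f(H)$ follows because each $P(\Omega)$ is an orthogonal projection, so taking adjoints of $\sum_j f(E_j)\,P(\Omega_j)$ gives $\sum_j \overline{f(E_j)}\,P(\Omega_j)$, and one passes to the limit. I would stress that this is where normality is genuinely used: only because $[H , H^{\dagger}] = 0$ do $H_{\Re}$ and $H_{\Im}$ commute and hence possess a \emph{joint} spectral measure with values in orthogonal projections.

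For property~(2) I would first invoke the standard fact that for a normal operator $\ker(H - E) = \ker(H^{\dagger} - \overline{E})$, proved from the identity $\snorm{(H - E)\varphi} = \snorm{(H^{\dagger} - \overline{E})\varphi}$ valid for normal $H$. Thus an eigenvector $H\psi = E\psi$ satisfies $H_{\Re}\psi = (\Re E)\,\psi$ and $H_{\Im}\psi = (\Im E)\,\psi$, so $P(\Omega)\psi = \psi$ when $E \in \Omega$ and $0$ otherwise, and inserting this into the integral gives $f(H)\psi = f(E)\,\psi$. For property~(3), if $f \geq 0$ I would set $g = \sqrt{f}$, a real-valued bounded Borel function, and combine multiplicativity with the adjoint relation to obtain $f(H) = g(H)^{\dagger}\,g(H) \geq 0$.

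The main obstacle is not a deep one but a bookkeeping one: making the approximation step fully rigorous — approximating bounded Borel functions uniformly by simple functions on $\sigma(H)$, and checking via dominated convergence for the scalar measures $\Omega \mapsto \bscpro{\psi}{P(\Omega)\psi}$ that the defining integral, multiplicativity, and the $\ast$-relation all survive the limit. Once that is in place, the theorem is essentially a transcription of the hermitian spectral theorem for the commuting pair $(H_{\Re}, H_{\Im})$, the only conceptual input being the normality-based identity $\ker(H-E) = \ker(H^{\dagger} - \overline{E})$.
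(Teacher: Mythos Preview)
The paper does not actually supply a proof of this theorem; it states the result as a standard fact and immediately moves on (``Also diagonalizable operators admit a functional calculus, which is what I will talk about next.''). What the paper \emph{does} provide is the construction preceding the theorem: split $H = H_{\Re} + \ii H_{\Im}$ into commuting hermitian parts, invoke the hermitian spectral theorem for each, and define $P(\Omega_{\Re} \times \Omega_{\Im}) = 1_{\Omega_{\Re}}(H_{\Re}) \, 1_{\Omega_{\Im}}(H_{\Im})$ on product sets, then extend. Your proposal is precisely a fleshing-out of this construction into a proof, and it is correct in outline and in all the key ideas --- the norm bound licensing limits, multiplicativity from $P(\Omega)P(\Lambda) = P(\Omega \cap \Lambda)$, the $\ast$-relation from orthogonality of the $P(\Omega)$, positivity via $\sqrt{f}$, and the eigenvector property via $\ker(H-E) = \ker(H^{\dagger}-\overline{E})$.

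One small caution on the ``bookkeeping'' you flag: bounded Borel functions are \emph{not} in general uniformly approximable by simple functions on $\sigma(H)$; you need pointwise bounded approximation together with dominated convergence for the scalar measures $\Omega \mapsto \scpro{\psi}{P(\Omega)\psi}$, which you do mention. So the argument should be phrased in terms of strong (or weak) operator limits rather than norm limits at that step. With that adjustment your sketch is sound and is exactly the standard argument the paper is implicitly citing.
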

	Also diagonalizable operators admit a functional calculus, which is what I will talk about next. 

	\section{Diabonalizable operators} 
	\label{appendix:diabonalizable_operators}
	While for matrices there is an unambiguous definition of diagonalizability, there is no universally accepted definition in the mathematics literature for operators on infinite-dimensional Hilbert spaces. Some authors require the operator to possess a basis of proper eigenvectors. As a result, the operator has to be compact and must possess pure point spectrum. That is far too restrictive for my purposes. In contrast, unless all bands are flat, periodic operators have at least continuous spectrum coming from the energy bands. 
	
	Others equate diagonalizability with \emph{unitary} or orthogonal diagonalizability, which singles out normal operators. This class is also unnecessarily small. I will opt for a generalization that can handle continuous spectrum, yet is consistent with the definition of matrices.

	\subsection{Characterizations of diagonalizability} 
	\label{apprendix:characterizations_diagonalizability}
	To accommodate operators with continuous spectrum and not limit myself to normal operators I declare operators diagonalizable if they are normal after a similarity transform. Just like in the case of matrices, a \emph{similarity transform} $G \in \mathcal{B}(\Hil)$ is a bounded operator with bounded inverse $G^{-1} \in \mathcal{B}(\Hil)$; I will abbreviate this class of operators with $\mathcal{B}(\Hil)^{-1}$, although also $\mathrm{GL}(\Hil)$ is commonly used. 
	\begin{definition}[Diagonalizable operator]
		A bounded operator $H \in \mathcal{B}(\Hil)$ on a Hilbert space is called diagonalizable if there exists a similarity transform $G \in \mathcal{B}(\Hil)^{-1}$ for which 
		\begin{align}
			G \, H \, G^{-1} = \int_{\C} E \, \dd P(E)
			\label{appendix:diagonalizable_operators:eqn:spectral_decomposition}
		\end{align}
		admits a spectral decomposition where $P(E)$ is a projection-valued measure on $\C$. 
	\end{definition}
	The flip side of having no established definition is that I cannot point to the literature and then solely focus on the physics. Instead, I will need to establish certain relevant mathematical facts myself. 
	\begin{theorem}
		The following are equivalent characterizations of diagonalizability: 
		\begin{enumerate}[(1)]
			\item $H$ is diagonalizable. 
			\item There exists a similarity transform $G \in \mathcal{B}(\Hil)^{-1}$ so that $G \, H \, G^{-1}$ is normal. 
			\item There exists a similarity transform $G \in \mathcal{B}(\Hil)^{-1}$ so that $G \, H \, G^{-1}$ admits a functional calculus $f \mapsto f \bigl ( G \, H \, G^{-1} \bigr )$, \ie a systematic way to associate an operator $f(H)$ to suitable functions $f : \C \longrightarrow \C$ (\cf Appendix~\ref{appendix:functional_calculus_normal_operators}). 
		\end{enumerate}
	\end{theorem}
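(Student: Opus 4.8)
The plan is to observe that each of (1), (2), (3) is literally the statement ``there exists a similarity transform $G \in \mathcal{B}(\Hil)^{-1}$ such that $N := G\,H\,G^{-1}$ has a certain property'' — the three properties being, respectively, admitting a spectral decomposition $\int_{\C} E \, \dd P(E)$ against a projection-valued measure $P$ on $\C$ (Definition~\ref{appendix:functional_calculus_normal_operators:defn:projection_valued_measure}), being normal, and admitting a functional calculus $f \mapsto f(N)$. Hence the equivalence of the three items reduces to the equivalence of these three properties \emph{for a single bounded operator} $N$, and that is nothing but the spectral theorem for bounded normal operators together with the functional-calculus theorem of Appendix~\ref{appendix:functional_calculus_normal_operators}. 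I would establish it as (1)$\Leftrightarrow$(2) and (2)$\Leftrightarrow$(3), all carried out at the level of $N$.

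For (1)$\Rightarrow$(2): if $N = \int_{\C} E \, \dd P(E)$, then $N$ is bounded, $N^{\dagger} = \int_{\C} \bar{E} \, \dd P(E)$, and since the spectral projections $P(\Omega)$ mutually commute one gets $N N^{\dagger} = \int_{\C} \sabs{E}^2 \, \dd P(E) = N^{\dagger} N$; thus $N$ is normal. The converse (2)$\Rightarrow$(1) is the substantive half of the spectral theorem: from normality one builds the projection-valued measure on $\C$, most conveniently by splitting $N = N_{\Re} + \ii N_{\Im}$ into the commuting self-adjoint operators $N_{\Re} = \tfrac{1}{2}(N + N^{\dagger})$, $N_{\Im} = \tfrac{1}{\ii 2}(N - N^{\dagger})$ and taking the product of their (real) projection-valued measures on product sets, exactly as indicated in Appendix~\ref{appendix:functional_calculus_normal_operators}; I would cite a standard text (\eg \cite{Reed_Simon:M_cap_Phi_1:1972,Teschl:quantum_mechanics:2009}) for the details rather than reproduce the construction.

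For (2)$\Rightarrow$(3): this is precisely Theorem~\ref{appendix:functional_calculus_normal_operators:thm:functional_calculus} — a bounded normal $N$ carries a bounded Borel functional calculus which is a unital $\ast$-homomorphism with $\mathrm{id}_{\C}(N) = N$ and $f(N)^{\dagger} = \bar{f}(N)$. For (3)$\Rightarrow$(2) I would reverse this: any assignment $f \mapsto f(N)$ that deserves the name ``functional calculus'' is defined on a class of functions containing the constants and the coordinate function $z$ and closed under products and complex conjugation, is multiplicative and unital, sends $z$ to $N$, and satisfies $f(N)^{\dagger} = \bar{f}(N)$. Evaluating on $f(z) = z$ and $g(z) = \bar{z}$ and using that pointwise multiplication of functions commutes gives $N N^{\dagger} = (z\bar{z})(N) = (\bar{z}z)(N) = N^{\dagger} N$, so $N$ is normal. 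Transporting all three properties back and forth across the fixed conjugation $G(\,\cdot\,)G^{-1}$ then lets one pull the existential quantifier over $G$ outside, which completes the argument.

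I expect the main obstacle to be definitional rather than analytic: one must pin down once and for all what ``admits a functional calculus'' means, strong enough that the converse step (3)$\Rightarrow$(2) goes through, but not so strong as to trivialize the statement. The minimal package above — unital, multiplicative, $\ast$-preserving, defined on an algebra of functions containing the constants and $z$, with $z \mapsto N$ — is what I would adopt. A secondary, bookkeeping point: if one also wants the biorthogonal versions of Theorem~\ref{diagonalizable_operators:thm:characterizations_diagonalizable_operators}~(4)--(5), one invokes the identity $A^{\ddagger} = G^{-1}\bigl(G A G^{-1}\bigr)^{\dagger} G$ from equation~\eqref{diagonalizable_operators:eqn:G_adjoint} with $W = G^{\dagger}G$, which converts ``$G H G^{-1}$ normal with respect to $\dagger$'' into ``$H$ normal with respect to $\ddagger$'', and similarly for the functional calculi, so those two characterizations come along for free.
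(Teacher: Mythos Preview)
Your proposal is correct and follows essentially the same route as the paper: both reduce the equivalence to properties of the single operator $N = G\,H\,G^{-1}$ and then invoke the spectral theorem for normal operators together with Theorem~\ref{appendix:functional_calculus_normal_operators:thm:functional_calculus}. The only cosmetic difference is organizational --- the paper runs the cycle (1)$\Rightarrow$(2)$\Rightarrow$(3)$\Rightarrow$(1), closing the loop in (3)$\Rightarrow$(1) by extracting the projection-valued measure from indicator functions, whereas you close via (3)$\Rightarrow$(2) using the $\ast$-homomorphism property on $z$ and $\bar z$ directly; both are standard and your explicit factoring-out of the existential quantifier over $G$ is a clean way to phrase what the paper does implicitly.
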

	\begin{proof}
		“(1) $\Longrightarrow$ (2):” $P(\Omega)$ that enters equation~\eqref{appendix:diagonalizable_operators:eqn:spectral_decomposition} is a projection-valued measure. By definition projection-valued measures $P(\Omega)^2 = P(\Omega) = P(\Omega)^{\dagger}$ take values in the \emph{orthogonal} projections. Therefore, we can express the adjoint 
		\begin{align*}
			\bigl ( G \, H \, G^{-1} \bigr )^{\dagger} &= \int_{\C} \overline{E} \, \dd P(E)
		\end{align*}
		in terms of the same projection-valued measure. The commutator of $G \, H \, G^{-1}$ and its adjoint vanishes, because the spectral projections commute, 
		\begin{align*}
			P(\Omega) \, P(\Lambda) = P (\Omega \cap \Lambda) = P(\Lambda) \, P(\Omega) 
			. 
		\end{align*}
		“(2) $\Longrightarrow$ (3):” Since $G \, H \, G^{-1} = H_{\Re} + \ii H_{\Im}$ is a normal operator, Theorem~\ref{appendix:functional_calculus_normal_operators:thm:functional_calculus} tells us that there exists a functional calculus and a projection-valued measure for $G \, H \, G^{-1}$. 
		
		The projection-valued measure now gives rise to a functional calculus: by mimicking the construction in Appendix~\ref{appendix:functional_calculus_normal_operators} or \cite[Chapter~3.1]{Teschl:quantum_mechanics:2009} I obtain a functional calculus that associates an operator to each bounded  Borel function $f : \C \longrightarrow \C$. 
		\smallskip
		
		\noindent
		“(3) $\Longrightarrow$ (1):” If I am given a functional calculus, then it defines a projection-valued measure via $P(\Omega) = 1_{\Omega} \bigl ( G \, H \, G^{-1} \bigr )$. By approximating the function $f(E) = E$ by linear combinations of step functions on $\sigma(H)$, I can approximate 
		\begin{align*}
			f \bigl ( G \, H \, G^{-1} \bigr ) = G \, H \, G^{-1} 
			= \int_{\C} E \, \dd P(E) 
		\end{align*}
		by simple functions and then take the limit. The limit then gives the above integral~\eqref{appendix:diagonalizable_operators:eqn:spectral_decomposition}, which is nothing but the spectral decomposition. 
	\end{proof}
	I can alternatively give a more geometric interpretation of the similarity transform along the lines of the $2 \times 2$ matrix example from Appendix~\ref{appendix:biorthogonal_calculus_weighted_Hilbert_spaces:2x2_matrix}. 
	\begin{theorem}\label{appendix:diagonalizable_operators:thm:weighted_characterization}
		The following are equivalent characterizations of diagonalizability: 
		\begin{enumerate}[(1)]
			\item $H$ is diagonalizable. 
			\item $H$ is normal with respect to a suitably chosen scalar product $\scppro{\, \cdot \,}{\, \cdot \,}$ on the vector space $\Hil$, \ie $[H , H^{\ddagger}] = 0$. 
			\item $H$ admits a functional calculus $f \mapsto f(H)$, \ie a systematic way to associate an operator $f(H)$ to suitable functions $f : \C \longrightarrow \C$ (\cf Appendix~\ref{appendix:functional_calculus_normal_operators}).
		\end{enumerate}
	\end{theorem}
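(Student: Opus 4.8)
The plan is to reduce this "biorthogonal" characterization to the preceding (similarity-transform) theorem by invoking the dictionary between bounded invertible similarity transforms and norm-equivalent scalar products developed in Section~\ref{diagonalizable_operators:choice_of_scalar_product}. The bookkeeping device is the adjoint identity $A^{\ddagger} = G^{-1} \bigl( G A G^{-1} \bigr)^{\dagger} G$ from equation~\eqref{diagonalizable_operators:eqn:G_adjoint}, which holds whenever $\scppro{\varphi}{\psi} = \scpro{G \varphi}{G \psi}$ with $G \in \mathcal{B}(\Hil)^{-1}$; conversely, every scalar product on $\Hil$ whose norm is equivalent to the original one arises this way with $G = \sqrt{W}$ for the associated strictly positive, bounded weight $W$ with bounded inverse. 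Here I read "functional calculus" in item~(3) as in Theorem~\ref{appendix:functional_calculus_normal_operators:thm:functional_calculus} and characterization~(5) of Theorem~\ref{diagonalizable_operators:thm:characterizations_diagonalizable_operators}, namely as a $\ast$-homomorphism $f \mapsto f(H)$ satisfying $f(H)^{\ddagger} = \bar f(H)$ relative to some such scalar product.

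For "(1)~$\Longrightarrow$~(2)" I would take the $G \in \mathcal{B}(\Hil)^{-1}$ with $G H G^{-1}$ normal provided by the preceding theorem, set $\scppro{\varphi}{\psi} := \scpro{G \varphi}{G \psi}$ — a bona fide scalar product with equivalent norm since $G$ and $G^{-1}$ are bounded — and conjugate the relation $[G H G^{-1}, (G H G^{-1})^{\dagger}] = 0$ by $G^{-1}$, using the adjoint identity to rewrite the outcome as $[H, H^{\ddagger}] = 0$. For "(2)~$\Longrightarrow$~(3)", once $H$ is a normal operator on the Hilbert space $(\Hil, \scppro{\, \cdot \,}{\, \cdot \,})$, Theorem~\ref{appendix:functional_calculus_normal_operators:thm:functional_calculus} applies directly and yields the functional calculus together with $f(H)^{\ddagger} = \bar f(H)$.

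For "(3)~$\Longrightarrow$~(1)" I would mimic the corresponding implication in the proof of the preceding theorem. The functional calculus defines $P(\Omega) := 1_{\Omega}(H)$; these operators are idempotent and $\ddagger$-selfadjoint because $1_{\Omega}^2 = 1_{\Omega} = \overline{1_{\Omega}}$, and they are strongly $\sigma$-additive, hence form a projection-valued measure on $\C$ relative to $\scppro{\, \cdot \,}{\, \cdot \,}$. Approximating $f(E) = E$ by step functions on $\sigma(H)$ and passing to the strong limit gives $H = \int_{\C} E \, \dd P(E)$. Writing $G = \sqrt{W}$ for the weight $W$ of $\scppro{\, \cdot \,}{\, \cdot \,}$, the family $G P(\Omega) G^{-1}$ remains idempotent and, by the adjoint identity, becomes $\dagger$-selfadjoint, so it is a projection-valued measure in the original scalar product, and $G H G^{-1} = \int_{\C} E \, \dd \bigl( G P(E) G^{-1} \bigr)$ is the spectral decomposition demanded by the definition of a diagonalizable operator.

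The hard part will be "(3)~$\Longrightarrow$~(1)": one has to keep the functional-calculus data anchored to the non-standard scalar product from the outset — otherwise $\ddagger$-selfadjoint idempotents need not be orthogonal projections in the usual sense — and to justify the step-function approximation together with the strong convergence of $\int_{\C} E \, \dd P(E)$ exactly as in the hermitian case (cf.\ \cite[Chapter~3.1]{Teschl:quantum_mechanics:2009}). Everything else is routine once the $G \leftrightarrow W$ translation of Sections~\ref{diagonalizable_operators:choice_of_scalar_product}--\ref{diagonalizable_operators:topological_classification_algebraic} and the functional calculus of Appendix~\ref{appendix:functional_calculus_normal_operators} are in hand; no genuinely new ingredient beyond these is required.
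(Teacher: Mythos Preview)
Your proposal is correct and follows essentially the same route as the paper: for (1)$\Rightarrow$(2) you conjugate the normality of $G H G^{-1}$ by $G^{-1}$ using the adjoint identity~\eqref{diagonalizable_operators:eqn:G_adjoint}, for (2)$\Rightarrow$(3) you invoke Theorem~\ref{appendix:functional_calculus_normal_operators:thm:functional_calculus} on $(\Hil,\scppro{\,\cdot\,}{\,\cdot\,})$, and for (3)$\Rightarrow$(1) you recover the projection-valued measure from $1_{\Omega}(H)$ and approximate $E$ by simple functions. Your explicit conjugation by $G=\sqrt{W}$ in the last step, to convert the $\ddagger$-orthogonal projections into $\dagger$-orthogonal ones matching the literal wording of the definition, is a small elaboration the paper leaves implicit, but otherwise the arguments coincide.
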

	\begin{proof}
		“(1) $\Longrightarrow$ (2):” The proof is quite similar to that of the previous theorem, this time we just incorporate the similarity transform $G$ that makes $G \, H \, G^{-1}
		$ normal into a second scalar product 
		\begin{align*}
			\scppro{\varphi}{\psi} \overset{\mathrm{def}}{=} \bscpro{G \varphi \, }{ \, G \psi} 
			= \bscpro{\varphi \, }{ \, (G^{\dagger} G) \, \psi} 
		\end{align*}
		and argue that $H$ is normal with respect to it. Of course, this scalar product is just the biorthogonal scalar product. 
		
		The weighted adjoint ${}^{\ddagger}$ is given by equation~\eqref{diagonalizable_operators:eqn:G_adjoint}. A quick computation confirms that $H$ and $H^{\ddagger}$ commute: after adding $\id_{\Hil} = G \, G^{-1} = G^{\dagger} \, (G^{\dagger})^{-1}$ where necessary, I can factor $G$ and its inverse out of the commutator, 
		\begin{align*}
			0 &= \Bigl [ G \, H \, G^{-1} \, , \, \bigl ( G \, H \, G^{-1} \bigr )^{\dagger} \Bigr ]
			\\
			&= G \, H \, G^{-1} \, (G^{\dagger})^{-1} \, H^{\dagger} \, G^{\dagger} \, G \, G^{-1} 
			+ \\
			&\quad 
			- G \, G^{-1} \, (G^{\dagger})^{-1} \, H^{\dagger} \, G^{\dagger} \, G \, H \, G^{-1}
			. 
		\end{align*}
		Then I plug in \eqref{diagonalizable_operators:eqn:G_adjoint} for $H^{\ddagger} = (G G^{\dagger})^{-1} \, H^{\dagger} \, (G G^{\dagger})$,  
		\begin{align*}
			0 &= G \, \bigl [ H , H^{\ddagger} \bigr ] \, G^{-1} 
			, 
		\end{align*}
		which is just as good as $\bigl [ H , H^{\ddagger} \bigr ] = 0$ since $G$ and its inverse are bounded. Hence, $H$ and $H^{\ddagger}$ commute. This is the definition of normality with respect to $\scppro{\, \cdot \,}{\, \cdot \,}$. 
		\smallskip
		
		\noindent
		“(2) $\Longrightarrow$ (3):” Normal operators admit a functional calculus by  Theorem~\ref{appendix:functional_calculus_normal_operators:thm:functional_calculus}. 
		\smallskip
		
		\noindent
		“(3) $\Longrightarrow$ (1):” Functional calculus allows me to recover the projection-valued measure via $P(\Omega) = 1_{\Omega}(H)$, where the latter is the characteristic function for the  Borel set $\Omega \subseteq \C$. Because $H$ is bounded, I can pick any function that satisfies $f(E) = E$ on the spectrum of $H$ and is made into a bounded function by modifying it outside of $\sigma(H)$ in a measurable way. In that case I recover the spectral decomposition of the operator via functional calculus, 
		\begin{align*}
			f(H) &= \int_{\C} f(E) \, \dd P(E)
			= \int_{\sigma(H)} f(E) \, \dd P(E)
			\\
			&= \int_{\sigma(H)} E \, \dd P(E)
			= \int_{\C} E \, \dd P(E) 
			= H 
			. 
		\end{align*}
		To be consistent with the notation used in this paper, I should use ${}^{\ddagger}$ for the scalar product that makes the spectral projections orthogonal, $P(\Omega)^{\ddagger} = P(\Omega)$. 
	\end{proof}
	I will be using two decompositions of diagonalizable operators in many places of the main body of the text: 
	\begin{theorem}[Cartesian and polar decomposition]~
		\begin{enumerate}[(1)]
			\item $H$ is diagonalizable if and only if it is possible to write 
			\begin{align*}
				H = H_{\Re} + \ii H_{\Im} 
			\end{align*}
			for two hermitian operators $H_{\Re,\Im} = H_{\Re,\Im}^{\ddagger}$ that \emph{commute}, $[H_{\Re} , H_{\Im}] = 0$. 
			\item $H \in \mathcal{B}(\Hil)^{-1}$ is diagonalizable with bounded inverse if and only if there exist a $\sscppro{\, \cdot \,}{\, \cdot \,}$-unitary $V_H$ and a hermitian, strictly positive operator $\sabs{H} = \sabs{H}^{\ddagger}$ so that 
			\begin{align*}
				H = V_H \, \sabs{H} 
			\end{align*}
			holds and the two operators \emph{commute}, $[V_H , \sabs{H}] = 0$. 
		\end{enumerate}
	\end{theorem}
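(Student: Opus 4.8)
The plan is to reduce both statements to facts about \emph{normal} operators on a fixed biorthogonal Hilbert space, using Theorem~\ref{appendix:diagonalizable_operators:thm:weighted_characterization} (diagonalizable $\Longleftrightarrow$ $\scppro{\,\cdot\,}{\,\cdot\,}$-normal for a suitable scalar product, $\Longleftrightarrow$ existence of a functional calculus) together with the functional calculus and projection-valued measure of Theorem~\ref{appendix:functional_calculus_normal_operators:thm:functional_calculus}. Throughout, both equivalences are understood relative to a biorthogonal scalar product $\scppro{\,\cdot\,}{\,\cdot\,}$ (equivalently, a similarity transform $G$, with ${}^{\ddagger} = G^{-1}(\,\cdot\,)^{\dagger}G$) that forms part of the existential claim.

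For part (1), forward direction: if $H$ is diagonalizable, Theorem~\ref{appendix:diagonalizable_operators:thm:weighted_characterization} furnishes a biorthogonal scalar product with $[H,H^{\ddagger}]=0$. Set $H_{\Re}=\tfrac12(H+H^{\ddagger})$ and $H_{\Im}=\tfrac1{2\ii}(H-H^{\ddagger})$; these are ${}^{\ddagger}$-hermitian by construction, satisfy $H=H_{\Re}+\ii H_{\Im}$, and expanding $[H_{\Re},H_{\Im}]$ in terms of $[H,H^{\ddagger}]$ shows it vanishes. (Equivalently, one reads $H_{\Re}=\int_{\C}(\Re E)\,\dd P(E)$ and $H_{\Im}=\int_{\C}(\Im E)\,\dd P(E)$ off the projection-valued measure $P$ of $H$, which makes commutativity manifest.) Conversely, if $H=H_{\Re}+\ii H_{\Im}$ with ${}^{\ddagger}$-hermitian commuting operators, then $H^{\ddagger}=H_{\Re}-\ii H_{\Im}$ and $[H,H^{\ddagger}]=-2\ii\,[H_{\Re},H_{\Im}]=0$, so $H$ is $\scppro{\,\cdot\,}{\,\cdot\,}$-normal and hence diagonalizable by the same theorem.

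For part (2), assume in addition $H\in\mathcal{B}(\Hil)^{-1}$. Since $H$ is bounded, $H^{-1}$ bounded is equivalent to $0\notin\sigma(H)$, so the (compact) spectrum of $H$ is bounded away from $0$. Apply the functional calculus of Theorem~\ref{appendix:functional_calculus_normal_operators:thm:functional_calculus} to the normal operator $H$: put $\sabs{H}=g(H)$ and $V_H=h(H)$ for the Borel functions $g(E)=\sabs{E}$ and $h(E)=E/\sabs{E}$, which are well-defined and bounded on $\sigma(H)$ precisely because $0\notin\sigma(H)$. Then $\sabs{H}$ is ${}^{\ddagger}$-hermitian and, as $g\ge c>0$ on $\sigma(H)$, strictly positive with bounded inverse; $V_H^{\ddagger}V_H=V_HV_H^{\ddagger}=\int 1\,\dd P(E)=\id_{\Hil}$ since $\sabs{h}\equiv 1$ on $\sigma(H)$, so $V_H$ is $\scppro{\,\cdot\,}{\,\cdot\,}$-unitary; $[V_H,\sabs{H}]=0$ because both are functions of the same normal operator (same $P$); and $V_H\,\sabs{H}=\int_{\C}E\,\dd P(E)=H$. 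Conversely, given $H=V_H\,\sabs{H}$ with $V_H$ ${}^{\ddagger}$-unitary, $\sabs{H}$ ${}^{\ddagger}$-hermitian, strictly positive with bounded inverse, and $[V_H,\sabs{H}]=0$, one computes $HH^{\ddagger}=V_H\,\sabs{H}^2\,V_H^{-1}=\sabs{H}^2=\sabs{H}\,V_H^{-1}V_H\,\sabs{H}=H^{\ddagger}H$, so $H$ is $\scppro{\,\cdot\,}{\,\cdot\,}$-normal, hence diagonalizable, and invertible with bounded inverse as a product of such.

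The bulk of the argument is bookkeeping with the similarity transform $G$, the biorthogonal adjoint ${}^{\ddagger}$ and the original adjoint. The one point that genuinely needs care is the claim in part (2) that $\sabs{H}$ is \emph{bounded below} and that $V_H$ is a genuine unitary rather than merely an isometry; this hinges entirely on $0\notin\sigma(H)$, which is exactly the hypothesis that $H$ has a bounded inverse. Once that is established the spectral calculus supplies everything else, so I expect the write-up to be short.
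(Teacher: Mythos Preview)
Your proposal is correct and follows essentially the same approach as the paper: both directions of (1) are handled via Theorem~\ref{appendix:diagonalizable_operators:thm:weighted_characterization} and the explicit formulas $H_{\Re}=\tfrac12(H+H^{\ddagger})$, $H_{\Im}=\tfrac1{2\ii}(H-H^{\ddagger})$, and for (2) the paper likewise builds $\sabs{H}$ and $V_H$ from the functional calculus applied to $g(E)=\sabs{E}$ and $f(E)=E/\sabs{E}$, exploiting $0\notin\sigma(H)$, with the converse via the same commutator computation $[H,H^{\ddagger}]=\sabs{H}^2-\sabs{H}^2=0$. Your emphasis that the bounded-inverse hypothesis is precisely what makes $\sabs{H}$ strictly positive and $V_H$ genuinely unitary matches the paper's reasoning exactly.
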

	\begin{proof}
		\begin{enumerate}[(1)]
			\item Suppose $H$ is diagonalizable. Then $H$ is normal with respect to the weighted scalar product~\eqref{diagonalizable_operators:eqn:weighted_scalar_product}, \ie $[H , H^{\ddagger}] = 0$. Therefore, I can define real and imaginary part operators with respect to the adjoint ${}^{\ddagger}$, 
			\begin{align*}
				H_{\Re} &= \frac{1}{2} \bigl ( H + H^{\ddagger} \bigr ) 
				= H_{\Re}^{\ddagger}
				, 
				\\
				H_{\Im} &= \frac{1}{\ii 2} \bigl ( H - H^{\ddagger} \bigr ) 
				= H_{\Im}^{\ddagger}
				. 
			\end{align*}
			These two operators commute, because $H$ and $H^{\ddagger}$ do. 
			
			Conversely, if I am given a decomposition $H = H_{\Re} + \ii H_{\Im}$ in terms of two commuting, hermitian operators, then $H^{\ddagger} = H_{\Re} - \ii H_{\Im}$ holds true. Clearly, $H$ and its adjoint commute exactly when real and imaginary part operators do. 
			\item Suppose $H$ is diagonalizable and has a bounded inverse. Then we can define $\sabs{H}$ using functional calculus for any function that satisfies $g(E) = \sabs{E}$ on $\sigma(H)$. The operator $\sabs{H}$ is \emph{strictly} positive as $g$ is strictly positive on $\sigma(H)$. That is because $H \in \mathcal{B}(\Hil)^{-1}$ implies the spectrum $\sigma(H)$ is gapped away from $0$, \ie for a ball $B_R(0)$ centered at $0$ of sufficiently small radius $R > 0$ I have $\sigma(H) \cap B_R(0) = \emptyset$. 
			
			The phase $V_H = f(H)$ can be defined via the function 
			\begin{align*}
				f(E) = 
				\begin{cases}
					\nicefrac{E}{\sabs{E}} & E \neq 0 \\
					0 & E = 0 \\
				\end{cases}
				. 
			\end{align*}
			Because $H$ is assumed invertible, we know $0 \not\in \sigma(H)$ does not lie in the spectrum and $f$ is invertible with a bounded inverse on $\sigma(H)$. The inverse of $f$ is just its complex conjugate, $f^{-1} = \overline{f}$. 
			
			That not only shows that 
			\begin{align*}
				V_H^{\ddagger} &= f(H)^{\ddagger} 
				= \overline{f}(H) 
				= f^{-1}(H) \\
				&= V_H^{-1} 
			\end{align*}
			is unitary with respect to $\scppro{\, \cdot \,}{\, \cdot \,}$, but also that the two operators commute, 
			\begin{align*}
				V_H \, \sabs{H} &= f(H) \, g(H) 
				= (f g)(H) 
				= (g f)(H) 
				\\
				&= g(H) \, f(H) = \sabs{H} \, V_H 
				. 
			\end{align*}
			Conversely, suppose I can write $H = V_H \, \sabs{H}$ as the product of a $\scppro{\, \cdot \,}{\, \cdot \,}$-unitary $V_H$ and a strictly positive, hermitian operator $\sabs{H} = \sabs{H}^{\ddagger}$, which mutually commute. Then since $\sabs{H}$ commutes with $V_H$ if and only if it commutes with $V_H^{-1}$, I can write the adjoint operators as 
			\begin{align*}
				H^{\ddagger} &= \bigl ( V_H \, \sabs{H} \bigr )^{\ddagger} 
				= \sabs{H}^{\ddagger} \, V_H^{\ddagger} 
				\\
				&= \sabs{H} \, V_H^{-1} 
				= V_H^{-1} \, \sabs{H} 
				. 
			\end{align*}
			The commutator now has to vanish, since $V_H$ and its inverse annihilate one another, 
			\begin{align*}
				[H , H^{\ddagger}] &= \sabs{H} \, V_H \, V_H^{-1} \, \sabs{H} - \sabs{H} \, V_H^{-1} \, V_H \, \sabs{H} 
				\\
				&= \sabs{H}^2 - \sabs{H}^2 
				= 0 
				, 
			\end{align*}
			and I have verified that $H$ is normal. 
		\end{enumerate}
	\end{proof}
	%

	\subsection{Useful facts about diagonalizable operators} 
	\label{appendix:diagonalizable_operators:facts}
	There are a few facts about diagonalizable operators I will use in the paper. 
	\begin{lemma}\label{appendix:diagonalizable_operators:lem:useful_facts}
		Suppose $V \in \mathcal{B}(\Hil)^{-1}$ is a similarity transform and $H$ is diagonalizable. Then the following holds true: 
		\begin{enumerate}[(1)]
			\item $V \, H \, V^{-1}$ is diagonalizable. 
			\item $K \, H \, K$ is diagonalizable, where $K$ is any complex conjugation on $\Hil$. 
			\item $H^{\dagger}$ is diagonalizable. 
			\item Real and imaginary parts of $V \, H \, V^{-1}$ are related to those of $H$ via the similarity transform $V$, 
			\begin{align*}
				\bigl ( V \, H \, V^{-1} \bigr )_{\Re,\Im} &= V \, H_{\Re,\Im} \, V^{-1} 
				. 
			\end{align*}
			\item The functional calculi of $H$ and $V \, H \, V^{-1}$ are related by the similarity transform $V$, that is 
			\begin{align*}
				f \bigl ( V \, H \, V^{-1} \bigr ) &= V \, f(H) \, V^{-1} 
			\end{align*}
			holds for all bounded Borel functions. The same holds true when replacing $V$ by $V K$. 
		\end{enumerate}
	\end{lemma}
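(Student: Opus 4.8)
The plan is to reduce all five items to the normal case, exploiting the two ingredients already available: the characterization of diagonalizability as ``normalizable by a similarity transform'' (Theorem~\ref{diagonalizable_operators:thm:characterizations_diagonalizable_operators}, characterization (2)) and the functional calculus for normal operators (Theorem~\ref{appendix:functional_calculus_normal_operators:thm:functional_calculus}). Item (1) is the keystone and everything else is bookkeeping on top of it. Fix $G \in \mathcal{B}(\Hil)^{-1}$ with $N := G\,H\,G^{-1}$ normal. Then $GV^{-1} \in \mathcal{B}(\Hil)^{-1}$ and $(GV^{-1})\,(V H V^{-1})\,(GV^{-1})^{-1} = G H G^{-1} = N$, so $V H V^{-1}$ is diagonalizable; crucially, $GV^{-1}$ normalizes $V H V^{-1}$ to the \emph{same} operator $N$, which is what makes (5) work.

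For (3), take Hilbert-space adjoints of $N = G H G^{-1}$: since the adjoint of a normal operator is normal, $(G^{\dagger})^{-1}\,H^{\dagger}\,G^{\dagger} = (G H G^{-1})^{\dagger} = N^{\dagger}$ is normal and $(G^{\dagger})^{-1} \in \mathcal{B}(\Hil)^{-1}$, hence $H^{\dagger}$ is diagonalizable. For (2), set $\overline{G} := K G K$, which is bounded with bounded inverse $K G^{-1} K$, and note $\overline{G}\,(K H K)\,\overline{G}^{-1} = K(G H G^{-1})K = K N K$. Using that conjugation by the antiunitary $K$ commutes with taking adjoints, $(KAK)^{\dagger} = K A^{\dagger} K$, one gets $[K N K,(KNK)^{\dagger}] = K[N,N^{\dagger}]K = 0$, so $K N K$ is normal and $K H K$ is diagonalizable.

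For (5) I would work with the \emph{canonical} functional calculus of a diagonalizable operator, i.e. the one constructed from the projection-valued measure of the normalized operator in the proofs of Appendix~\ref{appendix:diabonalizable_operators}: $f(H) := G^{-1}\,f(N)\,G$ with $f(N)$ the normal functional calculus. Since by (1) the transform $GV^{-1}$ normalizes $V H V^{-1}$ to the identical operator $N$, we obtain $f(V H V^{-1}) = (GV^{-1})^{-1} f(N) (GV^{-1}) = V\,(G^{-1}f(N)G)\,V^{-1} = V\,f(H)\,V^{-1}$, as claimed. Item (4) is then the special case of the (truncated) functions $E \mapsto \Re E$ and $E \mapsto \Im E$, which are bounded Borel functions on $\sigma(H)$ since $H$ is bounded: from the $\ast$-property $f(H)^{\ddagger} = \bar{f}(H)$ one reads off $H^{\ddagger} = \bar{z}(H)$, hence $H_{\Re} = \tfrac12(H + H^{\ddagger})$ and $H_{\Im} = \tfrac{1}{\ii 2}(H - H^{\ddagger})$ are exactly the operators $(\Re z)(H)$ and $(\Im z)(H)$, and (5) applied to these two functions gives $(V H V^{-1})_{\Re,\Im} = V\,H_{\Re,\Im}\,V^{-1}$. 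For the antilinear variant ($V$ replaced by $VK$), write $A := VK$, so $A H A^{-1} = V(K H K)V^{-1}$ is diagonalizable by (1) and (2); combining the already-proven first part of (5) with the identity $f(K H K) = K\,g(H)\,K$, where $g(z) := \overline{f(\bar z)}$ — obtained by chasing $K$ through the projection-valued measure of $N$ and using $(KAK)^{\dagger}=KA^{\dagger}K$ once more — yields $f(A H A^{-1}) = A\,g(H)\,A^{-1}$, which is the stated relation up to the complex conjugation of the argument function; for indicator functions of sets $\Omega = \overline{\Omega}$ this is exactly the set-conjugation $\Omega \mapsto \overline{\Omega}$ that appears for spectral projections in Sections~\ref{my_way_simplified} and \ref{my_way}.

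The only genuine obstacle is not a computation but the \emph{well-definedness} of the canonical functional calculus (and hence of the Cartesian decomposition) used implicitly above: one must know that $G^{-1} f(G H G^{-1}) G$ does not depend on the choice of normalizing transform $G$. Equivalently, if $G_1 H G_1^{-1} = N_1$ and $G_2 H G_2^{-1} = N_2$ are both normal, then $S := G_2 G_1^{-1}$ intertwines $N_1$ and $N_2$, and one needs $S f(N_1) S^{-1} = f(N_2)$. I would either invoke this as an established property of the functional calculus from Appendix~\ref{appendix:diabonalizable_operators}, or prove it by polar-decomposing $S = \mathcal{U}\,|S|$ and using the standard fact that for invertible $S$ intertwining two normal operators the unitary part $\mathcal{U}$ already intertwines them while $|S|$ commutes with $N_1$ (a Fuglede--Putnam-type argument), so that unitary conjugation — which commutes with the normal functional calculus — suffices. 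Once this is in hand, every step of the lemma is routine manipulation of bounded invertible maps.
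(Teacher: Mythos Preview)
Your proof is correct and in spirit close to the paper's, but several of your arguments are genuinely cleaner. Most notably, your proof of (3) is a one-liner: take Hilbert adjoints of $N = G H G^{-1}$ to get $(G^{\dagger})^{-1} H^{\dagger} G^{\dagger} = N^{\dagger}$, which is normal. The paper instead takes the $\dagger$-adjoint of the commutator $[H, H^{\ddagger}] = 0$, rearranges, and then identifies $W H W^{-1}$ as the $W^{-1}$-weighted adjoint of $H^{\dagger}$ --- correct, but considerably more work for the same conclusion. Your treatment of (2) (showing $K N K$ is normal directly via $(KAK)^{\dagger} = K A^{\dagger} K$) is also more direct than the paper's, which constructs the conjugated projection-valued measure $\overline{P}(\Omega) = K\,1_{\overline{\Omega}}(H)\,K$ explicitly and verifies its properties. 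On the other hand, the paper orders (4) before (5) --- deriving the functional calculus relation from the real/imaginary parts and the construction of the PVM --- whereas you derive (4) as the special case $f = \Re z, \Im z$ of (5); both orderings work, and yours makes the dependence structure clearer. Finally, you are right to flag the well-definedness of the canonical functional calculus (independence of the normalizing $G$) as the one nontrivial point; the paper does not address this explicitly, and your Fuglede--Putnam sketch is the standard fix. Your remark that the antilinear variant of (5) really produces $f(A H A^{-1}) = A\,g(H)\,A^{-1}$ with $g(z) = \overline{f(\bar z)}$ is also more precise than the paper's ``the same holds true'' phrasing, which is only literally correct for real-valued $f$ such as indicator functions.
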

	\begin{proof}
		\begin{enumerate}[(1)]
			\item Since $H$ is diagonalizable, there exists a similarity transform $G$ so that 
			\begin{align*}
				G \, H \, G^{-1} &= \int_{\C} E \, \dd P(E) 
			\end{align*}
			holds true. But then this immediately implies 
			\begin{align*}
				\widetilde{G} \, V \, H \, V^{-1} \, \widetilde{G}^{-1} &= \int_{\C} E \, \dd P(E) 
				, 
			\end{align*}
			for $\widetilde{G} = G \, V^{-1}$, that is, $V \, H \, V^{-1}$ is diagonalizable as well. 
			\item Since $H$ is diagonalizable, the operator possesses a functional calculus (\cf Theorem~\ref{appendix:diagonalizable_operators:thm:weighted_characterization}~(3)) and the projection-valued measure 
			\begin{align*}
				P(\Omega) := 1_{\Omega}(H)
			\end{align*}
			for $H$ can be recovered from it. Now I define the family of operators 
			\begin{align*}
				\overline{P}(\Omega) := K \, 1_{\overline{\Omega}}(H) \, K 
			\end{align*}
			indexed by Borel sets $\Omega$. I will show that $\overline{P}(\Omega)$ is the projection-valued measure for $K \, H \, K$. 
			
			Clearly, this defines yet another projection-valued measure: $\overline{P}(\Omega)^2 = \overline{P}(\Omega)$, completeness and strong $\sigma$-additivity follow directly from the definition. The only open question is orthogonality. Let $\sscppro{\varphi}{\psi} = \scpro{\varphi}{W \, \psi}$ be a scalar product with respect to which the projection-valued measure $P(\Omega) = P(\Omega)^{\ddagger}$ for $H$ is orthogonal. Then a straightforward computation confirms that $\overline{P}(\Omega)$ is hermitian with respect to the scalar product 
			\begin{align*}
				\scpro{\varphi}{\psi}_{\overline{W}} \overset{\mathrm{def}}{=} \bscpro{\varphi}{K \, W \, K \psi}
				. 
			\end{align*}
			And from this projection-valued measure, I can construct the operator 
			\begin{align*}
				\overline{H} &= \int_{\C} E \, \dd \overline{P}(E) 
				, 
			\end{align*}
			which by its very definition is diagonalizable. By approximating $f(E) = E$ on $\sigma(H) \cup \overline{\sigma(H)}$ with simple functions, I can make the following formal manipulations rigorous: 
			\begin{align*}
				K \, H \, K &= \int_{\C} \bar{E} \; K \, \dd P(E) \, K 
				\\
				&= \int_{\C} E \; \dd \bigl (K \, P(\bar{E}) \, K \bigr )
				\\
				&= \int_{\C} E \; \dd \overline{P}(E)
			\end{align*}
			That gives us an explicit diagonalization of $K \, H \, K$. 
			\item Since $H$ is diagonalizable, $H$ commutes with its adjoint $H^{\ddagger} = H^{\dagger_W} = W^{-1} \, H^{\dagger} \, W$ by Theorem~\ref{appendix:diagonalizable_operators:thm:weighted_characterization} where I have abbreviated $W = G^{\dagger} G = W^{\dagger}$ for convenience. Taking the $\scpro{\, \cdot \,}{\, \cdot \,}$-adjoint of the commutator yields the operator with which $H^{\dagger}$ commutes, 
			\begin{align*}
				0 &= \Bigl ( \bigl [ H \, , \, H^{\dagger_W} \bigr ] \Bigr )^{\dagger}
				= \bigl ( H \, W^{-1} \, H^{\dagger} \, W - W^{-1} \, H^{\dagger} \, W \, H \bigr )^{\dagger} 
				\\
				&= W^{\dagger} \, H \, (W^{-1})^{\dagger} \, H^{\dagger} - H^{\dagger} \, W^{\dagger} \, H \, (W^{-1})^{\dagger} 
				\\
				&= W \, H \, W^{-1} \, H^{\dagger} - H^{\dagger} \, W \, H \, W^{-1}
				\\
				&= \bigl [ W \, H \, W^{-1} \, , \, H^{\dagger} \bigr ]
				.  
			\end{align*}
			However, the operator 
			\begin{align*}
				W \, H \, W^{-1} &= \bigl ( H^{\dagger} \bigr )^{\dagger_{W^{-1}}} 
			\end{align*}
			is nothing but the weighted adjoint of $H^{\dagger}$ with respect to a weighted scalar product 
			\begin{align*}
				\scpro{\varphi}{\psi}_{W^{-1}} &= \scpro{\varphi \, }{ \, W^{-1} \psi} 
			\end{align*}
			with the inverse weight $W^{-1}$. Thus, we obtain real and imaginary parts, 
			\begin{align*}
				(H^{\dagger})_{\Re} &= \frac{1}{2} \Bigl ( H^{\dagger} + \bigl ( H^{\dagger} \bigr )^{\dagger_{W^{-1}}} \Bigr )
				, 
				\\
				(H^{\dagger})_{\Im} &= \frac{1}{\ii 2} \Bigl ( H^{\dagger} - \bigl ( H^{\dagger} \bigr )^{\dagger_{W^{-1}}} \Bigr )
				, 
			\end{align*}
			which commute with one another. Consequently, also the adjoint operator $H^{\dagger}$ is diagonalizable. 
			\item This follows directly from the definition of real and imaginary parts with respect to the $G$-weighed scalar product from (2) and the explicit expression of the $V G$-weighted adjoint as $V \, H^{\ddagger} \, V^{-1}$. 
			\item This follows from (4), the definition of the projection-valued measure in terms of real and imaginary parts as well as the definition of $f(H)$ via the projection-valued measure (\cf Theorem~\ref{appendix:diagonalizable_operators:thm:weighted_characterization}). 
		\end{enumerate}
	\end{proof}
	Lastly, there is a close connection between the projection-valued measures of $H$ and $H^{\dagger}$, which I will exploit. 
	\begin{lemma}\label{appendix:diagonalizable_operators:lem:similarity_tranform_functional_calculus}
		Assume $H$ is bounded and diagonalizable. Then the following holds true: 
		\begin{enumerate}[(1)]
			\item For any Borel set $\Omega \subseteq \C$ the spectral projections of $H$ and its adjoint $H^{\ddagger}$ are related by 
			\begin{align*}
				1_{\Omega}(H^{\ddagger}) = 1_{\overline{\Omega}}(H)
				, 
			\end{align*}
			where ${}^{\ddagger}$ is the adjoint with respect to any scalar product that makes $H$ normal. 
			\item Let $V$ be a similarity transform for which 
			\begin{align*}
				V \, H \, V^{-1} &= \pm H^{\dagger}
			\end{align*}
			holds true. Then for any Borel set $\Omega \subseteq \C$ the spectral projections are related by 
			\begin{align*}
				V \, 1_{\Omega}(H) \, V^{-1} &= 1_{\Omega}(\pm H^{\dagger})
				= 1_{\overline{\Omega}}(\pm H) 
				= 1_{\pm \overline{\Omega}}(H) 
				. 
			\end{align*}
			\item The statement~(2) holds true also if we replace $V$ by $V \, K$, where $K$ is a complex conjugation on $\Hil$. 
			\item $\sigma(H^{\dagger}) = \sigma(H^{\ddagger}) = \overline{\sigma(H)}$
		\end{enumerate}
	\end{lemma}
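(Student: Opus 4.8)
The plan is to reduce everything to two ingredients that are already available: the Borel functional calculus of a normal operator is a $\ast$-homomorphism compatible with composition of functions (Theorem~\ref{appendix:functional_calculus_normal_operators:thm:functional_calculus}), and the functional calculus intertwines with similarity transforms, $f \bigl ( V \, H \, V^{-1} \bigr ) = V \, f(H) \, V^{-1}$, together with its antilinear companion (Lemma~\ref{appendix:diagonalizable_operators:lem:useful_facts}~(5)). I would first establish~(1), then bootstrap~(2) and~(3) from~(1) by conjugation, and finally read off~(4) from the supports of the relevant projection-valued measures.

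For part~(1): since $H$ is diagonalizable, Theorem~\ref{appendix:diagonalizable_operators:thm:weighted_characterization} makes it normal with respect to the biorthogonal scalar product $\scppro{\, \cdot \,}{\, \cdot \,}$, so it carries a projection-valued measure $P$ with $H = \int_{\C} E \, \dd P(E)$. Because $[H , H^{\ddagger}] = 0$ and $(H^{\ddagger})^{\ddagger} = H$, the adjoint $H^{\ddagger}$ is normal with respect to the \emph{same} scalar product, and it is obtained from $H$ by applying the conjugation function $c(E) = \overline{E}$, i.e.\ $H^{\ddagger} = c(H) = \int_{\C} \overline{E} \, \dd P(E)$. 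One checks directly that $\Omega \mapsto P(\overline{\Omega})$ satisfies the axioms of Definition~\ref{appendix:functional_calculus_normal_operators:defn:projection_valued_measure} and diagonalizes $H^{\ddagger}$, hence is its projection-valued measure; equivalently, the composition rule of the functional calculus gives
\begin{align*}
	1_{\Omega}(H^{\ddagger}) = 1_{\Omega} \bigl ( c(H) \bigr ) = (1_{\Omega} \circ c)(H) = 1_{\overline{\Omega}}(H)
	,
\end{align*}
since $1_{\Omega}(\overline{E}) = 1$ exactly when $E \in \overline{\Omega}$.

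For parts~(2) and~(3): by Lemma~\ref{appendix:diagonalizable_operators:lem:useful_facts}~(1) the operator $\pm H^{\dagger} = V \, H \, V^{-1}$ is again diagonalizable, so $1_{\Omega}(\pm H^{\dagger})$ is meaningful, and Lemma~\ref{appendix:diagonalizable_operators:lem:useful_facts}~(5) yields the first equality $V \, 1_{\Omega}(H) \, V^{-1} = 1_{\Omega} \bigl ( V \, H \, V^{-1} \bigr ) = 1_{\Omega}(\pm H^{\dagger})$. The remaining identities are functional-calculus bookkeeping: applying the composition rule to the sign function $m(E) = -E$ gives $1_{\Omega}(-A) = 1_{-\Omega}(A)$, and combining this with part~(1) — replacing $H^{\dagger}$ by the $\scppro{\, \cdot \,}{\, \cdot \,}$-normal operator $H^{\ddagger}$, to which $H^{\dagger}$ is similar via the weight $W = G^{\dagger} G$ — and with $\overline{\pm \Omega} = \pm \overline{\Omega}$ produces $1_{\Omega}(\pm H^{\dagger}) = 1_{\overline{\Omega}}(\pm H) = 1_{\pm \overline{\Omega}}(H)$. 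Part~(3) is the same argument run with the antilinear map $V \, K$ in place of $V$: because $1_{\Omega} = \overline{1_{\Omega}}$ is real-valued, the complex conjugation that antilinearity introduces into the functional calculus only reflects $\Omega$ about the real axis, which is already absorbed into the $\Omega \leftrightarrow \overline{\Omega}$ accounting, so the antilinear version of Lemma~\ref{appendix:diagonalizable_operators:lem:useful_facts}~(5) closes the case. Finally, part~(4) follows because the spectrum of a diagonalizable operator is the support of its projection-valued measure: from part~(1), $\sigma(H^{\ddagger}) = \mathrm{supp} \bigl ( \Omega \mapsto P(\overline{\Omega}) \bigr ) = \overline{\mathrm{supp}(P)} = \overline{\sigma(H)}$, and since similarity transforms preserve the spectrum and $H^{\dagger} = W \, H^{\ddagger} \, W^{-1}$ with $W \in \mathcal{B}(\Hil)^{-1}$, also $\sigma(H^{\dagger}) = \sigma(H^{\ddagger}) = \overline{\sigma(H)}$ (alternatively cite \cite[Theorem~VI.7]{Reed_Simon:M_cap_Phi_1:1972} for $\sigma(H^{\dagger}) = \overline{\sigma(H)}$).

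The step I expect to need the most care is keeping straight \emph{which} scalar product each spectral projection is orthogonal with respect to: for a diagonalizable but non-normal $H$ the two adjoints $H^{\dagger}$ and $H^{\ddagger}$, and hence their projection-valued measures, differ by conjugation with the weight $W$, so the clean reflection identities really live on the $H^{\ddagger}$ side and must be transported to $H^{\dagger}$ carefully. The only genuinely analytic point is that the composition rule $f \bigl ( g(H) \bigr ) = (f \circ g)(H)$ for bounded Borel $f, g$ — used above for $g = c$ and $g = m$ — is available at this generality; this is standard for normal operators (prove it for polynomials from the $\ast$-homomorphism property of Theorem~\ref{appendix:functional_calculus_normal_operators:thm:functional_calculus}, then pass to bounded Borel functions by monotone/dominated convergence) and transfers to diagonalizable operators via Theorem~\ref{appendix:diagonalizable_operators:thm:weighted_characterization}.
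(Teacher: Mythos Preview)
Your proposal is correct and tracks the paper's proof closely for parts~(2)--(4): both reduce to Lemma~\ref{appendix:diagonalizable_operators:lem:useful_facts}~(5) for the intertwining with similarity transforms, pass from $H^{\dagger}$ to $H^{\ddagger}$ via the weight $W$, and read~(4) off the support of the projection-valued measure together with invariance of the spectrum under similarity.

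The only genuine difference is in part~(1). The paper argues via the cartesian decomposition: it reduces to product sets $\Omega = \Omega_{\Re} \times \Omega_{\Im}$ (a base for the Borel $\sigma$-algebra on $\C$) and computes directly from $H^{\ddagger} = H_{\Re} - \ii H_{\Im}$ that $1_{\Omega}(H^{\ddagger}) = 1_{\Omega_{\Re}}(H_{\Re}) \, 1_{-\Omega_{\Im}}(H_{\Im}) = 1_{\overline{\Omega}}(H)$. You instead invoke the composition rule $1_{\Omega}\bigl(c(H)\bigr) = (1_{\Omega} \circ c)(H)$ with $c(E) = \overline{E}$. Your route is cleaner and treats all Borel sets at once, at the cost of importing the composition identity for the Borel functional calculus (which, as you note, is standard but not stated in Theorem~\ref{appendix:functional_calculus_normal_operators:thm:functional_calculus}). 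The paper's route is more hands-on and stays entirely within the product-set machinery already set up in Appendix~\ref{appendix:functional_calculus_normal_operators}, so it needs no extra input. Either works.
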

	\begin{proof}
		\begin{enumerate}[(1)]
			\item Since product sets are a neighborhood basis of the Borel $\sigma$-algebra of $\C$, I may assume without loss of generality that $\Omega = \Omega_{\Re} \times \Omega_{\Im}$ is the product of two Borel sets of $\R$. For such sets, I can verify the claim directly using the cartesian decomposition $H^{\ddagger} = H_{\Re} - \ii H_{\Im}$, 
			\begin{align*}
				1_{\Omega}(H^{\ddagger}) &= 1_{\Omega_{\Re}}(H_{\Re}) \; 1_{\Omega_{\Im}}(-H_{\Im}) 
				\\
				&= 1_{\Omega_{\Re}}(H_{\Re}) \; 1_{-\Omega_{\Im}}(H_{\Im}) 
				\\
				&= 1_{\overline{\Omega}}(H)
				. 
			\end{align*}
			\item Since $H$ is diagonalizable, then by Lemma~\ref{appendix:diagonalizable_operators:lem:useful_facts}~(2) so is $H^{\dagger} = W \, H^{\ddagger} \, W^{-1}$. Here I have introduced the shorthand $W = G^{\dagger} \, G$; observe that also $W \in \mathcal{B}(\Hil)^{-1}$ is a similarity transform. Thanks to Lemma~\ref{appendix:diagonalizable_operators:lem:useful_facts}~(3) real and imaginary part operators of $H^{\dagger}$ and $W \, H^{\dagger} \, W^{-1}$ are related by the similarity transform $W$. Consequently, $W$ relates also their spectral projections, and combined with (1), I obtain the claim. 
			\item The proof is identical to (2). 
			\item The spectrum $\sigma(H)$ coincides with the support of the projection-valued measure. Therefore, the claim follows from directly (1). Moreover, given that 
			\begin{align*}
				H^{\ddagger} - z &= W^{-1} \, \bigl ( H^{\dagger} - z \bigr ) \, W 
			\end{align*}
			is invertible exactly when $H^{\dagger} - z$ is, the spectra $\sigma(H^{\dagger}) = \sigma(H^{\ddagger})$ of the two adjoints agree. 
		\end{enumerate}
	\end{proof}
	%

	\section{Inner and outer continuity of spectra} 
	\label{appendix:continuity_spectra}
	Suppose the not necessarily hermitian operator $H(\lambda)$ continuously depends on a parameter $\lambda$ and $\sigma \bigl ( H(\lambda) \bigr ) \subset \C$ is its $\lambda$-dependent spectrum. 
	
	Generally, \emph{spectrum may not suddenly appear out of nowhere}, at least if the perturbation is weak enough. That is, if I take any compact set $K \subset \C$ that lies entirely inside a spectral gap of $\sigma \bigl ( H(\lambda_0) \bigr )$ at $\lambda = \lambda_0$, \ie $\sigma \bigl ( H(\lambda_0) \bigr ) \cap K = \emptyset$, then there exists an interval around $\lambda_0$ such that 
	\begin{align*}
		\sigma \bigl ( H(\lambda) \bigr ) \cap K = \emptyset 
	\end{align*}
	holds on $(\lambda_0 - \delta , \lambda_0 + \delta )$. This is referred to as \emph{outer} or \emph{upper continuity.} 
	
	Not all perturbations of operators have \emph{inner} or \emph{lower continuous} spectrum $\sigma \bigl ( H(\lambda) \bigr )$, though. Intuitively speaking, \emph{spectrum} of operators \emph{may not suddenly disappear} when perturbed. I call $\sigma \bigl ( H(\lambda) \bigr )$ inner continuous if and only if for any open set $O \subseteq \C$ so that at $\lambda = \lambda_0$ the intersection $\sigma \bigl ( H(\lambda_0) \bigr ) \cap O \neq \emptyset$ is non-empty, there exists an interval $(\lambda_0 - \delta , \lambda_0 + \delta)$ around $\lambda_0$ so that 
	\begin{align*}
		\sigma \bigl ( H(\lambda) \bigr ) \cap O \neq \emptyset
	\end{align*}
	remains non-empty in that interval. 
	
	As a general fact, the spectrum of arbitrary perturbations of operators are only outer continuous (\cf \cite[Theorem~3.1]{Kato:perturbation_theory:1995}). However, the spectra need \emph{not} be inner continuous; Kato gives an explicit counterexample on \cite[p.~210]{Kato:perturbation_theory:1995}. 
	
	Perturbations in the space of normal operators \cite[Proposition~1]{Bellissard:Lipschitz_continuity_spectra_Hofstadter:1994} lead to spectra that are known to be inner \emph{and} outer continuous. It stands to reason that this extends to \emph{all diagonalizable} operators as well. 
	\begin{conjecture}\label{appendix:continuity_spectra:thm:continuity_spectra}
		Let $[0,1] \ni \lambda \mapsto H(\lambda)$ be a continuous path in the set of \emph{diagonalizable} operators. Then the spectrum $\sigma \bigl ( H(\lambda) \bigr )$ is inner and outer semicontinuous in $\lambda$. 
	\end{conjecture}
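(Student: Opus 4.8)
\medskip

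\noindent
\textbf{Proof proposal.} I would split the statement into its two halves. Outer semicontinuity requires no diagonalizability: a norm-continuous path $\lambda \mapsto H(\lambda)$ of bounded operators always has outer semicontinuous spectrum (\cite[Chapter~IV, \S3, Theorem~3.1]{Kato:perturbation_theory:1995}), so the whole point is inner semicontinuity — ruling out the ``sudden disappearance of spectrum'' that Kato's perturbed shift exhibits. The natural first move is to reduce to the normal case through the characterization of diagonalizability in Theorem~\ref{diagonalizable_operators:thm:characterizations_diagonalizable_operators}: writing $H(\lambda) = G(\lambda)^{-1} \, N(\lambda) \, G(\lambda)$ with $N(\lambda)$ normal, one has $\sigma \bigl ( H(\lambda) \bigr ) = \sigma \bigl ( N(\lambda) \bigr )$ since the spectrum is a similarity invariant, so it would suffice to track $\sigma \bigl ( N(\lambda) \bigr )$. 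Were the normal representatives $N(\lambda)$ choosable norm-continuously near the point of interest, inner semicontinuity would follow at once from the known statement for normal operators (\cite[Proposition~1]{Bellissard:Lipschitz_continuity_spectra_Hofstadter:1994}).

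That continuous choice need not exist, however. The transform $G(\lambda)$ is badly non-unique, and the condition number $\kappa(\lambda) = \snorm{G(\lambda)} \, \snorm{G(\lambda)^{-1}}$ of a diagonalizing similarity can diverge as $\lambda \to \lambda_0$ even when $H(\lambda_0)$ is still diagonalizable — already for matrices, near a coalescence of eigenvalues. The workable criterion is the resolvent bound
\begin{align*}
	\frac{1}{\kappa(\lambda) \, \dist \bigl ( z , \sigma(H(\lambda)) \bigr )}
	\; \leq \; \bnorm{\bigl ( H(\lambda) - z \bigr )^{-1}}
	\; \leq \; \frac{\kappa(\lambda)}{\dist \bigl ( z , \sigma(H(\lambda)) \bigr )}
	,
\end{align*}
valid for any diagonalizable $H(\lambda)$: inner semicontinuity at $\lambda_0$ is equivalent to the demand that for $z_0 \in \sigma \bigl ( H(\lambda_0) \bigr )$ the resolvent $\bnorm{( H(\lambda) - z_0 )^{-1}}$ blow up whenever $z_0$ escapes $\sigma \bigl ( H(\lambda) \bigr )$ — otherwise $H(\lambda_0) - z_0$ would be invertible. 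When $\kappa(\lambda)$ stays locally bounded the left inequality closes this immediately; the content of the conjecture is the case $\kappa(\lambda) \to \infty$. Here the plan is to localize. Around an \emph{isolated} $z_0 \in \sigma \bigl ( H(\lambda_0) \bigr )$ one runs a Riesz-projection contour argument: if $\sigma \bigl ( H(\lambda) \bigr )$ avoided a small disc about $z_0$ for $\lambda \to \lambda_0$, the corresponding Riesz projections would vanish, and norm-convergence of the resolvent along the separating circle would force the nonzero spectral projection $1_{\{z_0\}} \bigl ( N(\lambda_0) \bigr )$ to vanish as well. A point in the \emph{continuous} spectrum is instead approached through the parts of $H(\lambda_0)$ with spectrum in small discs, described by the projection-valued measure of the normal representative $N(\lambda_0)$ (Theorem~\ref{appendix:functional_calculus_normal_operators:thm:functional_calculus}, Lemma~\ref{appendix:diagonalizable_operators:lem:useful_facts}).

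The main obstacle, as I see it, is precisely this continuous-spectrum case: with no contour separating $z_0$ from the rest of $\sigma \bigl ( H(\lambda_0) \bigr )$, the Riesz argument must be replaced by a limiting argument coupling the norm-convergence $H(\lambda) \to H(\lambda_0)$ to the — possibly degenerate — behaviour of the $G(\lambda)$. One route is to prove that for each $M \geq 1$ the set $\mathcal{D}_M$ of operators admitting a diagonalizing similarity with $\kappa \leq M$ is norm-closed (a compactness argument for the $G(\lambda)$, the delicate point being that invertibility is not preserved under weak limits, so one likely has to argue with the weights $W(\lambda) = G(\lambda)^{\dagger} G(\lambda)$ and strong limits), then invoke a Baire-category argument on $[0,1] = \bigcup_{M \in \N} \{ \lambda : H(\lambda) \in \mathcal{D}_M \}$ to get local uniform boundedness of $\kappa$ off a closed nowhere-dense exceptional set, and finally dispose of that set with the projection-valued-measure estimate above. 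Establishing the closedness of $\mathcal{D}_M$ and controlling the exceptional set are where the genuine work lies; this is what is deferred to \cite{Lein_Lenz:diagonalizable_operators:2020}.
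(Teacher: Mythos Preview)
There is nothing to compare against: the paper states this as a \emph{conjecture}, not a theorem, and gives no proof. The appendix says only ``I will investigate this point in a future work'' and defers everything to \cite{Lein_Lenz:diagonalizable_operators:2020}. So your proposal is not an alternative to the paper's argument --- it is a sketch of how one might attack an open problem the paper explicitly leaves open.

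As a strategy your outline is sound in the parts it actually carries out. Outer semicontinuity is indeed free from Kato; the Bauer--Fike-type resolvent bound with the condition number $\kappa(\lambda)$ is correct and immediately yields inner semicontinuity whenever $\kappa$ is locally bounded; and the Riesz-projection argument handles isolated spectral points without any diagonalizability hypothesis at all (norm-convergence of the resolvent on a separating contour forces the limiting Riesz projection to be nonzero). You have correctly isolated the genuine difficulty: non-isolated spectrum together with $\kappa(\lambda) \to \infty$.

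Where your proposal stops being an argument and becomes a wish list is the Baire-category step. Even granting that the sets $\mathcal{D}_M$ are norm-closed --- which is itself nontrivial, as you note --- Baire gives you only a dense open set of $\lambda$ on which $\kappa$ is locally bounded; the residual set is meager, not finite or discrete, and inner semicontinuity is a \emph{pointwise} statement that must hold at every $\lambda_0$, including those in the meager set. You offer no mechanism for ``disposing of that set'' beyond a vague appeal to the projection-valued measure of $N(\lambda_0)$, and it is not clear how information about the spectral measure at $\lambda_0$ alone controls the spectra of the nearby $H(\lambda)$ when $\kappa(\lambda)$ is blowing up. This is exactly the gap the paper declines to fill, and your proposal does not fill it either --- which you honestly acknowledge in your final sentence.
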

	I will investigate this point in a future work. 
	
	One last note on Kato's counterexample: he constructs a perturbation of the shift operator on $\Z$ whose spectra 
	\begin{align*}
		\sigma \bigl ( H(\lambda) \bigr ) = 
		\begin{cases}
			\Sone & \lambda \neq 0 \\
			\mathbb{D}^1 & \lambda = 0 \\
		\end{cases}
	\end{align*}
	are either the circle line $\Sone$ or the closed unit disc $\mathbb{D}^1$ in the complex plane. At first glance, it would seem that the spectrum is not outer continuous at $\lambda = 0$ (even though it is). Indeed, for any $\lambda_0 > 0$ the scaled unit disc $\eps \mathbb{D}^1$, $0 < \eps < 1$, lies inside $\Sone$, \ie $\sigma \bigl ( H(\lambda_0) \bigr ) \cap \eps \mathbb{D}^1 = \emptyset$. And this remains true for all $\lambda$ from the open neighborhood $(0,\lambda_0)$. Of course, this argument also applies verbatim when $\lambda_0 < 0$. At the point where the spectrum changes, $\lambda = 0$, the initial \emph{assumption} is violated, 
	\begin{align*}
		\sigma \bigl ( H(0) \bigr ) \cap \eps \mathbb{D}^1 = \eps \mathbb{D}^1 \neq \emptyset
		, 
	\end{align*}
	which resolves the apparent contradiction. 
	
	However, the spectrum is \emph{not} inner continuous at $\lambda = 0$: all spectrum in the interior of the unit disc is unstable, any slight perturbation ($\lambda \neq 0$) will make it disappear. 
\end{appendix}
%


\bibliographystyle{apsrev4-2}
\bibliography{/Users/max/Library/texmf/tex/latex/max/bibliography}

\end{document}